\documentclass[a4paper,twocolumn,11pt,accepted=2025-11-05]{quantumarticle}
\pdfoutput=1
\usepackage[utf8]{inputenc}
\usepackage[english]{babel}
\usepackage[T1]{fontenc}
\usepackage{amsmath, amsthm, physics}

\usepackage{cite}
\usepackage{tikz, array}
\usepackage{graphicx}
\usepackage{bbold}

% --- Hyperref and related packages ---
\definecolor{quantumPink}{HTML}{6e1a9e} 
\usepackage{hyperref}
\hypersetup{
    colorlinks=true,
    linkcolor=quantumPink,
    citecolor=quantumPink,
    urlcolor=quantumPink
}

% cleveref MUST be loaded last
\usepackage[capitalise,nameinlink]{cleveref}
% --- End of hyperref setup ---

\usepackage{quantikz}
\usepackage{blkarray}
\usepackage{booktabs}
\usepackage{adjustbox}
\usepackage{url}

\tikzset{
  semicircle gate/.style={
    draw,
    shape=semicircle,
    shape border rotate=180,
    minimum height=1.5em,
    inner sep=2pt,
    anchor=center
  }
}

\bibliographystyle{unsrt}

\theoremstyle{plain}
\newtheorem{proposition}{Proposition}
\newtheorem{lemma}{Lemma}
\newtheorem{theorem}{Theorem}
\newtheorem{claim}{Claim}
\newtheorem{remark}{Remark}
\newtheorem{definition}{Definition}
\usetikzlibrary{decorations.pathreplacing}
\tikzset{
mybrace/.style={decorate,decoration={brace,aspect=#1}}
}

\begin{document}

\title{The stellar decomposition of Gaussian quantum states}

\author{Arsalan Motamedi}
\affiliation{Xanadu}

\author{Yuan Yao}
\affiliation{Xanadu}

\author{Kasper Nielsen}
\affiliation{Xanadu}

\author{Ulysse Chabaud}
\affiliation{DIENS, \'Ecole Normale Sup\'erieure, PSL University, CNRS, INRIA, 45 rue d’Ulm, Paris, 75005, France}

\author{J. Eli Bourassa}
\affiliation{Xanadu}

\author{Rafael N. Alexander}
\affiliation{Xanadu}

\author{Filippo M. Miatto}
\email{filippo@xanadu.ai}
\affiliation{Xanadu}

\begin{abstract}
We introduce the \textit{stellar decomposition}, a novel method for characterizing non-Gaussian states produced by photon-counting measurements on Gaussian states. Given an $(m+n)$-mode Gaussian state $G$, we express it as an $(m+n)$-mode ``Gaussian core state'' $G_{\mathrm{core}}$ followed by an $m$-mode Gaussian transformation $T$ that only acts on the first $m$ modes. The defining property of the Gaussian core state $G_{\mathrm{core}}$ is that measuring the last $n$ of its modes in the photon-number basis leaves the first $m$ modes on a finite Fock support, i.e.~a core state. Since $T$ is measurement-independent and $G_{\mathrm{core}}$ has an exact and finite Fock representation, this decomposition exactly describes all non-Gaussian states obtainable by projecting $n$ modes of $G$ onto the Fock basis. For pure states we prove that a physical pair $(G_{\mathrm{core}}, T)$ always exists with $G_{\mathrm{core}}$ pure and $T$ unitary. For mixed states, we establish necessary and sufficient conditions for $(G_{\mathrm{core}}, T)$ to be a Gaussian mixed state and a Gaussian channel. We also develop a semidefinite program to extract the `largest' possible Gaussian channel when these conditions fail. Finally, we present a formal stellar decomposition for generic operators, which is useful in simulations where the only requirement is that the two parts contract back to the original operator. The stellar decomposition leads to practical bounds on achievable state quality in photonic circuits and for GKP state generation in particular. Our results are based on a new characterization of Gaussian completely positive maps in the Bargmann picture, which may be of independent interest.
\end{abstract}

\keywords{quantum optics, quantum information}
\maketitle

\section{Introduction}

Quantum information processing promises a radical paradigm shift compared with its classical counterpart \cite{dense-coding,Shor,nielsen2010quantum}. This second quantum revolution \cite{dowling2003quantum} is based on several physical platforms, among which photonics stands out for its potential for deterministic entanglement generation, room-temperature operation, high clock speeds, and long coherence times \cite{yokoyama2013ultra,takeda2019toward, Arrazola2021, aghaee2025scaling,psiq2025}.

Quantum states of light---and by extension, quantum operations---are usually separated into two broad classes, Gaussian \cite{weedbrook2012gaussian} and non-Gaussian \cite{walschaers2021non}, which share mathematical similarities with the Clifford/non-Clifford dichotomy in qubit systems \cite{gottesman1999demonstrating}. Gaussian states and operations are generated by quadratic Hamiltonians in the bosonic canonical operators, and are often regarded as easier to handle from a theoretical point of view. However, they suffer from severe limitations for quantum information processing \cite{wenger2003maximal,garcia2004proposal,adesso2009optimal,nogo1,nogo2,nogo3,Niset2009}: in particular, Gaussian computations can be simulated efficiently by classical computers \cite{bartlett2002efficient}. This has led to the understanding of non-Gaussian states and operations as resources for many quantum information processing tasks \cite{Albarelli2018,quntao18,Takagi2018}, and to the introduction of associated resource measures \cite{walschaers2021non}.
One such measure, the stellar rank \cite{chabaud2020stellar,chabaud2022holomorphic}, provides a natural way of bookkeeping the ``degree of non-Gaussianity" of quantum states in situations where one does not have access to Hamiltonians of degree higher than quadratic, or couplings between bosonic operators and those of two-level systems such as quantum dots, nor measurements of observables that are beyond quadratic. Indeed, the stellar rank of a non-Gaussian quantum state is non-increasing under Gaussian operations and lower bounds the number of photonic non-Gaussian operations (photon-addition or photon-subtraction) necessary to engineer a non-Gaussian state from a Gaussian one \cite{chabaud2020stellar,chabaud2022holomorphic}.

From an experimental standpoint, Gaussian states and operations are also simpler to engineer in photonic platforms than non-Gaussian ones \cite{lvovsky2020production}:
currently, the standard technological toolkit in quantum photonics consists of the ability to generate deterministic Gaussian states, to implement Gaussian transformations deterministically, and to perform photon-number-resolving (PNR) measurements.
Using this toolkit, a practical method to produce non-Gaussian resource states is to start with a multimode Gaussian state and then measure a subset of its modes in the photon number basis (Fock basis) using PNR detectors. This projects the state of the leftover modes onto a non-Gaussian quantum state based on the heralded pattern, with its stellar rank depending on the number of detected photons \cite{gagatsos2019efficient}.
There is a rich history of heralding non-Gaussian states from Gaussian states \cite{lvovsky2020production,walschaers2021non}, including the generation of single-photon \cite{hong1986experimental,lvovsky2001quantum}, two-photon \cite{ourjoumtsev2006quantum,zavatta2008toward} and three-photon states \cite{cooper2013experimental}, photon-added coherent states \cite{zavatta2004quantum} and photon-added thermal states \cite{zavatta2007experimental}, displaced single-photon states \cite{lvovsky2002synthesis}, finite Fock superpositions \cite{bimbard2010quantum,yukawa2013generating}, photon-subtracted squeezed states \cite{wenger2004non,wakui2007photon,dufour2017photon,ra2020non}, cat states \cite{ourjoumtsev2006generating,neergaard2006generation,takahashi2008generation,gerrits2010generation,huang2015optical,asavanant2017generation,serikawa2018generation,chen2024generation}, multimode cat states \cite{ourjoumtsev2009preparation,morin2014remote,biagi2020entangling}, better cat states \cite{etesse2015experimental,sychev2017enlargement} and Gottesman--Kitaev--Preskill (GKP) states \cite{konno2024logical, Larsen2025} using Gaussian breeding protocols on top of non-Gaussian heralding \cite{brd-Vasconcelos2010, brd-Weigand2018, brd-Eaton2019, brd-Takase2024, aghaee2025scaling}.

Such quantum states are crucial building blocks for existing and future photonic devices \cite{bourassa2021blueprint, larsen2021deterministic, brd-Takase2024, aghaee2025scaling}. However, the intrinsic complexity of heralded non-Gaussian states generated by large-scale devices makes tracking such states challenging, even though it is an essential task for device characterization, architecture design, and benchmarking.
In this context, state decompositions \cite{williamson1936algebraic,giedke2003entanglement,botero2003modewise,houde2024matrix} are particularly useful, as they allow complex non-Gaussian states to be expressed in terms of simpler, well-understood components.

In this work, we present a new mathematical decomposition for Gaussian states, which we call the stellar decomposition due to its link with the stellar rank, that dramatically simplifies the analysis of non-Gaussian state generation protocols based on heralding photons from multimode Gaussian states. Specifically, we show that a multimode Gaussian state $G$ on $m+n$ modes can be expressed as a ``Gaussian core state'' $G_{\mathrm{core}}$ on $m+n$ modes, followed by a Gaussian channel $T$ on $m$ modes. $G_{\mathrm{core}}$ has the special property that if one counts $N$ photons across the $n$ heralding modes, then the cutoff in the Fock basis to represent the state of the remaining $m$ modes before $T$ will be at most $2N$. In general, without the stellar decomposition, the Fock cutoff required to express the output state to high numerical accuracy is much greater than $2N$. Moreover, we find the channel $T$ is independent of which photon pattern is heralded, which allows us to obtain general bounds on the levels of noise in the multimode Gaussian state $G$ that can be tolerated in non-Gaussian state preparation protocols. Our results are based on a new characterization of Gaussian CPTP maps in the Bargmann picture which may be of independent interest.

The structure of the paper is as follows: In the first section, we introduce the stellar decomposition for multimode Gaussian states, first in the pure state case, then for mixed states. We investigate properties of the decomposition, including the conditions under which the pieces of the decomposition, $G_{\mathrm{core}}$ and $T$, correspond to a physical state and channel, finding it to always be the case for pure states, but not for mixed states in general.
For convenience, we have listed our main contributions in \protect\cref{tab:summary}. 

\begin{table*}[t]
\centering
\begin{tabular}{lll}
\toprule
\textbf{Case} & \textbf{Result} & \textbf{Application} \\
\midrule

Pure States 
& \parbox[t]{6cm}{\protect\cref{prop:pure-state-case}: $|\psi\rangle = (U\otimes\mathbb{1})|\psi_\mathrm{core}\rangle$} 
& \parbox[t]{6cm}{Exact heralded state simulation\protect\phantom{\qquad} (\protect\cref{sec:heralded-simulation})} \\

Mixed States 
& \parbox[t]{6cm}{\protect\cref{prop:mixed-state-case}: $\rho = (\Phi\otimes \mathcal{I})(\rho_\mathrm{core})$}
& \parbox[t]{6cm}{Bounds on GKP generation \protect\phantom{\qquad\quad} (\protect\cref{sec:GKP-bound})} \\

Generic Operators 
& \parbox[t]{7cm}{\protect\cref{prop:formal-stellar-decomp}: $\mathrm{vec(G)} = (T\otimes\mathbb{1})\mathrm{vec}(S_\mathrm{core})$}
& \parbox[t]{5.5cm}{Exact noisy heralded state simulation (\protect\cref{sec:heralded-simulation})} \\

\bottomrule
\end{tabular}
\caption{Summary of the decomposition results. For each of our main decomposition results, we have provided their prominent application discussed in the manuscript. We highlight that the states, operators, and channels in this table are all Gaussian. The Gaussian core states (or operators) written in the second column have special Fock representations with special properties, allowing us to perform exact heralded state simulations, as discussed in \protect\cref{sec:heralded-simulation}. On top of this, our mixed state decomposition results (\protect\cref{prop:mixed-state-case} and \protect\cref{prop:m>=n}) provide us with useful bounds on the quality of any candidate GKP state that can be generated by a wide range of state preparation protocols.}
\label{tab:summary}
\end{table*}

We then explore two applications of the decomposition: simulation improvements, and bounding the quality of GKP states produced by photon heralding on multimode Gaussian states. For the former, the physicality conditions of the decomposition are irrelevant, while for the latter, if the Gaussian core state $G_{\mathrm{core}}$ and channel $T$ are physical then we can understand $T$ as adding a fundamental level of noise to any state produced by the protocol. Since we cannot always guarantee physicality conditions, we provide an additional technique to `factor out' as much of the channel $T$ as possible, while keeping the remaining state $G'$ physical, with the caveat that $G'$ does not enjoy the finite Fock cutoff benefits of Gaussian core states.

\section{Stellar decomposition}

We highlight that the stellar decomposition applies to bipartite Gaussian quantum systems. Let the subsystems denoted by $M$ and $N$ contain $m$ and $n$ modes, respectively. We often use cardinality notation to denote the size of the subsystem, i.e.~we have $|M|=m$ and $|N| = n$.

The Gaussian stellar decomposition techniques in this work make use of the Bargmann representation \cite{vourdas2006analytic}, also known as the stellar representation \cite{chabaud2020stellar}. In particular, we associate to all Gaussian states and operators a Bargmann function in the form of the exponential of a quadratic polynomial:
\begin{align}\label{eq:exp_quad_poly}
F(z) = c\exp\left(\frac12 z^{\mathrm{T}}Az + z^{\mathrm{T}}b\right)    \quad z\in\mathbb{C}^k,
\end{align}
and therefore we can focus on the finite parameters $A$, $b$, $c$, which we refer to as the ``Abc parametrization''. See \protect\cref{app:bargmann} for a detailed discussion of this formalism and the technical conventions we use. Certain features of Gaussian states are naturally represented in the Bargmann formalism. For instance, determining if a Gaussian state is pure can be naturally decided in this representation, as pure states have zero off-diagonal blocks in their A matrix, written in type-wise ordering (see \protect\cref{app:ordering-conventions} for ordering conventions). However, purity is mysteriously encoded in the covariance matrix of a Gaussian state, as one has to inspect the symplectic eigenvalues to decide if a given state is pure. Another example is projecting onto vacuum, which is equivalent to evaluating the Bargmann function at zero, and therefore it is equivalent to deleting the rows and columns of the $A$ and $b$ parameters corresponding to the variables evaluated at zero. The same operation is not as straightforward when using the covariance matrix and the vector of means. In this manuscript, we will see how a class of states which we call ``Gaussian core states'' can be naturally represented in the Bargmann formalism as well.  

A generic Gaussian completely positive (CP) map has an Abc parametrization $(A_\Phi,b_\Phi,c_\Phi)$ with the following block form
\begin{align}\label{eq:CP-map}
A_\Phi = \begin{bmatrix}
\Lambda^\ast_\Phi & \Gamma_\Phi\\
\Gamma_\Phi^\ast & \Lambda_\Phi
\end{bmatrix},
\quad
b_\Phi = \begin{bmatrix}
\beta^\ast_\Phi\\
\beta_\Phi
\end{bmatrix},
\end{align}
in the type-wise ordering (see \protect\cref{app:physicality} for details).
We obtain the following characterization, which we use as a basis for the proofs of our results:

\begin{lemma}\label{prop:channels-in-bargmann}
A Gaussian map, represented by a Gaussian Bargmann function with Abc parametrization as in \eqref{eq:CP-map}, is completely positive if and only if
\begin{align}
\Gamma_\Phi\ge0 \,\text{ and }\, c_\Phi\ge0.
\end{align}
\end{lemma}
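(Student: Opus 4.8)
The plan is to characterize complete positivity through the Choi--Jamiolkowski isomorphism and then reduce the positivity of the resulting Gaussian operator to a statement about reproducing kernels in Bargmann--Fock space. First I would recall that a Hermiticity-preserving map $\Phi$ is completely positive if and only if its Choi operator $J(\Phi)$ is positive semidefinite, and observe that the block structure assumed in \eqref{eq:CP-map} is exactly the Hermiticity-preserving condition: the paired blocks $(\Lambda_\Phi^\ast,\Lambda_\Phi)$ and $(\beta_\Phi^\ast,\beta_\Phi)$ together with the off-diagonal $\Gamma_\Phi$ are precisely the Abc data, in the type-wise ordering, of a Hermitian Gaussian operator $J(\Phi)$ with double Bargmann kernel $F(x,y)=c_\Phi\exp(\tfrac12 x^{\mathrm T}\Lambda_\Phi^\ast x + x^{\mathrm T}\Gamma_\Phi y + \tfrac12 y^{\mathrm T}\Lambda_\Phi y + x^{\mathrm T}\beta_\Phi^\ast + y^{\mathrm T}\beta_\Phi)$ (the normalization relating these data to $J(\Phi)$ being the bookkeeping deferred to \cref{app:physicality}). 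Thus the whole problem collapses to deciding when such a Hermitian Gaussian operator is positive semidefinite.

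The key reduction step is to translate operator positivity into positivity of an integral kernel. Via the Bargmann transform, an operator $O$ is positive semidefinite if and only if the matrix of its Fock elements $\langle m|O|n\rangle$ is positive semidefinite, equivalently if and only if its double Bargmann function defines a positive-semidefinite reproducing kernel, i.e.\ $\iint \overline{g(x)}\,F(\overline{x},y)\,g(y)\,d\mu(x)\,d\mu(y)\ge 0$ for all test functions $g$ in Bargmann--Fock space. I would then \emph{gauge away} the diagonal blocks and the linear term: the substitution $h(y)=g(y)\exp(\tfrac12 y^{\mathrm T}\Lambda_\Phi y + y^{\mathrm T}\beta_\Phi)$ absorbs $\Lambda_\Phi,\beta_\Phi$ on the ket side and, through $\overline{h(x)}$, their conjugates $\Lambda_\Phi^\ast,\beta_\Phi^\ast$ on the bra side simultaneously. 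Equivalently, this is a conjugation $O\mapsto M^\dagger O M$ by the invertible Gaussian operator $M$ implementing $\exp(\tfrac12 \hat a^{\mathrm T}\Lambda_\Phi\hat a + \beta_\Phi^{\mathrm T}\hat a)$, which preserves positivity. What remains is the stripped kernel $c_\Phi\exp(x^{\mathrm T}\Gamma_\Phi y)$, and this is precisely why only $\Gamma_\Phi$ and $c_\Phi$ can survive in the final condition.

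To finish the sufficiency direction I would expand $\exp(x^{\mathrm T}\Gamma_\Phi y)=\sum_{k\ge0}(x^{\mathrm T}\Gamma_\Phi y)^k/k!$ and note that $x^{\mathrm T}\Gamma_\Phi y$ is itself a positive-semidefinite kernel whenever $\Gamma_\Phi\ge0$; since Hadamard (entrywise) products and sums of positive-semidefinite kernels remain positive semidefinite by the Schur product theorem, the exponential is a positive-semidefinite kernel, and the prefactor $c_\Phi\ge0$ keeps it so. For necessity I would read off the low-order Fock elements of the stripped operator: its vacuum element gives $c_\Phi=\langle 0|O|0\rangle\ge0$, and its single-excitation block equals $c_\Phi\Gamma_\Phi$, so positivity of $O$ forces $c_\Phi\Gamma_\Phi\ge0$, hence $\Gamma_\Phi\ge0$ when $c_\Phi>0$; the degenerate case $c_\Phi=0$ is handled separately, since a vanishing vacuum element of a positive operator forces the whole corresponding row and column to vanish.

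I expect the main obstacle to be analytic rather than algebraic: justifying the gauge conjugation by $M$, because $M=\exp(\tfrac12\hat a^{\mathrm T}\Lambda_\Phi\hat a+\cdots)$ is generally unbounded and need not be globally invertible on Fock space when $\Lambda_\Phi$ has large eigenvalues. The clean way around this is to carry out the whole argument on the dense domain of finite Fock support (polynomials in Bargmann--Fock space), where $M$ is formally invertible and every Gaussian integral converges absolutely, and then invoke the fact that positive semidefiniteness is a closed condition to extend the conclusion. Getting the precise conventions relating the Abc data of $\Phi$ to those of $J(\Phi)$ exactly right is the remaining point that must be pinned down, but it affects only constants and not the structural conditions $\Gamma_\Phi\ge0$ and $c_\Phi\ge0$.
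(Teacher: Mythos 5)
Your proposal follows the same skeleton as the paper's proof: reduce complete positivity to positivity of the Choi operator, note that in type-wise ordering the Choi operator is a Hermitian Gaussian operator carrying literally the same Abc data as $\Phi$, obtain necessity from the vacuum and single-excitation Fock elements after eliminating the linear term, and obtain sufficiency by splitting off the Gaussian multiplication operator carrying $\Lambda_\Phi,\beta_\Phi$ and certifying positivity of the remaining core $c_\Phi\exp(\bar{x}^{\mathrm{T}}\Gamma_\Phi y)$. The only methodological difference is in that last certificate: the paper (\protect\cref{lem:positivity}) exhibits $G=T^\dagger T$ for an explicit Gaussian $T$ built from exactly the data $\Lambda$, $\beta$, $\sqrt{\Gamma}$, $\sqrt{c}$, verified by the contraction formulas and \protect\cref{prop:contraction-condition}, whereas you invoke the Schur product theorem to see that $\exp(\bar{x}^{\mathrm{T}}\Gamma_\Phi y)$ is a positive-semidefinite kernel. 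Since the paper's $T$ is, up to ordering bookkeeping, your gauge factor composed with the operator whose kernel is $\exp(x^{\mathrm{T}}\sqrt{\Gamma}\,y)$, the two sufficiency arguments certify the same factorization; yours trades the explicit Gaussian integral for a classical kernel-theoretic fact, which is a legitimate and arguably cleaner alternative.

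However, the analytic patch you propose for your self-identified main obstacle does not work as stated, and this is a genuine (though repairable) gap. It is false that ``every Gaussian integral converges absolutely'' on the domain of finite Fock support. Consider the Gaussian CP map $\rho\mapsto e^{\lambda\hat{a}^{\dagger 2}}\rho\, e^{\lambda^{\ast}\hat{a}^{2}}$ with $|\lambda|>1$, whose Choi kernel is
\begin{align}
F(x,y)=\exp\left(x_1x_2+\lambda^{\ast}x_2^2+y_1y_2+\lambda y_2^2\right),
\end{align}
so that $\Gamma_\Phi=0$ and $c_\Phi=1$ (the map is indeed CP, consistent with the lemma), yet $\Lambda_\Phi$ has operator norm larger than one and the double Bargmann integral of this kernel against even the constant (vacuum) test function diverges: the finite number $\langle 0|J_\Phi|0\rangle=1$ is simply not computed by that integral. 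The clean repair is to dispense with integrals and operator conjugation altogether: define kernel positivity pointwise, $\sum_{i,j}\bar{c}_ic_jF(\bar{x}_i,x_j)\geq0$ over finite sets of points, which is equivalent to positivity of the operator on the dense span of coherent states (and, by differentiating at the origin, to positive semidefiniteness of the Fock matrix, the notion the paper works with). In that formulation your gauge step becomes the manifest congruence $F(\bar{x},y)=\overline{e(x)}\,\bigl[c_\Phi\exp(\bar{x}^{\mathrm{T}}\Gamma_\Phi y)\bigr]\,e(y)$ with the nowhere-vanishing scalar $e(y)=\exp(\tfrac12 y^{\mathrm{T}}\Lambda_\Phi y+y^{\mathrm{T}}\beta_\Phi)$, which preserves pointwise positivity in both directions with no convergence question, and then both your Schur-product sufficiency argument and your Fock-element necessity argument go through verbatim.
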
 
We refer to \protect\cref{app:physicality} for a proof, where we also identify conditions on the Bargmann representation of a Gaussian mixed state that correspond to its positivity (i.e., positivity and finite trace properties). Note that the reason different ordering conventions appear in the above proposition is that capturing complete positivity is more natural in type-wise ordering, while the trace-preserving property appears more naturally in the output-input ordering.

Throughout the manuscript we are using $\mathbb N=\{0,1,\cdots\}$ to denote non-negative integers. Moreover, given a tuple of $m$ integers $K = (k_1,\cdots,k_m)\in \mathbb N^m$, we use 
\begin{align}\label{eq:I_K}
I_K := \{(i_1,i_2,\cdots,i_m)|\,\forall j: 0\leq i_j \leq k_j \}
\end{align}
to indicate the $m$-dimensional hypercube of points in $\mathbb{N}^m$ with $K$ as the farthest corner from the origin.
We also recall that $\norm{K}_1 = \sum_{j} k_j$. For $n\in\mathbb N$, we define
the set of $m$-tuples with sum at most $n$ as
\begin{align}\label{eq:J-notation}
J^{m}_n:=\{(i_1,\cdots,i_m)|\sum_{j} i_j\leq n\}\subset \mathbb N^m.
\end{align}
For a tuple of non-negative integers, say $K = (k_1,\cdots,k_n)\in\mathbb N^n$, we use the following convention for multivariate factorial
\begin{align}
K! := k_1! k_2! \cdots k_n!,
\end{align}
and the multivariate derivative notation
\begin{align}
\partial^K_z := \partial^{k_1}_{z_1} \cdots \partial^{k_n}_{z_n},
\end{align}
where $z = (z_1,\cdots,z_n)$ is the tuple of derived variables.
\subsection{Gaussian core states}

Gaussian core states are the essence of the stellar decomposition. They are related to \textit{core states} \cite{lachman2019faithful,chabaud2020stellar}, which are defined as those with finite Fock support. We define Gaussian core states as follows.

\begin{definition}\label{def:core}
A bipartite Gaussian state $\rho$ over the $MN$ partition is a \emph{Gaussian core state} if for any $k_1,k_2\in\mathbb N^{n}$, the operator
\begin{align}
(\mathbb 1_M \otimes \bra{k_1}_N) \rho (\mathbb 1_M \otimes \ket{k_2}_N)
\end{align}
has finite support in the Fock basis.
\end{definition}
The above definition readily implies that measuring the subsystem $N$ in the Fock basis generates core states on $M$. Indeed, in \protect\cref{app:variation} we prove that enforcing $k_1=k_2$ in the definition above does not change the set of Gaussian core states, and hence they can be exactly understood as the states on $MN$ whose conditional states on $M$ have finite Fock support. The two-mode squeezed vacuum is a familiar example of Gaussian core state. The following claim establishes the condition for a state to be a Gaussian core state.

\begin{proposition}
A state $\rho$ is Gaussian core, if and only if its Ab part (denoted by matrix $A_\rho$ and vector $b_\rho$), written in the mode-wise order, appears in the following block form:
\begin{align}\label{eq:core-state-block-form}
A_\rho = \begin{bmatrix}
0 & \ast\\
\ast & \ast
\end{bmatrix}, \quad b_\rho = \begin{bmatrix}
0\\
\ast
\end{bmatrix}.
\end{align}
Moreover, for a Gaussian core state $\rho$, the stellar rank of the post-selected state
\begin{align}
(\mathbb 1\otimes \bra{k})\rho(\mathbb 1\otimes \ket{k})
\end{align}
is upper bounded by $\norm{k}_1$ i.e., the total number of measured photons.
\label{prop:core-states-in-bargmann}
\end{proposition}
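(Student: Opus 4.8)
The plan is to work entirely in the Bargmann/Abc picture of \eqref{eq:exp_quad_poly} and to treat the two assertions separately: the block characterization \eqref{eq:core-state-block-form} I would obtain by analyzing the single simplest projection (onto vacuum), while the stellar-rank bound is cleanest through a Gaussian purification rather than a direct reading of the operator's Bargmann function.

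For the characterization I would first do the ``only if'' direction by specializing \cref{def:core} to $k_1=k_2=0$. Projecting the $N$ modes onto vacuum amounts to evaluating the Bargmann function at $z_N=0$, i.e.~deleting the rows and columns of $(A_\rho,b_\rho)$ indexed by $N$; in mode-wise order this leaves the $M$-block and produces an operator on $M$ with Bargmann function $c_\rho\exp(\tfrac12 z_M^{\mathrm T}A_{MM}z_M + z_M^{\mathrm T}b_M)$. Since a nonconstant exponential of a polynomial is never a polynomial, this has finite Fock support only if $A_{MM}=0$ and $b_M=0$, which is exactly \eqref{eq:core-state-block-form}. For the ``if'' direction I assume the block form, so that the entire $M$-dependence of $F_\rho$ sits in the cross term and $F_\rho = c_\rho\exp(z_M^{\mathrm T}A_{MN}z_N)\exp(\tfrac12 z_N^{\mathrm T}A_{NN}z_N + z_N^{\mathrm T}b_N)$. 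Projecting $N$ onto $\ket{k_2}\bra{k_1}$ corresponds to differentiating $\norm{k_1}_1$ times in the bra-type and $\norm{k_2}_1$ times in the ket-type $N$-variables and then setting $z_N=0$; each derivative that hits the cross factor contributes at most one power of $z_M$, whereas derivatives hitting the $z_N$-Gaussian either vanish at $z_N=0$ or are consumed by later derivatives. The result is therefore a polynomial in $z_M$ of degree at most $\norm{k_1}_1+\norm{k_2}_1$, i.e.~a state of finite Fock support, so $\rho$ is a Gaussian core state.

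For the stellar-rank bound I would deliberately avoid reading the rank off the doubled operator representation. Instead, take a Gaussian purification $\ket{\Psi}$ of $\rho$ on modes $MNE$. Projecting its $N$ modes onto $\ket{k}$ yields the pure state $\ket{\phi}=(\mathbb 1_{ME}\otimes\bra{k}_N)\ket{\Psi}$ on $ME$, whose holomorphic Bargmann function is $\partial_{z_N}^{k}[c\,e^{Q}]\big|_{z_N=0}$ for a single quadratic $Q$; by the same derivative count as above this equals a polynomial of degree at most $\norm{k}_1$ times a Gaussian, so the stellar rank satisfies $r^\star(\ket{\phi})\le\norm{k}_1$. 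Finally $(\mathbb 1_M\otimes\bra{k}_N)\rho(\mathbb 1_M\otimes\ket{k}_N)=\mathrm{Tr}_E\ket{\phi}\bra{\phi}$, and since the stellar rank is non-increasing under the partial trace (a Gaussian channel), the post-selected state has stellar rank at most $\norm{k}_1$.

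The main obstacle is obtaining the sharp constant $\norm{k}_1$ rather than $2\norm{k}_1$. Working directly with the operator Bargmann function in the $2m$ mode-wise $M$-variables mixes the bra- and ket-type derivatives and only yields degree $2\norm{k}_1$, from which extracting a mixed-state stellar rank is delicate. The purification detour resolves this by reducing everything to a single holomorphic function with $\norm{k}_1$ derivatives and one clean invocation of monotonicity; the remaining care is to justify that stellar-rank monotonicity under Gaussian channels applies to mixed states, as established in the holomorphic/stellar formalism.
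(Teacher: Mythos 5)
Your proof is correct, but it takes a genuinely different route from the paper's in two places. For the ``if'' direction the paper does not count derivatives on the Bargmann function: it runs the Fock-amplitude recursion relation \eqref{eq:recrel} on a pure core state to show $G_{j,k}=0$ whenever $\norm{j}_1>\norm{k}_1$, and then handles mixed states by vectorizing $\rho$ (in type-wise order) so that $(\mathbb 1\otimes\bra{k_1})\rho(\mathbb 1\otimes\ket{k_2})$ becomes a Fock projection of a $(2m+2n)$-variable pure-state-like Gaussian vector to which the same recursion applies; your direct Leibniz count on $c_\rho\exp(z_M^{\mathrm{T}}A_{MN}z_N)\exp(\tfrac12 z_N^{\mathrm{T}}A_{NN}z_N+z_N^{\mathrm{T}}b_N)$ reaches the same conclusion more compactly and without the vectorization detour (your ``only if'' step is essentially identical to the paper's vacuum-projection argument). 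For the stellar-rank bound the approaches diverge more: the paper reads it off the support statement it has just proven --- $\bra{\ell_1}\tilde\rho\ket{\ell_2}=0$ once $\norm{\ell_1}_1+\norm{\ell_2}_1$ exceeds $\norm{k_1}_1+\norm{k_2}_1$, so with $k_1=k_2=k$ the diagonal, and by positivity the whole operator support, is confined to Fock states in $J^{m}_{\norm{k}_1}$ --- whereas you purify, bound the rank of the pure heralded state on $ME$, and invoke monotonicity of the convex-roof stellar rank under the partial-trace Gaussian channel. Your route is sound (the monotonicity you need is established in the stellar-formalism references the paper itself cites), and it has the instructive side effect of showing that the stellar-rank bound holds for heralding on \emph{any} Gaussian state, core or not; what is special about core states is the finite Fock support, not the rank bound. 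Conversely, what the purification argument alone does not deliver is precisely that explicit finite cutoff of the post-selected operator (eigenvectors supported on $J^{m}_{\norm{k}_1}$), which is the property the paper exploits for exact simulation --- but since your first part establishes it independently, nothing is missing from your proof as a whole.
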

We emphasize that the above proposition applies to both pure and mixed states. The proof is provided in \protect\cref{app:proof-of-ket-decomposition}.
Note that postselecting the first $m$ modes of the Gaussian core state does not project the last $n$ onto a finite Fock support: this property is directional. In order to herald the bottom $n$ modes onto a finite Fock support, the bottom-right block of $A$ has to be zero too.
In what follows, we consider different scenarios for decomposing a state into a Gaussian core state followed by local operations on $M$.

\subsection{Pure states}

We begin by the stellar decomposition of a Gaussian pure state $\ket\psi$ over $MN$, illustrated in \protect\cref{fig:ket_decomp}.

\begin{theorem}\label{prop:pure-state-case}
Let $|\psi\rangle$ be a pure $(m+n)$-mode Gaussian state bipartite over $MN$. There exists an $m$-mode Gaussian unitary $U$ such that
\begin{align}
\begin{split}
    |\psi\rangle &= (U\otimes\mathbb{1})|\psi_\mathrm{core}\rangle\\
    &=(U\otimes\mathbb{1})\sum_{k\in \mathbb{N}^{n}}|\mathrm{core}_{k}\rangle|k\rangle\\
    &=(U\otimes\mathbb{1})\sum_{k\in \mathbb{N}^{n}}\left(\sum_{j\in J^m_{\norm{k}_1}}c _{j,k}|j\rangle\right)|k\rangle,
\end{split}
\end{align}
where $|\psi_\mathrm{core}\rangle$ is an $(m+n)$-mode Gaussian state such that projecting its last $n$ modes onto a Fock state $\ket k$ with $k\in\mathbb{N}^n$ leaves a state on the first $m$ modes with Fock support up to $\|k\|_1 = \sum_{i=1}^n k_i$. The Abc parametrization of $\ket{\psi_{\mathrm{core}}}$ and $U$ can be computed in time $O((n+m)^3)$.
\end{theorem}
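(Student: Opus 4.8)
The plan is to reduce the statement to a purely algebraic problem about the A matrix of the pure state, and then solve it with an explicit symplectic transformation. First I would record what the ``core'' property means for a pure state at the level of the Bargmann function. Writing the (holomorphic) Bargmann function of $|\psi\rangle$ as $\psi(z)=c\exp(\tfrac12 z^{\mathrm T}Az + z^{\mathrm T}b)$ with $z=(z_M,z_N)$ and block decomposition
\begin{align*}
A=\begin{bmatrix} A_{MM} & A_{MN}\\ A_{MN}^{\mathrm T} & A_{NN}\end{bmatrix},\qquad b=\begin{bmatrix}b_M\\ b_N\end{bmatrix},
\end{align*}
projecting the $N$ modes onto $|k\rangle$ amounts to applying $\partial_{z_N}^{k}$ and setting $z_N=0$. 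Evaluating at $z_N=0$ leaves the prefactor $\exp(\tfrac12 z_M^{\mathrm T}A_{MM}z_M + z_M^{\mathrm T}b_M)$, which is a polynomial in $z_M$ (hence a finite Fock superposition) for every $k$ if and only if $A_{MM}=0$ and $b_M=0$. This is exactly the pure-state specialization of \cref{prop:core-states-in-bargmann}, so the goal becomes: find an $m$-mode Gaussian unitary $U$ such that $(U^{\dagger}\otimes\mathbb 1)|\psi\rangle$ has vanishing $MM$-block and vanishing $b_M$.

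Next I would recall how a Gaussian unitary acts on the A matrix. A Gaussian unitary on $M$ with Bogoliubov matrices $(\alpha,\beta)$ acts on the full A matrix by a matrix Möbius transformation whose symplectic data is block-diagonal (identity on $N$). A short block computation then shows that the transformed $MM$-block depends only on $A_{MM}$,
\begin{align*}
A'_{MM}=(\alpha A_{MM}+\beta)(\beta^{\ast}A_{MM}+\alpha^{\ast})^{-1},
\end{align*}
so it suffices to make the numerator vanish, i.e.\ to choose $\beta=-\alpha A_{MM}$, while keeping $(\alpha,\beta)$ a legitimate Bogoliubov pair.

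The core of the argument is verifying that such a pair exists and is physical, which I expect to be the main obstacle (together with bookkeeping the exact conventions of the Möbius formula from \cref{app:physicality}). I would set $\alpha=(\mathbb 1-A_{MM}A_{MM}^{\dagger})^{-1/2}$ (the positive square root) and $\beta=-\alpha A_{MM}$. Because $A_{MM}$ is a principal submatrix of the A matrix of a normalizable pure state, one has $\lVert A_{MM}\rVert\le \lVert A\rVert<1$, so $\mathbb 1-A_{MM}A_{MM}^{\dagger}$ is positive definite; hence $\alpha$ is well defined and the inverted factor $\beta^{\ast}A_{MM}+\alpha^{\ast}=\alpha^{\ast}(\mathbb 1-A_{MM}^{\dagger}A_{MM})$ is invertible. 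The first Bogoliubov relation $\alpha\alpha^{\dagger}-\beta\beta^{\dagger}=\mathbb 1$ reduces to $\alpha(\mathbb 1-A_{MM}A_{MM}^{\dagger})\alpha=\mathbb 1$, which holds by construction, while the second relation $\alpha\beta^{\mathrm T}=\beta\alpha^{\mathrm T}$ follows from the symmetry $A_{MM}^{\mathrm T}=A_{MM}$ inherited from $A=A^{\mathrm T}$. This fixes the symplectic part of $U$ and forces $A'_{MM}=0$; the residual $b'_M$ is then removed by composing with a displacement on $M$, which, once $A'_{MM}=0$, acts on $b'_M$ as an invertible affine shift. The resulting $U$ is a genuine Gaussian unitary, and $|\psi_{\mathrm{core}}\rangle=(U^{\dagger}\otimes\mathbb 1)|\psi\rangle$ is a normalizable pure state with the required core block structure.

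Finally I would establish the Fock-support bound and the runtime. With $A'_{MM}=0$ and $b'_M=0$, the only $z_M$-dependence left in the exponent of $\psi_{\mathrm{core}}$ comes from the bilinear term $z_M^{\mathrm T}A'_{MN}z_N$, which is linear in $z_M$; differentiating $k_j$ times in each $z_{N,j}$ and setting $z_N=0$ produces a polynomial in $z_M$ of degree at most $\sum_j k_j=\lVert k\rVert_1$, giving Fock support (and stellar rank) at most $\lVert k\rVert_1$, in agreement with \cref{prop:core-states-in-bargmann}. For the complexity, the only nontrivial operations are the inverse square root defining $\alpha$ (an $m\times m$ matrix function) and the matrix products and single inversion implementing the Möbius map on the full $(m+n)\times(m+n)$ matrices, each costing $O((m+n)^3)$; hence the $(A,b,c)$ parametrizations of both $|\psi_{\mathrm{core}}\rangle$ and $U$ are obtained in $O((m+n)^3)$ time.
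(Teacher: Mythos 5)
Your proposal is correct, and it constructs the same unitary as the paper, but by a genuinely different derivation. The paper's proof (\protect\cref{app:pf-of-pure-state}) starts from the observation that $U$ must map $\ket{0^m}$ to the vacuum-heralded state $(\mathbb{1}\otimes\bra{0^n})\ket{\psi}$, parametrizes $A_U$ using the fact that the A matrix of a Gaussian unitary is unitary and symmetric (which forces $A_U^{\mathrm{out}}=A_\psi^{(m)}$ and $\Gamma_U=\sqrt{\mathbb{1}-{A_\psi^{(m)}}^{\ast}A_\psi^{(m)}}$), and then solves the Bargmann contraction equations to obtain closed-form Abc blocks for $\ket{\psi_{\mathrm{core}}}$. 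You instead work in the Bogoliubov/symplectic picture: you impose the M\"obius law on the $MM$ block and choose $(\alpha,\beta)=\bigl((\mathbb{1}-A_{MM}A_{MM}^{\dagger})^{-1/2},\,-\alpha A_{MM}\bigr)$ to annihilate its numerator; translating through \protect\cref{prop:AofU}, this gives $S_1=\alpha$ and $S_2=A_{MM}(\mathbb{1}-A_{MM}^{\dagger}A_{MM})^{-1/2}$, i.e.\ exactly the paper's unitary, so the two constructions coincide up to the stated passive freedom. Two comparative remarks. First, your claim that the transformed $MM$ block obeys the single-block M\"obius law (depending only on $A_{MM}$) is correct, but the cleanest justification is not a block computation: it is precisely the paper's starting observation that a unitary on $M$ commutes with heralding vacuum on $N$, and that the $MM$ block is the A matrix of the vacuum-heralded state. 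Second, the paper's block-matching buys explicit formulas for $A_{\ket{\psi_{\mathrm{core}}}}$ and $b_{\ket{\psi_{\mathrm{core}}}}$, making the $O((n+m)^3)$ algorithmic claim immediate, whereas you define $\ket{\psi_{\mathrm{core}}}=(U^{\dagger}\otimes\mathbb{1})\ket{\psi}$ implicitly and recover its Abc data with one further contraction at the same cost; in exchange, your direct differentiation argument for the $\norm{k}_1$ Fock-support bound is self-contained and avoids routing through the recursion-relation proof of \protect\cref{prop:core-states-in-bargmann}.
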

We highlight that replacing $(\ket{\psi_{\mathrm{core}}}, U)$ by $((W\otimes \mathbb 1)|\psi_\mathrm{core}\rangle, UW^\dagger)$ for any passive Gaussian unitary $W$ acting on $M$ is a valid stellar decomposition. Interestingly, the unitary $U$ can be found in a simple and intuitive way, as shown visually in \protect\cref{fig:proof_of_U}. We refer to \protect\cref{app:pf-of-pure-state} for a detailed proof of the theorem.

\begin{figure}[t]
\centering
\begin{tikzpicture}
    % Triangle for psi
    \draw[thick] (-.3,0) -- (1,1) -- (1,-1) -- cycle;
    % Label for psi
    \node at (0.5,0) {\large $\ket{\psi}$};
    % Wires
    \draw[thick] (1,0.5) -- (2,0.5);
    \draw[thick] (1,-0.5) -- (2,-0.5);
    % Labels for modes
    \node[above] at (2,0.5) {$m$ modes};
    \node[below] at (2,-0.5) {$n$ modes};
    \node at (3,0) {$=$};
    % Triangle for psi_core
    \draw[thick] (3.5,0) -- (5,1) -- (5,-1) -- cycle;
    \node at (4.4,0) {\large $\ket{\psi_{\text{core}}}$};
    % Gate U
    \draw[thick] (5.5,1) rectangle (6.5,0);
    \node at (6,.5) {\large $U$};
    % Wires
    \draw[thick] (5,0.5) -- (5.5,0.5);
    \draw[thick] (5,-0.5) -- (7.0,-0.5);
    \draw[thick] (6.5,0.5) -- (7.0,0.5);
\end{tikzpicture}
\caption{Stellar decomposition for pure Gaussian states (see \protect\cref{prop:pure-state-case}). The triangle shape indicates a Hilbert space vector (a ket). Every pure Gaussian state $|\psi\rangle$ can be decomposed as a Gaussian core state $|\psi_\mathrm{core}\rangle$ followed by an $m$-mode Gaussian unitary $U$ acting only on $M$. Any $(W|\psi_\mathrm{core}\rangle, UW^\dagger)$ pair for $W$ the unitary of an $m$-mode interferometer is a unitarily equivalent stellar decomposition.}
    \label{fig:ket_decomp}
\end{figure}

%%%
\begin{figure}[htbp]
\centering
\begin{tikzpicture}
    % Triangle for psi
    \draw[thick] (3,0) -- (4.5,1) -- (4.5,-1) -- cycle;
    \node at (4,0) {\large $\ket{\psi}$};
    % Wires
    \draw[thick] (4.5,0.5) -- (5.5,0.5);
    \draw[thick] (4.5,-0.5) -- (5.0,-0.5);
    \node at (5.5, -0.5) {\large $\bra{0^n}$};
    \node at (6.5, 0.0) {\large $=$};
    % Triangle for psi_core
    \draw[thick] (7.0,0) -- (8.5,1) -- (8.5,-1) -- cycle;
    \node at (7.9,0) {\large $\ket{\psi_\mathrm{core}}$};
    % Wires
    \draw[thick] (8.5,0.5) -- (9,0.5);
    \draw[thick] (8.5,-0.5) -- (9,-0.5);
    \node at (9.5, -0.5) {\large $\bra{0^n}$};
    %box
    \draw[thick] (9,1) -- (10,1) -- (10,0) -- (9,0) -- cycle;
    \node at (9.5, 0.5) {\large $U$};
    \draw[thick] (10,0.5) -- (10.5,0.5);
    %box below
    \node at (6.5, -2.0) {\large $=$};
    \node at (8, -2) {\large $|0^m\rangle$};
    \draw[thick] (8.5,-2) -- (9,-2);
    \draw[thick] (9,-1.5) -- (10,-1.5) -- (10,-2.5) -- (9,-2.5) -- cycle;
    \node at (9.5, -2) {\large $U$};
    \draw[thick] (10,-2) -- (10.5,-2);
\end{tikzpicture}
\caption{A visual demonstration that the unitary of the pure stellar decomposition is one that maps the $m$-mode vacuum $\ket{0^m}$ to the heralded state $(\mathbb 1 \otimes\langle0^n|)|\psi\rangle$ corresponding to measuring vacuum on the last $n$ modes of $|\psi\rangle$.} 
\label{fig:proof_of_U}
\end{figure}
%%%
\begin{figure}[t]
\centering
\begin{tikzpicture}
    % Triangle for psi_core
    \draw[thick] (3.5,0) -- (5,1) -- (5,-1) -- cycle;
    \node at (4.4,0) {\large $\ket{\psi_{\text{core}}}$};
    % Wires
    \draw[thick] (5,0.5) -- (6,0.5);
    \draw[thick] (5,-0.5) -- (7.5,-0.5);
    \node at (8, -.5) {\large $\bra{k}$};
    \node at (8,.5) {\large Stellar rank $\leq \norm{k}_1$};
\end{tikzpicture}
\caption{Measuring a Fock pattern $k\in\mathbb{N}^n$ on subsystem $N$ leaves the state on subsystem $M$ with Fock support on the set of Fock states $J_{\norm{k}_1}^{m}$ (see \eqref{eq:J-notation} for the definition of such sets). This means the stellar rank is upper bounded by $\norm{k}_1$ i.e.~the total number of measured photons.}
\label{fig:stellar-rank-bound}
\end{figure}
%%%

\subsection{Mixed states}

Analogously to \protect\cref{prop:pure-state-case}, we ask if a mixed state bipartite over $MN$ can be decomposed into a Gaussian core state followed by a Gaussian channel $\Phi$ on $M$. Formally, we want to have
\begin{align}
\rho = (\Phi\otimes\mathcal I) (\rho_{\mathrm{core}}),
\end{align}
where $\Phi$ is a Gaussian channel acting on $M$, and $\rho_{\mathrm{core}}$ is a mixed Gaussian core state as illustrated in \protect\cref{fig:ket_decomp2}. The following theorem summarizes when this is possible. We use the following convention for the Ab parametrization of $\rho$ in mode-wise order:
\begin{align}\label{eq:Arho}
\begin{split}
A_\rho &= \begin{bmatrix}
A_\rho^{(m)} & R_\rho^{\mathrm{T}}\\
R_\rho & A_\rho^{(n)}
\end{bmatrix}\\
b_\rho &= \begin{bmatrix}
{b_\rho^{(m)}} \\{b_\rho^{(n)}}
\end{bmatrix}.
\end{split}
\end{align}

\begin{theorem}\label{prop:mixed-state-case}
A mixed $(m+n)$-mode Gaussian state $\rho$ bipartite over $MN$ admits a stellar decomposition
\begin{align}
    \rho = (\Phi\otimes \mathcal I)(\rho_\mathrm{core}),
\end{align}
{where $\Phi$ is a Gaussian channel on $M$}, if and only if \begin{align}\label{eq:iff-condition}
\mathrm{rank}\left( r_\rho r_\rho^\dag + \sigma_\rho \sigma_\rho^\dag \right) \leq m,
\end{align}
where $r_\rho$ and $\sigma_\rho$ are the block entries of the off-diagonal block of $A_\rho$ (see Eq.~\eqref{eq:Arho}):
\begin{align}
R_\rho = \begin{bmatrix}
r_\rho^\ast & \sigma_\rho^\ast\\
\sigma_\rho & r_\rho
\end{bmatrix}.
\end{align}
The time-complexity for computing the Abc parametrization of $\rho_{\mathrm{core}}$ and $\Phi$ is $O((n+m)^3)$.
\end{theorem}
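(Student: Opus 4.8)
The plan is to work entirely in the Bargmann/Abc picture and reduce the problem to tracking how a Gaussian channel on $M$, when composed with a Gaussian core state, reshapes the off-diagonal block $R_\rho$ of \eqref{eq:Arho}. Two facts carry the argument: by \cref{prop:core-states-in-bargmann} a Gaussian core state is exactly one whose $M$-$M$ block of $A$ and whose $M$-part of $b$ vanish in mode-wise order, and by \cref{prop:channels-in-bargmann} a Gaussian map is completely positive precisely when $\Gamma_\Phi \ge 0$ and $c_\Phi \ge 0$. First I would write the Bargmann composition formula for $(\Phi\otimes\mathcal I)(\rho_{\mathrm{core}})$ — the Schur-complement-type contraction that glues the input legs of $\Phi$ onto the $M$ legs of $\rho_{\mathrm{core}}$ — and read off the resulting blocks of $A_\rho$, tracking the conversion between type-wise ordering (natural for $\Phi$) and mode-wise ordering (natural for the core-state condition).

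For necessity, the crucial observation is that a core state contributes nothing to the $M$-$M$ block, so the entire output coupling $R_\rho$ must be generated by routing the core coupling through the channel's input-to-output transmission. Isolating the ket-type $N$ rows, I would show that the $n\times 2m$ block $\begin{bmatrix} r_\rho & \sigma_\rho\end{bmatrix}$ factors through this transmission. A generic Gaussian operator on $M$ offers a $2m$-dimensional transmission (this is exactly why the formal decomposition of \cref{prop:formal-stellar-decomp} needs no rank hypothesis), but the complete-positivity constraint $\Gamma_\Phi\ge0$ forces the holomorphic and anti-holomorphic transmissions to be compatible, collapsing the effective transmission rank seen by the $N$ variables down to $m$. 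Since $r_\rho r_\rho^\dagger + \sigma_\rho\sigma_\rho^\dagger = \begin{bmatrix} r_\rho & \sigma_\rho\end{bmatrix}\begin{bmatrix} r_\rho & \sigma_\rho\end{bmatrix}^\dagger$, its rank equals $\mathrm{rank}\begin{bmatrix} r_\rho & \sigma_\rho\end{bmatrix}\le m$, which is exactly \eqref{eq:iff-condition}.

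For sufficiency, I would run the construction in reverse: assuming \eqref{eq:iff-condition}, take a rank factorization (for instance an SVD) of $\begin{bmatrix} r_\rho & \sigma_\rho\end{bmatrix}$ through an $m$-dimensional intermediate space, and use it to define a candidate channel $\Phi$ whose transmission realizes exactly this coupling while absorbing the full $M$-$M$ block $A_\rho^{(m)}$ and the $M$-part of $b_\rho$, so that the leftover $\rho_{\mathrm{core}}$ has the vanishing $M$-$M$ block and $M$-means demanded by \cref{prop:core-states-in-bargmann}. I would then certify $\Phi$ as a bona fide Gaussian channel by checking $\Gamma_\Phi\ge0$, $c_\Phi\ge0$ against \cref{prop:channels-in-bargmann} together with trace preservation, and certify that $\rho_{\mathrm{core}}$ is a legitimate positive, finite-trace Gaussian state. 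As in the pure case, the freedom to replace $\rho_{\mathrm{core}}$ by $(W\otimes\mathbb{1})\rho_{\mathrm{core}}$ with a passive $W$ on $M$ provides room to put $\Phi$ into a convenient normal form. The stated $O((n+m)^3)$ cost then follows, since the contraction, the SVD/rank factorization, and the block assembly are each a constant number of linear-algebra operations on $O(n+m)$-sized matrices.

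I expect the main obstacle to be the sufficiency construction rather than necessity: one must spend the freedom in the $M$-$M$ block and in the choice of $\rho_{\mathrm{core}}$ so that $\Gamma_\Phi\ge0$ can be met \emph{exactly} when the rank bound \eqref{eq:iff-condition} holds, and not before. Making the collapse from $2m$ to $m$ quantitative — that is, pinning down the canonical form of the transmission block of a CP map and showing that a rank-$m$ coupling is simultaneously the most it can produce and something it can always produce — is where the real work lies; the displacement vector $b$ and the normalization $c$ are comparatively routine bookkeeping once this linear-algebraic core is settled.
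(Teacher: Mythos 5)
Your necessity argument relies on a mechanism that is not true. You attribute the collapse from a $2m$-dimensional to an $m$-dimensional transmission to the complete-positivity condition $\Gamma_\Phi\geq 0$ of \cref{prop:channels-in-bargmann}; but complete positivity imposes no such rank collapse on the output--input coupling of a Gaussian channel. The identity channel (or any Gaussian unitary channel) is CP, and its output--input coupling block has full rank $2m$. In fact, the operator $T$ of the formal decomposition (\cref{prop:formal-stellar-decomp}), which exists for \emph{every} Gaussian state with no rank hypothesis whatsoever, has output--input coupling $\mathbb{1}_{2m}$, $A^{\mathrm{in}}_T=0$ and $A^{\mathrm{out}}_T=A^{(m)}_\rho$, so its type-wise bra--ket block equals $\bigl(\begin{smallmatrix}\alpha^{(m)}_\rho & 0\\ 0&0\end{smallmatrix}\bigr)\geq 0$: it already passes your CP test, yet no rank condition is needed for it to exist. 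What actually forces \eqref{eq:iff-condition} in the paper's proof is the constraint you never invoke, namely \emph{positivity of the core state}: since $\rho_{\mathrm{core}}$ has vanishing $M$--$M$ block, \cref{lem:positivity} requires $\bigl(\begin{smallmatrix}0 & \sigma_c^\dagger\\ \sigma_c & \alpha_c^{(n)}\end{smallmatrix}\bigr)\geq 0$, and a positive semi-definite matrix with a vanishing diagonal block must have vanishing off-diagonal blocks, so $\sigma_c=0$. Only after this does the ket-$N$ row of the composition equation $R_c\Gamma_\Phi=R_\rho$ of \eqref{eq:mixed_system} become $[\,\sigma_\rho\;\;r_\rho\,]=r_c\,[\,\lambda_\Phi\;\;\gamma_\Phi\,]$ with $r_c$ of size $n\times m$, which is the $m$-dimensional bottleneck giving $\mathrm{rank}(r_\rho r_\rho^\dagger+\sigma_\rho\sigma_\rho^\dagger)\leq m$. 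The bottleneck sits on the core-state side of the factorization, not on the channel side.

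The same misattribution undermines your sufficiency plan: you aim the ``real work'' at pinning down a canonical form for the transmission of a CP map, but the constraints that must be engineered to hold simultaneously are positivity of $\rho_{\mathrm{core}}$ and trace preservation of $\Phi$ (which fixes $A_\Phi^{\mathrm{in}}$ and amounts to matching the $N$-marginals of $\rho$ and $\rho_{\mathrm{core}}$). Your SVD-based construction has the right shape and parallels the paper's \cref{claim:solution-for-mixed-case}, but the actual content there is the explicit choice $\sigma_c=0$, $\alpha_c^{(n)}=\alpha_\rho^{(n)}-\sigma_\rho\,\alpha_\rho^{(m)}{}^{+}\sigma_\rho^\dagger$ (positive semi-definite because $\rho$ itself is positive), an $n\times m$ solution $r_c$ of a positive semi-definite equation---solvable exactly when \eqref{eq:iff-condition} holds---and $\Gamma_\Phi=R_c^{+}R_\rho$, after which complete positivity of $\Phi$ follows from positivity of $\rho$ by the congruence in \eqref{eq:phi-physicality-break-down}. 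Since you leave that construction open and your necessity mechanism fails, both directions of the equivalence have a genuine gap as written.
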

\begin{figure}[t]
    \centering
\scalebox{1.2}{
\begin{tikzpicture}

% Semi-circle for rho
\draw[thick, black] 
(-1, -1.4) arc[start angle=270, end angle=90, radius=0.9] -- (-1, 0.4) -- (-1, -1.4) -- cycle;
\node at (-1.35, -0.5) {\large $\rho$};

% Connecting wires
\draw[thick, black] (-1, 0) -- (0, 0) node[ above] {\small $m$ modes};
\draw[thick, black] (-1, -1) -- (0, -1) node[below] {\small $n$ modes};
\node at (.5, -.5) {$=$};

% Semi-circle for rho_core
\draw[thick, black] 
(2, -1.4) arc[start angle=270, end angle=90, radius=0.9] -- (2, 0.4) -- (2, -1.4) -- cycle;
\node at (1.6, -0.5) {\large $\rho_{\text{core}}$};

% Box for phi
\draw[thick, black] (2.6, -.4) rectangle (3.4, 0.4);
\node at (3, 0) {\large $\Phi$};

% Connecting wires from rho_core to box
\draw[thick, black] (2, 0) -- (2.6, 0);
\draw[thick, black] (2, -1) -- (4, -1);

% Output wires from box
\draw[thick, black] (3.4, 0) -- (4, 0);

\end{tikzpicture}}
    \caption{Stellar decomposition for mixed Gaussian states (see \protect\cref{prop:mixed-state-case}). Not every mixed Gaussian state $\rho$ admits a stellar decomposition in terms of \emph{physical} parts (i.e.~ a Gaussian mixed state and a Gaussian channel). The rounded shape indicates density matrices.}
    \label{fig:ket_decomp2}
\end{figure}
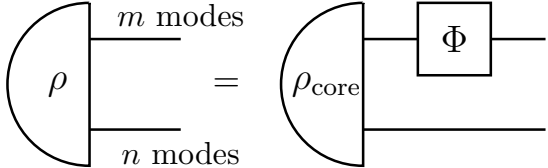

We refer to \protect\cref{app:pf-of-mixed-state} for a proof. If $m < n$, not all mixed Gaussian states over $MN$ admit a physical stellar decomposition. On the other hand, if $m\geq n$ not only can the decomposition always be carried out, but the Gaussian core state can always be chosen to be pure. This is because the condition \eqref{eq:iff-condition} is always satisfied if $m\geq n$ as $r_\rho r_\rho^\dag + \sigma_\rho\sigma_\rho^\dag$ is an $n\times n$ matrix and hence $m$ is greater than or equal to its rank.

\begin{proposition}\label{prop:m>=n}
In the special case where the $m+n$ modes of the initial mixed state $\rho$ are partitioned such that $m\geq n$, a physical stellar decomposition always exists and the Gaussian core state is pure:
\begin{align}
    \rho = (\Phi\otimes \mathcal I)(|\psi_\mathrm{core}\rangle\langle\psi_\mathrm{core}|).
\end{align}
We can compute the Abc parameters of $\ket{\psi_{\mathrm{core}}}$ and $\Phi$ in time $O(m^3)$.
\end{proposition}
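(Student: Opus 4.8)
\section*{Proof proposal}

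The existence half is immediate and merely echoes the remark preceding the statement: since $r_\rho$ and $\sigma_\rho$ are the $n\times m$ blocks of $R_\rho$, the matrix $r_\rho r_\rho^\dagger+\sigma_\rho\sigma_\rho^\dagger$ is $n\times n$, so its rank is at most $n\le m$ and condition \eqref{eq:iff-condition} of \cref{prop:mixed-state-case} holds automatically. The real content is that the core state can be taken \emph{pure}, and my plan is to prove this directly by a purification argument rather than by inspecting the construction behind \cref{prop:mixed-state-case}; this argument will in fact re-derive existence as a byproduct.

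First I would purify $\rho$ to a pure Gaussian state $\ket{\Psi}$ on $MNE$, where $E$ is an environment of $e\le m+n$ modes (one per symplectic eigenvalue of $\rho$ exceeding one), so that $\rho=\Tr_E\ketbra{\Psi}{\Psi}$. Viewing $\ket{\Psi}$ as a bipartite pure state across the cut $(ME)\,|\,N$, the reduced states $\rho_{ME}=\Tr_N\ketbra{\Psi}{\Psi}$ and $\rho_N=\Tr_{ME}\ketbra{\Psi}{\Psi}=\Tr_M\rho$ share the same nonunit symplectic spectrum; since $\rho_N$ lives on only $n$ modes, $\rho_{ME}$ has at most $n$ nonunit symplectic eigenvalues and is therefore vacuum along at least $(m+e)-n$ symplectic directions. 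Because $m\ge n$ we have $(m+e)-n\ge e$, so a Williamson diagonalization of $\rho_{ME}$ (together with a displacement) supplies a Gaussian unitary $W_{ME}$ on $ME$ that routes the \emph{entire} $E$ register onto vacuum directions. A register that is pure in the reduced state factors out of the global pure state, giving $(W_{ME}\otimes\mathbb 1_N)\ket{\Psi}=\ket{\phi}_{MN}\otimes\ket{0}_E$ for some pure Gaussian $\ket{\phi}_{MN}$.

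Tracing out $E$ then yields $\rho=(\Lambda\otimes\mathcal I)(\ketbra{\phi}{\phi})$, where $\Lambda(X)=\Tr_E[W_{ME}^\dagger(X\otimes\ketbra{0}{0}_E)W_{ME}]$ is a Gaussian channel on $M$ written in explicit Stinespring form. Finally I would apply the pure-state decomposition \cref{prop:pure-state-case} to $\ket{\phi}_{MN}$, writing $\ket{\phi}=(U\otimes\mathbb 1)\ket{\psi_\mathrm{core}}$ with $U$ an $m$-mode Gaussian unitary and $\ket{\psi_\mathrm{core}}$ a pure Gaussian core state, and absorb $U$ into the channel via $\Phi(X):=\Lambda(UXU^\dagger)$. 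This produces $\rho=(\Phi\otimes\mathcal I)(\ketbra{\psi_\mathrm{core}}{\psi_\mathrm{core}})$ with $\Phi$ Gaussian and $\ket{\psi_\mathrm{core}}$ pure, as claimed.

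The crux is the decoupling step: it is precisely the inequality $(m+e)-n\ge e$, i.e.\ $m\ge n$, that guarantees enough vacuum directions in $\rho_{ME}$ to absorb the purifying environment, and this is the only place the hypothesis is used. The accompanying subtlety is that $W_{ME}$ acts nontrivially on $M$ and hence need not preserve the finite-Fock-support property of the $MN$ factor, so $\ket{\phi}_{MN}$ is generally not itself core; I would defer this issue to the final step, letting \cref{prop:pure-state-case} restore the core structure at the cost of an extra $M$-unitary that is harmlessly folded into $\Phi$. For the complexity, all operations---the purification, a single Williamson diagonalization on $O(m)$ modes, and the pure decomposition---are cubic in the mode number, and since $m\ge n$ forces the total mode count to be $O(m)$, the overall cost is $O(m^3)$.
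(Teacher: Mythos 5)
Your proof is correct, but it takes a genuinely different route from the paper's. The paper proves \cref{prop:m>=n} entirely inside the Bargmann (Abc) formalism, reusing the machinery of its mixed-state proof: it pads the $N$ register with $m-n$ vacuum modes to reduce to $m=n$, parametrizes the pure core state by blocks $a$, $r$, and solves the marginal-matching equation \eqref{eq:bargman-marginal} directly---the diagonal blocks determine $a$, the off-diagonal block is an $n\times n$ positive semi-definite matrix that can be written as $rr^\dagger$ with $r$ square and full rank (this is exactly where $m\geq n$ enters there), after which $\Gamma_\Phi = R_c^{+}R_\rho$ fixes the channel; positivity of the core state is trivial because it is pure, and trace preservation was built in. Your argument instead purifies $\rho$, counts symplectic eigenvalues across the $(ME)\,|\,N$ cut, decouples the environment by a Williamson unitary, and restores the core structure at the end via \cref{prop:pure-state-case}. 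Each approach buys something: the paper's construction yields closed-form Abc parameters for $\ket{\psi_{\mathrm{core}}}$ and $\Phi$ (which is what one implements numerically and what makes the $O(m^3)$ claim concrete), whereas your Stinespring form $\Lambda(X)=\Tr_E\!\left[W_{ME}^\dagger\left(X\otimes\ketbra{0}{0}_E\right)W_{ME}\right]$ makes complete positivity and trace preservation manifest with no Bargmann-side physicality check, isolates the single use of the hypothesis in the counting inequality $(m+e)-n\geq e$, is logically independent of the mixed-state proof (\cref{claim:solution-for-mixed-case}), and re-derives existence as a byproduct. Two small points to tighten: the statement that the two reduced states of a pure Gaussian bipartite state share the same nonunit symplectic spectrum is standard but should be attributed to the mode-wise decomposition (Botero--Reznik/Giedke et al., which the paper already cites), and you should note explicitly that the tensor factor $\ket{\phi}_{MN}$ of a pure Gaussian product state is itself Gaussian (its covariance matrix is the corresponding block), so that \cref{prop:pure-state-case} indeed applies. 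Neither is a gap; both are one-line justifications.
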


We refer to \protect\cref{app:pf-of-m>=n} for a proof. Intuitively, this result shows that the channel on the top mode is capable of extracting \textit{all of the noise} present in the initial Gaussian state. In other words, we can factor out the entire source of impurity of $\rho$ on subsystem $M$. This idea of extracting noise out of a subset of modes will later help us put bounds on the quality of heralded states.

We remark that in cases where $m<n$, one can simply add $n-m$ ancillary vacuum states to $M$ and obtain the stellar decomposition from \protect\cref{prop:m>=n}. This decomposition gives us a core state $\rho_{\mathrm{core}}$ over $2n$ modes where the $n-m$ ancillary modes can be entangled with the rest of the modes, and a channel $\Phi$ that maps $n$ modes to $m$ modes. From a state preparation perspective, we would like to apply interferometers on several instances of our heralded Gaussian states \cite{aghaee2025scaling}, and are interested in cases where we can commute the channel through the subsequent beam-splitter network. Following this practical motivation, we investigated the application of stellar decomposition to cases where the channel $\Phi$ maps only $m$ modes to $m$ modes (characterized by our results above).

\subsection{General operators}

Recall that corresponding to any linear operator is a Bargmann function and that the operator is called Gaussian if its Bargmann function is in the Gaussian form given by Eq.~\eqref{eq:exp_quad_poly} (see \protect\cref{app:bargmann} for more details). We can vectorize any Gaussian operator, say $G$, to obtain a Gaussian vector $\mathrm{vec}(G)$. Formally, letting $F_{G}:\mathbb C^{n}\to\mathbb C$ represent the stellar function of $G$, we have (see \cite{chabaud2021continuous})
\begin{align}
\mathrm{vec}(G):=F_G(a_1^\dagger,\cdots, a_n^\dagger) \ket{0^n}.
\end{align}
As an example of vectorization, we have that the Fock damping operator on one mode after vectorization becomes proportional to a two-mode squeezed vacuum state. Note that by vectorization, we obtain a Gaussian vector with the same Abc triple as the initial object. In the rest of this section, we provide a result on the stellar decomposition of a generic Gaussian vector. We start by defining Gaussian core vectors. Note that such Gaussian vectors do not need to satisfy any physicality constraint as we do not give physical meanings to them, and they might correspond to functions outside of the Siegel-Bargmann Hilbert space as they can be non-normalizable.

\begin{definition}
A Gaussian vector $\mathrm{vec}(G)$ defined in the joint space $MN$ is called \emph{Gaussian core} if it has a finite Fock support when we project the $N$ subspace onto number states.
\end{definition}

In general, the stellar decomposition can be carried out on any Gaussian operator $G$ in a formal way, if no physicality requirement is needed. Formally:
\begin{proposition}\label{prop:formal-stellar-decomp}
Any Gaussian vector $\mathrm{vec}(G)$ defined on spaces $MN$ can be decomposed as
\begin{align}
\mathrm{vec}(G) = (T\otimes \mathbb 1) \mathrm{vec}(S_{\mathrm{core}}),
\end{align}
where $\mathrm{vec}(S_{\mathrm{core}})$ is a Gaussian core vector, and $T$ is a Gaussian operator that acts as identity on space $N$. This decomposition can be computed in time $O(n^2)$. In fact, each entry of the Ab representations of $T$ and $S_{\mathrm{core}}$ can be computed in constant time.
\end{proposition}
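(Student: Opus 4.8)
The plan is to give an \emph{explicit} read-off construction for $T$ and $S_{\mathrm{core}}$ and then certify it by a single Bargmann contraction, deliberately avoiding any matrix inversion or Schur complement. This is what will make each entry cost $O(1)$ and the whole construction cost $O(n^2)$, matching the claim. First I would fix block notation for the vectorized object. Writing the Abc parametrization of $\mathrm{vec}(G)$ in mode-wise order with $M$ listed before $N$,
\begin{align}
A_G = \begin{bmatrix} A^{MM} & A^{MN} \\ A^{NM} & A^{NN}\end{bmatrix}, \qquad b_G = \begin{bmatrix} b^{M} \\ b^{N}\end{bmatrix},
\end{align}
I would first record the characterization of Gaussian core \emph{vectors}: repeating the argument behind \cref{prop:core-states-in-bargmann} (evaluate on Fock states of $N$ and ask for finite support on $M$) shows that $\mathrm{vec}(S_{\mathrm{core}})$ is core exactly when its $M$--$M$ block of $A$ and its $M$ part of $b$ both vanish.

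The construction then simply copies entries of $G$. I would set
\begin{align}
A_S = \begin{bmatrix} 0 & A^{MN} \\ A^{NM} & A^{NN}\end{bmatrix}, \quad b_S = \begin{bmatrix} 0 \\ b^{N}\end{bmatrix}, \quad c_S = c_G,
\end{align}
which is manifestly core, and define the operator $T$ on $M$ (output variables $w$, input variables $v$, in output--input ordering) by
\begin{align}
A_T = \begin{bmatrix} A^{MM} & \mathbb{1}_m \\ \mathbb{1}_m & 0\end{bmatrix}, \quad b_T = \begin{bmatrix} b^{M} \\ 0\end{bmatrix}, \quad c_T = 1.
\end{align}
Every entry appearing here is either $0$, an entry of the identity, or an entry of $(A_G,b_G,c_G)$ copied verbatim, so each is produced in constant time and the full parametrization in time quadratic in the number of modes, as claimed.

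The verification is the crux of the argument. Applying $(T\otimes\mathbb{1})$ to $\mathrm{vec}(S_{\mathrm{core}})$ contracts the input leg $v$ of $T$ against the $M$ legs of $S_{\mathrm{core}}$. In general such a contraction is a holomorphic Gaussian integral whose correction term couples the input--input block of $A_T$ to the $M$--$M$ block of $A_S$, producing both a Schur-complement shift of $A$ and a determinant factor in $c$. The whole point of the construction is that I have forced \emph{both} of these blocks to be zero, so the quadratic self-coupling of the contracted variables disappears and the integral collapses to the reproducing-kernel identity of the Siegel--Bargmann space. Because the output--input block of $A_T$ is the identity, the net effect of $T$ is just multiplication of $F_{S_{\mathrm{core}}}$ by $\exp\!\bigl(\tfrac12\, w^{\mathrm T} A^{MM} w + w^{\mathrm T} b^{M}\bigr)$ with $w$ playing the role of the output $M$-variable $z_M$. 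This reinstates precisely the missing $A^{MM}$ block and $b^{M}$ vector while leaving $A^{MN}$, $A^{NN}$, $b^{N}$, and the constant $c_G$ untouched, so $F_{(T\otimes\mathbb{1})S_{\mathrm{core}}} = F_G$. I would write this out as one short Gaussian integral to confirm that no stray cross term or normalization survives, and I would note that physicality of $T$ or $S_{\mathrm{core}}$ is never used, consistent with the formal nature of the statement.

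The main obstacle is bookkeeping rather than conceptual: I must apply the Bargmann composition rule in the conventions of the paper (the mode-wise versus output--input orderings, and the placement of complex conjugates in the contraction measure) carefully enough that the off-diagonal $\mathbb{1}_m$ block of $A_T$ genuinely realizes the reproducing property and the two vanishing self-blocks genuinely annihilate both the Schur correction and the accompanying determinant factor. Once those conventions are pinned down, the identity $F_{(T\otimes\mathbb{1})S_{\mathrm{core}}} = F_G$ is immediate, and the per-entry constant-time and overall $O(n^2)$ complexity claims follow directly from the fact that the construction is a pure read-off.
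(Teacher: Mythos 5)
Your construction is exactly the paper's: zero out the $M$--$M$ block of $A_G$ and the $M$ part of $b_G$ to form $S_{\mathrm{core}}$, and take $T$ with output block $A^{MM}$, off-diagonal identity, vanishing input--input block, and $b_T=(b^M,0)$, then verify by a single Bargmann contraction in which the vanishing self-coupling blocks kill the Schur correction and determinant factor (the paper phrases this last point as the contraction condition $\norm{0+0}=0<2$ of \cref{prop:contraction-condition}). The proposal is correct and follows essentially the same route, including the read-off complexity argument.
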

The proof is provided in \protect\cref{app:pf-of-formal-decomposition}.
Also, we refer to \protect\cref{fig:ket_decomp3} for an illustration of the decomposition. We highlight that although not all mixed states admit a physical stellar decomposition, one can always vectorize a Gaussian mixed state and perform the formal decomposition above.
Note that the formal decomposition can be performed more efficiently than \protect\cref{prop:pure-state-case} and \protect\cref{prop:mixed-state-case}, as we do not impose physicality conditions on $\ket{S_{\mathrm{core}}}$ and $T$. Nevertheless, as we discuss below, this decomposition provides us with a powerful tool for simulating physical systems.

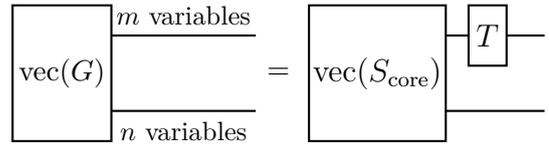
\begin{figure}
    \centering
    \begin{tikzpicture}
    % Rectangle for rho
    \draw[thick, black] (-2.2, -1.4) rectangle (-.9, 0.4);
    \node at (-1.55, -0.5) {$\mathrm{vec}(G)$};

    % Connecting wires
    \draw[thick, black] (-.9, 0) -- (1, 0) node[midway, above] {\small $m$ variables};
    \draw[thick, black] (-.9, -1) -- (1, -1) node[midway, below] {\small $n$ variables};
    \node at (1.3, -.5) {$=$};
    % Rectangle for rho_core
    \draw[thick, black] (1.7, -1.4) rectangle (3.5, 0.4);
    \node at (2.6, -0.5) {\large $\mathrm{vec}(S_{\mathrm{core}})$};

    % Box for phi
    \draw[thick, black] (3.8, -.4) rectangle (4.3, 0.4);
    \node at (4.05, 0) {\large $T$};

    % Connecting wires from rho_core to box
    \draw[thick, black] (3.5, 0) -- (3.8, 0);
    \draw[thick, black] (3.5, -1) -- (4.8, -1);

    % Output wires from box
    \draw[thick, black] (4.3, 0) -- (4.8, 0);

\end{tikzpicture}
    \caption{Stellar decomposition for Gaussian operators (see \protect\cref{prop:formal-stellar-decomp}). If no physicality requirement is necessary, any multimode Gaussian object $G$ can be decomposed into two Gaussian parts such that $S$  has the core property of needing a finite Fock cutoff on the subsystem $M$, when we project the subsystem $N$ onto the Fock basis. 
    }
    \label{fig:ket_decomp3}
\end{figure}

\section{Applications}\label{sec:applications}

In this section we consider applications of the stellar decomposition such as simulating heralded non-Gaussian states and assessing the quality of such states. In particular, in \protect\cref{sec:heralded-simulation} we show how one can exactly simulate heralded non-Gaussian states, and in \protect\cref{sec:GKP-bound} we show how one can upper bound the best quality (measured in terms of effective squeezing \cite{duivenvoorden2017single}) of a candidate GKP state that can be obtained based on Gaussian Boson sampling circuits.

\subsection{Simulating heralded non-Gaussian states}\label{sec:heralded-simulation}

As we shall demonstrate, the stellar decomposition leads to useful techniques for evaluating the Bargmann function, and for simplifying derivatives of Bargmann functions.

For instance, consider a $(1+n)$-mode pure Gaussian state with Bargmann function $F_{|\psi\rangle}(z,w)$ with $z\in\mathbb{C}$ and $w\in\mathbb{C}^n$, and project the last $n$ modes onto $\ket{N_{\mathrm{ph}}}^{\otimes n}$. Let us denote the Fock representation of a state $\ket{\psi}$ over $1+n$ modes as $\psi[\ell]$ for $\ell\in\mathbb N^{1+n}$, i.e.
\begin{align}
\ket{\psi} = \sum_{\ell\in\mathbb N^{1+n}} \psi[\ell] \, \ket\ell.
\end{align}
Recall that using state-of-the art algorithms for computation of Fock amplitudes \cite{yao2024riemannian, bjorklund2019faster}, we can compute the Fock coefficient $\psi[\ell]$ for a Gaussian state $\ket\psi$ over $n$ modes in time $O(\prod_i \ell_i)$ or $O(n^2\|\ell\|_1^{n/2})$ depending on the algorithm. Therefore, to compute the post-selected state $\langle{N_{\mathrm{ph}}}^{\otimes n}|\psi\rangle$ up to some cutoff $C$, we require time $O(C{N_{\mathrm{ph}}^n})$ or $O(n^2\|C+nN_\mathrm{ph}\|_1^{n/2})$. We now show that through the stellar decomposition, one can choose $C = nN_{\mathrm{ph}}$ and obtain an exact representation of the state. To see this, note that by \protect\cref{prop:pure-state-case} we can write
\begin{align}
\ket{\psi} = (U\otimes \mathbb 1)\ket{\psi_{\mathrm{core}}},
\end{align}
where $U$ is a single-mode Gaussian unitary. Therefore, we have
\begin{align}
|\psi_\mathrm{out}\rangle &=(\mathbb 1 \otimes \bra{N_{\mathrm{ph}}}^{\otimes n}) \ket\psi \\
&= U\left((\mathbb 1 \otimes \bra{N_{\mathrm{ph}}}^{\otimes n}) \ket{\psi_{\mathrm{core}}}\right)\!.
\end{align}
Using the core properties of $\ket{\psi_{\mathrm{core}}}$, we know that
\begin{align}
\psi_{\mathrm{core}}[k, \underbrace{N_{\mathrm{ph}}, \cdots, N_{\mathrm{ph}}}_{n\text{ times}}] = 0\quad \mathrm{if\ }k\geq nN_\mathrm{ph},
\end{align}
due to the stellar rank upper bound in \protect\cref{prop:pure-state-case}. Therefore, we can compute the state $(\mathbb 1 \otimes \bra{N_{\mathrm{ph}}})\ket{\psi_{\mathrm{core}}}$ exactly if we choose a cutoff $C = n N_{\mathrm{ph}}$ on the first mode. We can view the stellar decomposition as transforming the problem of computing $n$ multivariate derivatives of degree $N_{\mathrm{ph}}$ on an $(1+n)$-variate complex Gaussian, into the problem of computing a \emph{single} $(nN_{\mathrm{ph}})$-th order derivative of another complex Gaussian with only two variables. Mathematically, we have
\begin{align}
F_{|\psi_\mathrm{out}\rangle}(z_1)&=\frac{\partial_{z_2}^{N_{\mathrm{ph}}}\cdots\partial_{z_{n+1}}^{N_{\mathrm{ph}}}}{\sqrt{N_{\mathrm{ph}}!^{n}}} F_{\psi}(z_1,\dots,z_{n+1})|_{z_{i>1}=0}\\
&=\sum_{k=0}^{n N_{\mathrm{ph}}}c_k\frac{\partial^k_{z_2}}{\sqrt{k!}} \,  F_U(z_1,z_2)|_{z_2=0},
\end{align}
where
\begin{align}
c_k := \psi_{\mathrm{core}}[k,N_{\mathrm{ph}}, \cdots, N_{\mathrm{ph}}].
\end{align}
Note that although $(\partial_{z_2}^k F_U(z,0))_{k=0}^{n N_{\mathrm{ph}}}$ can be computed in polynomial time, the computation of $c_k$ still takes $O({N_{\mathrm{ph}}^n})$ time. 

We note that leveraging the stellar decomposition, we can precisely compute the effective squeezing parameters for any state obtained by measuring all but one mode of a Gaussian state in the photon number basis (for background information, see \protect\cref{sec:GKP-bound}). To this end, note that $U^\dagger D_{\vec v} U = D_{\vec v'}$, meaning we only need to evaluate $\bra{\psi_{\mathrm{out}}}D_{\vec v'}\ket{\psi_{\mathrm{out}}}$. As $\ket{\psi_{\mathrm{out}}}$ has a finite Fock support, and the Fock matrix elements of displacement operators have a closed-form expression \cite{cahill1969ordered, NIST:DLMF}, we can exactly compute the effective squeezing.

The kind of non-Gaussian objects that we are interested in are those obtained by projecting one or more modes onto the Fock basis. As seen above, the Bargmann function of such objects is the original Bargmann function with derivatives applied to it and evaluated at zero. 
More generally, if we want to project an $(m+n)$-mode state $|\psi\rangle$ onto the $n$-mode state $|\phi\rangle=\sum_{k\in I_K}\phi[k]\, |k\rangle$ where $k\in\mathbb{N}^n$ is a multi-index, we obtain the $m$-mode state $(\mathbb 1\otimes\langle\phi|)|\psi\rangle$ whose Bargmann function is
\begin{align}
\begin{split}
 F_{(\mathbb 1\otimes \langle\phi|)|\psi\rangle}(z)&=\sum_{k\in I_K}\phi[k]^* (\mathbb 1 \otimes \langle k|)|\psi\rangle\\
 &= \sum_{k\in I_K}\phi[k]^*\frac{\partial^k_v}{\sqrt{k!}}F_{|\psi\rangle}(z,v)|_{v=0}\\
 &=\sum_{j\in J_{\|K\|_1}^m} d_j\frac{\partial^j_w}{\sqrt{j!}}F_U(z,w)|_{w=0}
\end{split}
\end{align}
where $d_j = \sum_{k\in I_K} \phi[k] \, \psi_{\mathrm{core}}[j,k]$. In the last step we have used the stellar decomposition of $|\psi\rangle$: $F_{|\psi\rangle}(z,v) =\int_{\mathbb{C}^m}d\mu(w)F_U(z,w)F_{|\psi_\mathrm{core}\rangle}(w^*,v)$.
So we can regard the polynomial of derivatives $P_{\langle\phi|}(\partial_v)=\sum_{k\in I_K}\phi[k]^*\frac{\partial^k_v}{\sqrt{k!}}$ as a differential operator acting on the Bargmann function, and we can delay its application as needed in order to gain an advantage. For example, we can apply Gaussian operations to the leftover variables $z$, because it is much simpler to do so while the Bargmann function is still in Gaussian form (note that we used a similar technique above for exact computation of effective squeezing). We provide a numerical implementation of this technique in the MrMustard library \cite{mrmustard}.

\subsection{General bounds on the quality of GKP states}
\label{sec:GKP-bound}

A promising approach to creating GKP states is based on Gaussian boson sampling (GBS) devices. Such architectures have been proposed to prepare GKP states in all-photonic settings \cite{Tzitrin2020, gbs-Fukui2022, gbs-Takase2023, aghaee2025scaling} and demonstrated experimentally in Ref. \cite{Larsen2025}. In this setting, we prepare a Gaussian state, then post-select on all but one of the modes via photon number measurements, to make a non-Gaussian state. We aim to make the non-Gaussian state resemble a GKP state, with the goal to maximize the effective squeezing of the state as a figure of merit.

An example of a GBS architecture for making GKP states is the ``staircase'' family of circuits~\cite{gbs-Takase2023,aghaee2025scaling, Larsen2025}. In \protect\cref{fig:staircase-examples} we have demonstrated a two-mode example of the staircase architectures. Note that in realistic settings, we need to consider a loss channel before the number measurement. This imperfection can pose fundamental bounds on the best quality of state we can achieve by any post-selection procedure. For completeness, we have also provided an example of a four-mode staircase architecture in \protect\cref{fig:wide-staircase}, to show how one can generalize this setting to an arbitrary number of modes.

\begin{figure}[t]
\centering
\begin{quantikz}
\lstick{$\ket{0}$} & \gate{S(r)} & \qw\arrow[d] &\qw \quad\text{candidate GKP}\\
\lstick{$\ket{0}$} & \gate{S(-r)} & \qw & \gate{\mathcal L_\eta} & \meter{}
\end{quantikz}
\caption{The staircase architecture for GKP state generation. The figure shows a $2$-mode architecture. The second mode is subject to a loss channel $\mathcal L_\eta$ and is post-selected by a number-basis measurement.}
\label{fig:staircase-examples}
\end{figure}
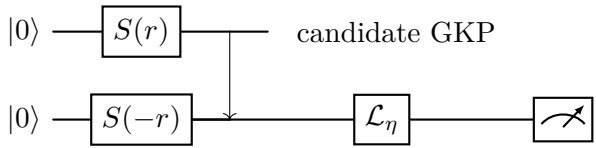

\begin{figure}[t]
\begin{quantikz}
\lstick{$\ket{\xi_1}$} & \arrow[d, "\theta_1"'] & \quad\text{candidate GKP}\\
\lstick{$\ket{\xi_2}$}  & & \arrow[d, "\theta_2"'] & & & \meter{}\\
\lstick{$\ket{\xi_{3}}$} & & &  &\arrow[d, "\theta_3"'] & \meter{} \\
\lstick{$\ket{\xi_4}$} & & & & & \meter{} 
\end{quantikz}
\caption{An example of a staircase architecture on $1+3$ modes. The state $\ket{\xi_i} = S(\xi_i) \ket{0}$ represents the squeezed vacuum state. The circuit can be made arbitrarily wider following the staircase pattern. The loss channels before the photon-number measurements are omitted, but taken into account in the theoretical analysis.}
\label{fig:wide-staircase}
\end{figure}
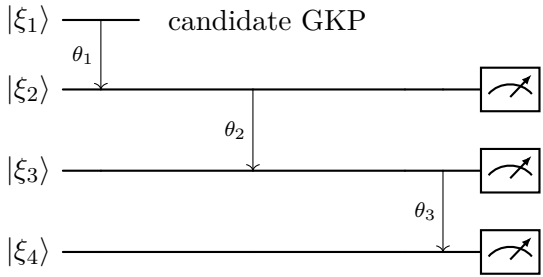

Motivated by such GKP generation protocols, we consider having a state factory that allows for the preparation of Gaussian states and has the ability to post-select using non-Gaussian measurements (note that we do not restrict to number measurements in our analysis). We then can ask: given a mixed, i.e., noisy, Gaussian state $\rho$, is there any scheme that measures all but one of the modes that can prepare a \textit{good enough} GKP state with \textit{any non-zero probability}? To fix our definition of a \textit{good enough} GKP state candidate, we employ the stabilizer expectation values of the candidate GKP state \cite{ duivenvoorden2017single, brd-Weigand2018, hastrup2021measurement}, which can be related to the effective squeezing that typically parametrizes thresholds for fault-tolerant quantum computation using GKP states \cite{menicucci2014fault, bourassa2021blueprint, larsen2021deterministic, aghaee2025scaling}. 

To fix our notation, we let
\[D_{\vec v} = \exp(i(v_2 Q - v_1 P))\]
denote the displacement operator that shifts by $\vec v = (v_1,v_2)$ in the phase space. We also recall that a Gaussian channel $\Phi$ is parametrized by $(X,Y)$ if it maps a state with covariance matrix $\Sigma$ via
\begin{align}
\Phi: \Sigma \mapsto X \Sigma X^{\mathrm{T}} + Y.
\end{align}
Later, in \protect\cref{app:Abc-of-Gaussians}, we provide formulas to transform between the phase-space and Bargmann representations of channels.

From hereon, we focus on the task of preparing a sensor state, which is a particular instance of a GKP state. We highlight that these choices are arbitrary and the results of this section are readily applicable to any choice of GKP state, by choosing the proper stabilizers. 

We now recall the formal definition of $Q$ and $P$ quadrature effective squeezings (denoted by $\sigma_q$ and $\sigma_p$ respectively) 
\begin{align}\label{eq:eff-sq}
\begin{split}
\sigma_p^2 :&= \frac{2}{\pi}\log\left(\frac{1}{|\tr[\rho D_{\sqrt{2\pi\hbar}\vec e_2}]|} \right),\\
\sigma_q^2 :&= \frac{2}{\pi}\log\left(\frac{1}{|\tr[\rho D_{\sqrt{2\pi\hbar}\vec e_1}]|} \right),
\end{split}
\end{align}
which intuitively capture the average broadening of each peak of a candidate GKP state $\rho$. To provide an explanation about this definition, we note that the GKP encoding naturally defines a periodic domain, and for a random variable $Z$ defined on a periodic domain $[0,2\pi)$ one can define variance via $\sigma^2 = -2\log |\mathbb E[e^{iZ}]|$ \cite{mardia1975statistics}. We refer to discussions around equation (6) in the Supplementary Information of \cite{aghaee2025scaling} for a more elaborate explanation. One might be interested to work with the symmetric effective squeezing \cite{aghaee2025scaling}, defined as the mean of the above
\begin{align}\label{eq:symmeric-eff-sqz}
\sigma_{\mathrm{sym}}^2 := \frac{\sigma_q^2 + \sigma_p^2}{2}.
\end{align}

Note that effective squeezing has a different definition from how we define the input squeezing levels of squeezed states used in GBS devices to prepare non-Gaussian states, so in general the two are not related. For example, preparing a two-mode squeezed vacuum state and counting one photon in the heralding mode will yield, in the heralded mode, a single photon state which has a positive effective squeezing value regardless of the squeezing level of the input state, meaning the output state effective squeezing can in general exceed the input squeezing level. The input squeezing level is, however, directly related to the probability of obtaining a given PNR outcome, rather than the effective squeezing of the output state associated with that outcome. Instead, it is the number of photons detected in a GBS device that is related to the effective squeezing of the GKP state produced. It was shown numerically that the quality of approximate GKP states can be improved with higher stellar rank (see Fig. S19 of Ref. \cite{Larsen2025}), and it was conjectured (with numerical evidence) \cite{PhysRevA.100.052301} and proven \cite{Aralov2025} that an arbitrary state of a given stellar rank $N$ can be produced using a GBS device detecting $N$ photons, which suggests GKP states with arbitrarily high quality can be produced by GBS devices detecting high enough numbers of photons.

In general, computing the best achievable effective squeezing, given a Gaussian state $\rho$, might be an extremely difficult problem. This is due to the fact that the POVM element should be a highly non-Gaussian object, and computing the projection of Gaussian states into non-Gaussian states is $\#\mathsf{P}$-hard \cite{aaronson2011computational, kruse2019detailed, hamilton2017gaussian, grier2022complexity}, let alone finding the optimal measurement. 
Instead, we aim to derive efficiently computable upper bounds on the quality of any candidate GKP state (measured with either the quadrature or the symmetric effective squeezings) that can be prepared via post-selection on a given Gaussian state $\rho$.

Interestingly, if $\rho = (\Phi\otimes \mathcal I) (\sigma)$ for some one-mode channel $\Phi$, from \cite[Eq.~(5.55)]{serafini2017quantum}, we have that for any displacement $D_{\vec v}$
\begin{align}
\tr(\rho D_{\vec{v}}) = \exp(-\frac1{2\hbar^2}\vec v^{\mathrm{T}} Y \vec v) \tr(\sigma D_{\vec u}),
\end{align}
for some other displacement $D_{\vec u}$. As a result, if $\Phi$ can be factored out from $\rho$, we can bound the effective squeezing of $\rho$ by a stabilizer $\vec v$ via
\begin{align}\label{eq:rachel-bound}
\abs{\tr(\rho D_{\vec v})} \leq \exp(-\frac1{2\hbar^2} \vec v^{\mathrm{T}} Y \vec v).
\end{align}
This implies that a physical stellar decomposition for a state $\rho$ as in \protect\cref{prop:mixed-state-case} directly leads to bounds on the quality of the best GKP state which can be obtained from that Gaussian  state by heralding.

\begin{figure}
    \centering
\begin{tikzpicture}

% Semi-circle for rho 
\draw[thick, black] 
(-1, -2.4) arc[start angle=270, end angle=90, radius=1.3];
\draw[thick, black] (-1, 0.2) -- (-1, -2.4);  % diameter
\node at (-1.5, -1.1) {\Large $\rho$};

% Connecting wires
\draw[thick, black] (-1, 0) -- (0, 0);
\draw[thick, black] (-1, -1.2) -- (0, -1.2);
\node at (-.5, -1.5) {$\vdots$};
\draw[thick, black] (-1, -2) -- (0,-2);
\node at (1, -1.5) {\small $n$ modes};

\node at (1.5, -1) {$=$};

% Semi-circle for rho' 
\draw[thick, black] 
(3.5, -2.4) arc[start angle=270, end angle=90, radius=1.3];
\draw[thick, black] (3.5, 0.2) -- (3.5, -2.4);  % diameter
\node at (3, -1) {\Large $\rho'$};

% brace:
\draw[mybrace=0.5] (.2,-1) -- (.2,-2.2);

% Box for phi
\draw[thick, black] (4.5, -.5) rectangle (5.5, 0.5);
\node at (5, 0) {\Large $\Phi$};

% Connecting wires from rho' to box
\draw[thick, black] (3.5, 0) -- (4.5, 0);
\draw[thick, black] (3.5, -1.2) -- (5.7, -1.2);
\draw[thick, black] (3.5, -2) -- (5.7, -2);
\draw[thick, black] (5.5, 0) -- (5.7, 0);
\node at (4.4, -1.5) {$\vdots$};

\node[rotate = 270] at (2,-3) {\Large $\Rightarrow$};
\node at (2, -4) {\large $|\mathrm{tr}(\rho D_{\vec v})| \leq \exp(-\frac{1}{2\hbar^2}\vec v^{\mathrm{T}} Y \vec v)$};

\end{tikzpicture}

\caption{Factoring out a channel can help bound the quality of a candidate GKP state. The expression $|\tr(\rho D(\vec v))|$ represents the effective squeezing in direction $\vec v$ (see \eqref{eq:eff-sq}). \protect\cref{prop:sdp} provides an efficient computation of the best bound.}
\end{figure}
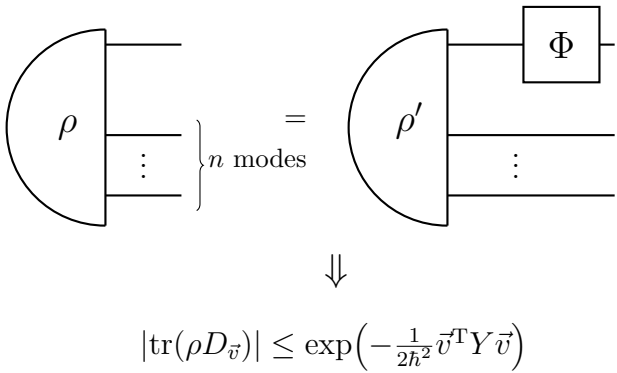

Further, when no physical stellar decomposition is available, we are looking to maximize $\vec v^{\mathrm{T}} Y \vec v$ under the condition that there exists a channel with $(X,Y)$ representation that can be factored out from $\rho$. We show that this optimization can be done efficiently by reformulating it as a semi-definite program (SDP):
\begin{proposition}\label{prop:sdp}
For any Gaussian density matrix $\rho$ with covariance matrix $\Sigma$, the following SDP
\begin{align}
\begin{split}
&\mathrm{maximize} \quad \frac{1}{\pi \hbar^2}\vec v^{\mathrm{T}} Z \vec v\\
&\mathrm{s.t.} \quad Z \geq i\frac{\hbar}{2} \Omega_1,\\
& \quad \Sigma + 
\begin{bmatrix}
0 & 0\\
0 & i\frac{\hbar}{2}\Omega_n
\end{bmatrix}
\geq  \begin{bmatrix}
Z & 0\\
0 & 0
\end{bmatrix},\\
&\quad Z\in\mathrm{Herm}(\mathbb C^{2\times 2}),
\end{split}
\end{align}
computes an upper bound on the largest achievable effective squeezing, with respect to a given displacement $D_{\vec v}$ that one might achieve by performing any (possibly non-Gaussian) post-selection that may succeed with any non-zero probability.
\end{proposition}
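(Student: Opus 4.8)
The plan is to recast the search for the best effective-squeezing bound as a convex optimization over the Gaussian channels that can be factored out of $\rho$ on the single kept mode (extending the factoring idea of \cref{prop:mixed-state-case} to the case where no fully physical decomposition exists). Throughout I order the $2\times 2$ block of $\Sigma$ belonging to the candidate GKP mode first and the $2n\times 2n$ block of the $n$ measured modes second. The starting observation is \eqref{eq:rachel-bound}: if $\rho=(\Phi\otimes\mathcal I)(\sigma)$ for a one-mode Gaussian channel $\Phi=(X,Y)$ and a physical Gaussian $\sigma$, then since $\Phi$ acts only on the kept mode it commutes through any (even non-Gaussian) POVM on the measured modes, so every heralded output is $\Phi(\sigma_\mathrm{out})$ with $\sigma_\mathrm{out}$ a genuine state; combined with the channel action on displacements this gives $|\tr(\rho_\mathrm{out}D_{\vec v})|\le\exp(-\tfrac1{2\hbar^2}\vec v^{\mathrm T}Y\vec v)$, i.e.\ $\sigma_{\vec v}^2\ge\tfrac1{\pi\hbar^2}\vec v^{\mathrm T}Y\vec v$ by \eqref{eq:eff-sq}, uniformly over all outcomes and success probabilities. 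Hence every factorable channel certifies a lower bound on the effective squeezing, and the strongest such bound is $\tfrac1{\pi\hbar^2}$ times the maximum of $\vec v^{\mathrm T}Y\vec v$ over all factorable $(X,Y)$.

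Next I would make this maximization convex. Factorability means $\Sigma=\hat X\Sigma_\sigma\hat X^{\mathrm T}+\hat Y$ with $\hat X=X\oplus\mathbb 1_{2n}$ and $\hat Y=Y\oplus 0$, for some $\Sigma_\sigma$ obeying the physicality inequality and some $(X,Y)$ obeying the complete-positivity inequality; as stated this set is non-convex because it is bilinear in $X$ and $\Sigma_\sigma$. The key step is the substitution $Z:=Y-i\tfrac\hbar2 X\Omega_1 X^{\mathrm T}$. Because $X\Omega_1X^{\mathrm T}$ is antisymmetric, $\vec v^{\mathrm T}Z\vec v=\vec v^{\mathrm T}Y\vec v$ for real $\vec v$, so the objective becomes $\tfrac1{\pi\hbar^2}\vec v^{\mathrm T}Z\vec v$ unchanged. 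I would then check that $Z$ absorbs all dependence on $X$: conjugating the physicality condition $\Sigma_\sigma+i\tfrac\hbar2\Omega_{n+1}\succeq 0$ by $\hat X$ and using $\hat X\Sigma_\sigma\hat X^{\mathrm T}=\Sigma-\hat Y$ turns the kept-mode block into $A-Z$ and reproduces exactly the second SDP constraint, while the complete-positivity inequality for $(X,Y)$ loses its $X$-dependence and collapses onto the first SDP constraint (up to the symplectic-form sign conventions of \cref{app:physicality}). This is precisely the manoeuvre that renders both constraints linear in the single Hermitian variable $Z$.

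It remains to certify that the SDP feasible set is exactly the image of the factorable channels under $Z$, in both directions. The forward inclusion is the computation just described. The reverse inclusion---reconstructing, from any Hermitian $Z$ satisfying the two constraints, an honest channel $(X,Y)$ and physical $\sigma$ with objective $\vec v^{\mathrm T}Y\vec v=\vec v^{\mathrm T}Z\vec v$---is the step I expect to be the crux, since it is what guarantees the SDP value is a valid lower bound on $\sigma_{\vec v}^2$ rather than an over-optimistic one. Concretely I would set $Y=\mathrm{Re}\,Z$ and solve $X\Omega_1X^{\mathrm T}=-\tfrac2\hbar\,\mathrm{Im}\,Z$ for $X$; since every $2\times 2$ antisymmetric matrix is a multiple of $\Omega_1$ and $X\Omega_1X^{\mathrm T}=(\det X)\Omega_1$, this fixes $\det X$, after which $\Sigma_\sigma=\hat X^{-1}(\Sigma-\hat Y)\hat X^{-\mathrm T}$ is physical by the second constraint and $(X,Y)$ is completely positive by the first. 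The obstacle is the degenerate case $\det X=0$ (i.e.\ $Z$ real), where $X$ is not invertible; there I would argue by a limiting/continuity argument that the supremum over invertible $X$ already saturates the SDP, or handle the kernel directions of $X$ separately. Assembling these pieces, the uniform bound of the first paragraph applied to the optimizer $Z^\star$ shows that no post-selection can push $\sigma_{\vec v}^2$ below $\tfrac1{\pi\hbar^2}\vec v^{\mathrm T}Z^\star\vec v$, which is the claimed upper bound on the achievable quality; convexity of the two constraints certifies it as a genuine SDP, solvable in polynomial time.
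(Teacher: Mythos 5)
Your proposal follows essentially the same route as the paper's own proof in \cref{app:pf-of-sdp}: factor a one-mode Gaussian channel out of $\rho$, impose the CPTP condition on $(X,Y)$ and the phase-space physicality condition on the residual state, exploit $X\Omega_1 X^{\mathrm{T}} = \det(X)\,\Omega_1$ for real $2\times 2$ matrices, and absorb the pair $(Y,\det X)$ into a single Hermitian variable $Z$ so that both constraints become linear (your sign convention for $Z$ differs from the paper's but, as you note, this only amounts to an objective-preserving conjugation of the feasible set). The one place you go beyond the paper is in explicitly addressing the converse (reconstruction) direction---recovering a physical channel and state from a feasible $Z$, including the degenerate $\det X=0$ case via a limiting argument---which is indeed needed for the SDP value to certify a valid bound and which the paper's proof leaves implicit; your proposed perturbative fix (e.g.\ shrinking $Y$ to $(1-\epsilon)Y$ while taking $\det X=\epsilon$) does close that case, so the proposal is correct.
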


The proof is given in \protect\cref{app:pf-of-sdp}. In \protect\cref{app:dual-sdp}, we discuss how other figures of merit can also be directly bounded by a slight alteration of the SDP.  Furthermore, recall that the value of an SDP is achievable if and only if it is equal to the value of its dual formulation \cite{boyd2004convex}. We provide the dual problem in \protect\cref{app:dual-sdp} as well. We highlight that this technique can be extended to obtain an SDP computing bounds on the quality of multi-mode GKP encodings, as presented in \protect\cref{app:sdp-extension}.

Although the computation of the best achievable quality could be very hard, we have provided an efficient algorithm for non-trivially bounding this number. It remains open how tight this bound is, or if we can further close this gap with better algorithms. We study the bounds that can be imposed by our method to the staircase architecture, introduced in \protect\cref{fig:staircase-examples}.

\subsection{Staircase architectures}

In this section, we study the staircase GBS architecture, and provide bounds on the symmetric effective squeezing of the GKP states produced by the device, as a simple example demonstrating the power of the stellar decomposition of Gaussian states. Recall the two-mode circuit shown in \protect\cref{fig:staircase-examples}. Since it only has two modes, i.e., $m=n=1$, we can use either the stellar decomposition (\protect\cref{prop:m>=n}), or the optimal solution provided by the SDP (\protect\cref{prop:sdp}) to obtain bounds on the potential best GKP state which can be obtained using this architecture. We have provided both bounds as a function of the loss parameter of the circuit in \protect\cref{fig:example-circuit}. The circuit parameters are chosen to achieve high symmetric effective squeezing based on the study in \cite{aghaee2025scaling}. 

Interestingly, the decompositions based on \protect\cref{prop:m>=n} (stellar decomposition) and \protect\cref{prop:sdp} provide the same result. Note that \protect\cref{prop:sdp} is guaranteed to provide the optimal bound. We attribute this coincidence to the fact that the stellar decomposition factors out all of the non-unitary evolution and puts it on the first mode. Note that \protect\cref{prop:sdp} is applicable for arbitrary architectures, while the stellar decomposition's application is limited to cases identified by \protect\cref{prop:dm-physicality}.

\begin{figure}
\centering
\includegraphics[width = \columnwidth]{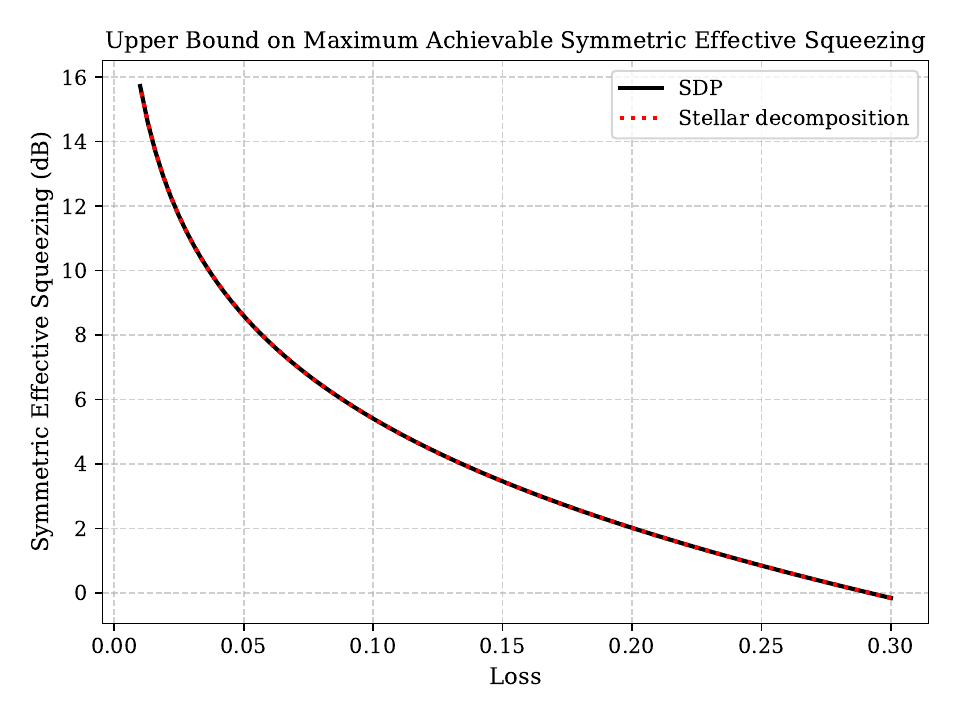}
\caption{The upper bound on staircase architectures. Here, we interfere 15 dB squeezed vacuum states at a beam-splitter with angle $\theta = 0.4$ \cite{aghaee2025scaling} for the circuit shown in \protect\cref{fig:staircase-examples}. Using the bounds developed in our work, we can upper bound the maximum possible symmetric effective squeezing of this proposed architecture. We can see that the result we get from the stellar decomposition, which factors out a channel on a pure state, coincides with the bound given by the SDP. As shown in \protect\cref{prop:staircase-values}, the same bound also applies to larger staircases.}
\label{fig:example-circuit}
\end{figure}

Next, we consider multimode staircase architectures, with the four-mode instance depicted in \protect\cref{fig:wide-staircase}. Such GBS architectures have been studied as probabilistic sources of high-quality GKP states \cite{gbs-Takase2023, fukui2022efficient, aghaee2025scaling}. We determine how the bound given by the SDP from \protect\cref{prop:sdp} changes as we increase the number of modes. Interestingly, the bounds found from the two-mode case are identical to any bound found for the multimode cases. To formulate this statement concretely, let the output of the SDP on an architecture with squeezings $\xi_1,\xi_2,\cdots,\xi_M$, beam-splitter angles $\theta_1,\cdots,\theta_{M-1}$, and loss $\eta$ be denoted by 
\begin{align}
B((\xi_1,\xi_2,\cdots,\xi_M), (\theta_1,\cdots,\theta_{M-1}), \eta),
\end{align}
noticing that the length of the input determines the size of the staircase architecture. Moreover, let the feasible set of the SDP be denoted by
\begin{align}
\mathcal Z((\xi_1,\xi_2,\cdots,\xi_M), (\theta_1,\cdots,\theta_{M-1}), \eta).
\end{align}
We use the shorthand notation $\xi_{1:M}=(\xi_1,\xi_2,\cdots,\xi_M)$ and $\theta_{1:M-1}=(\theta_1,\cdots,\theta_{M-1})$ for brevity.
We then have:
\begin{proposition}\label{prop:staircase-values}
For any $M\geq 2$,
\begin{align}
\mathcal Z(\xi_{1:M}, \theta_{1:M-1}, \eta) = \mathcal Z(\xi_{1:2}, \theta_1, \eta),
\end{align}
and therefore,
\begin{align}
B(\xi_{1:M}, \theta_{1:M-1}, \eta) = B(\xi_{1:2},\theta_1, \eta).
\end{align}
\end{proposition}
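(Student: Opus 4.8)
The plan is to show that the SDP of \cref{prop:sdp} effectively ``sees'' only a two-mode reduced object, and that this object is unchanged when extra stages are appended to the staircase. Since both the objective $\tfrac{1}{\pi\hbar^2}\vec v^{\mathrm T}Z\vec v$ and the constraint $Z\geq i\tfrac{\hbar}{2}\Omega_1$ involve only the mode-$1$ (candidate-GKP) block $Z$ and a fixed stabilizer $\vec v$, it suffices to prove the equality of feasible sets $\mathcal Z(\xi_{1:M},\theta_{1:M-1},\eta)=\mathcal Z(\xi_{1:2},\theta_1,\eta)$; the equality of the optimal values $B$ then follows immediately. Writing $n=M-1$ and collecting the remaining matrix inequality into
\[
C(Z):=\Sigma+\begin{bmatrix}0&0\\0&i\tfrac{\hbar}{2}\Omega_n\end{bmatrix}-\begin{bmatrix}Z&0\\0&0\end{bmatrix}\geq 0,
\]
the feasible set is $\mathcal Z=\{Z\in\mathrm{Herm}(\mathbb C^{2\times2}):Z\geq i\tfrac{\hbar}{2}\Omega_1,\ C(Z)\geq0\}$, so everything reduces to understanding how $C(Z)$ depends on the circuit.

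The first ingredient I would establish is a symplectic-invariance lemma for the heralding block. If $T$ is any symplectic transformation acting on the $n$ heralding modes (hence trivially on mode $1$), then under $\Sigma\mapsto(\mathbb 1\oplus T)\Sigma(\mathbb 1\oplus T)^{\mathrm T}$ one has $C(Z)\mapsto(\mathbb 1\oplus T)\,C(Z)\,(\mathbb 1\oplus T)^{\mathrm T}$: the shift survives because $T\Omega_n T^{\mathrm T}=\Omega_n$, and the $Z$-block is untouched because $T$ acts as the identity on mode $1$. Being a congruence, this preserves the constraint $C(Z)\geq0$, so $\mathcal Z$ is invariant under any passive (orthogonal-symplectic) beam-splitter network $V$ on $N$. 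I would then use that mode $1$ is ``finished'' after $\theta_1$: in the staircase, the only operations touching $N$ after $\theta_1$ are the passive beam-splitters $\theta_2,\dots,\theta_{M-1}$ (collectively $V$) followed by the per-mode pure-loss channels $\mathcal L_\eta$, all with the common transmissivity $\eta$.

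The second ingredient is that a uniform loss channel commutes with passive linear optics. With $(X,Y)=(\sqrt{\eta}\,\mathbb 1,(1-\eta)\tfrac{\hbar}{2}\mathbb 1)$ and $V$ orthogonal-symplectic ($VV^{\mathrm T}=\mathbb 1$), one checks $\eta\,V\Sigma V^{\mathrm T}+(1-\eta)\tfrac{\hbar}{2}\mathbb 1=V\bigl(\eta\Sigma+(1-\eta)\tfrac{\hbar}{2}\mathbb 1\bigr)V^{\mathrm T}$, i.e.\ $\mathcal L_\eta^{\otimes n}\circ V=V\circ\mathcal L_\eta^{\otimes n}$. Hence the $M$-mode state is $\rho=(\mathcal I\otimes V)\,\rho'$ with $\rho'=(\mathcal I\otimes\mathcal L_\eta^{\otimes n})\,\sigma_{\theta_1}$, where $\sigma_{\theta_1}$ is the state right after the single beam-splitter $\theta_1$. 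By the invariance lemma, $\rho$ and $\rho'$ share the same feasible set. In $\rho'$ the modes $3,\dots,M$ are still uncorrelated (loss-applied) squeezed states, so $C(Z)$ is block-diagonal: a $4\times4$ block on modes $1,2$ and isolated blocks $\Sigma_{jj}+i\tfrac{\hbar}{2}\Omega_1\geq0$ for $j\geq3$, automatically satisfied by physicality of the reduced states. Thus $C(Z)\geq0$ is equivalent to positivity of $\begin{bmatrix}\Sigma_{11}-Z&\Sigma_{12}\\\Sigma_{21}&\Sigma_{22}+i\tfrac{\hbar}{2}\Omega_1\end{bmatrix}$ alone, whose data is exactly the covariance matrix of the two-mode staircase with parameters $(\xi_1,\xi_2,\theta_1,\eta)$. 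This yields $\mathcal Z(\xi_{1:M},\theta_{1:M-1},\eta)=\mathcal Z(\xi_{1:2},\theta_1,\eta)$ and therefore the equality of the bounds.

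The step I expect to be the main obstacle — really the conceptual heart — is the combination of the two invariances: one must justify transporting every loss channel on the heralding modes to ``just after $\theta_1$'' without altering the state, which hinges on the losses being uniform and on the inter-stage transformations being passive rather than active (a squeezer would break $VV^{\mathrm T}=\mathbb 1$ and the commutation would fail). A secondary technical point is that when $\Sigma_{RR}+i\tfrac{\hbar}{2}\Omega_n$ is rank-deficient one cannot form an explicit Schur complement; I would therefore phrase the entire argument at the level of the congruence-invariant constraint $C(Z)\geq0$, as above, which sidesteps any matrix inversion.
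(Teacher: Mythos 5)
Your proof is correct, and its overall strategy is the same as the paper's: rewrite the $M$-mode staircase as a unitary acting only on the heralding modes applied to (two-mode staircase) $\otimes$ (decoupled lossy squeezed states), using the fact that \emph{uniform} loss commutes with the passive beam-splitter network, and then argue that such a local unitary cannot change the feasible set. The difference is in how the invariance is established. The paper proves an operational lemma (\protect\cref{lem:local-unitary-gives-same-channel}): a channel $\Phi$ on $P$ can be factored out of $\rho_1$ iff it can be factored out of $(\mathbb 1\otimes U)\rho_1(\mathbb 1\otimes U^\dagger)$, for an \emph{arbitrary} (even non-Gaussian) unitary $U$ on $Q$, and then peels off beam-splitters one at a time by induction. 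You instead work directly at the level of the SDP constraint, showing that a symplectic $T$ on the heralding modes acts on $C(Z)$ by congruence with $\mathbb 1\oplus T$ (using $T\Omega_nT^{\mathrm T}=\Omega_n$ and that the $Z$-block is untouched), which preserves positive semidefiniteness; this is a perfectly valid, more matrix-theoretic realization of the same idea, restricted to Gaussian unitaries, which is all that is needed here. Your version also makes explicit two points the paper leaves implicit: the one-shot collection of $\theta_2,\dots,\theta_{M-1}$ into a single passive $V$, and the final block-diagonal reduction showing that the decoupled modes contribute only constraints $\Sigma_{jj}+i\tfrac{\hbar}{2}\Omega_1\geq 0$ that hold automatically by physicality, so the $2M$-mode feasibility condition collapses exactly to the $4\times 4$ two-mode condition. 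What the paper's formulation buys in exchange is generality: its factoring lemma applies verbatim to non-Gaussian local unitaries and to any bound derived from channel factoring, not just to this particular SDP; what yours buys is self-containedness at the level of the semidefinite constraint, avoiding any Schur complements or pseudo-inverses (hence no rank-deficiency caveats) and directly certifying equality of the feasible sets $\mathcal Z$.
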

The proof is provided in \protect\cref{app:pf-of-staircase}. This result allows us to put universal bounds on staircase designs with arbitrarily many modes. For example, with maximum input squeezing of $15$ dB, and $\theta_1 = 0.4$ (parameters reported in \cite{aghaee2025scaling}) no staircase architecture can get more than $10$ dB symmetric effective squeezing (the approximate threshold obtained in \cite{aghaee2025scaling} for fault tolerance) by post-selection on any PNR pattern if there is more than $3\%$ loss in the path to the PNR detectors.

\section{Conclusions}
In this work we have introduced the stellar decomposition of Gaussian quantum states, which leads to an exact and compact description of states heralded by photon-counting measurements, providing a versatile tool for both theoretical analysis and experimental simulation of non-Gaussian photonic circuits.

For pure bipartite Gaussian states, we proved that a local Gaussian unitary can always be factored out such that the residual core state has a finite Fock support on the unmeasured mode determined by the measurement outcome. For mixed states, we derived necessary and sufficient conditions for the existence of a physical decomposition, and in cases where such a decomposition is impossible, we formulated a semidefinite program to factor out as much of a Gaussian channel as possible. 

A particularly compelling application of our approach is in the characterization and simulation of GKP state preparation protocols. By relating the stellar decomposition to the effective squeezing of candidate states, we derived rigorous bounds on the quality of GKP states achievable via heralding, even in realistic scenarios with loss. These results not only simplify the simulation of complex photonic circuits, but also set practical limitations on state quality that are critical for designing fault-tolerant quantum information processors.

Looking ahead, our work opens several exciting avenues. Extensions to broader classes of non-Gaussian operations, optimized measurement strategies, and deeper investigations into the interplay between Gaussian noise extraction and non-Gaussian resource quality are natural next steps. We anticipate that the stellar decomposition will prove to be a valuable tool in bridging the gap between idealized theoretical models and realistic experimental implementations in quantum optics.

\section{Acknowledgements}
We acknowledge fruitful discussions with Rachel Chadwick and Jacob Hastrup. We thank Mahnaz Jafarzadeh and Gana\"el Roeland for providing comments on the manuscript. 

\bibliography{ref}

\onecolumngrid
\newpage
\appendix

\section{Bargmann representation}\label{app:bargmann}

The Bargmann function (also known as the stellar function) arises when we use the Bargmann basis of rescaled coherent states, e.g.
\begin{align}
    |\psi\rangle \rightarrow F_{|\psi\rangle}(z) = e^{\frac{|z|^2}{2}}\langle z^*|\psi\rangle,\quad z\in\mathbb{C}.
\end{align}
In this representation, the creation operator acts as multiplication by $z$, and the annihilation operator acts as a derivative with respect to $z$.
The Bargmann function can be extended to other objects such as mixed states, unitaries, channels, Kraus operators, etc. We can do so by using the Bargmann basis in all the Hilbert spaces spanned by a given object. For instance, the Bargmann function of a unitary operator $U$ is defined as:
\begin{align}\label{eq:Barg_U}
    F_{U}(z,w) = e^{\frac{|z|^2+|w|^2}{2}}\langle z^*|U|w\rangle.
\end{align}
The presence or absence of conjugates in the variables is purely convention, and it is typically adjusted such that the Bargmann $F$ function does not depend on conjugated variables.
The exponential rescaling makes the Bargmann function holomorphic, which leads to a series of benefits in terms of integrability, derivatives and relationship to other representations, as we shall see below.

In general the Bargmann function belongs to the Segal-Bargmann Hilbert space (here $z\in\mathbb{C}^n$):
\begin{align}
\mathcal{H}L^2(\mathbb{C}^n, \mu_t)=\biggl\{{F\in\mathcal{H}(\mathbb{C}^n)}\bigg|\int_{\mathbb{C}^n}|F(z)|^2\mu_t(z)d^nz < \infty\biggr\},
\end{align}
where $\mathcal{H}(\mathbb{C}^n)$ is the space of holomorphic functions from $\mathbb{C}^n$ to $\mathbb{C}$, the measure is $\mu_t(z) = \frac{1}{(\pi t)^n}e^{-\norm{z}^2/t}$ and $d^nz$ is shorthand for $d^n\Re(z)d^n\Im(z)$. We will choose $t=1$, but the choice $t=1/\pi$ is also common.
The inner product between two Bargmann functions belonging to the same Segal--Bargmann Hilbert space is given by
\begin{align}
    \langle F,G\rangle = \int_{\mathbb{C}^{n}}d^nz\mu_t(z)F(z)^*G(z).
\end{align}
We are also interested in inner products between functions defined on different Hilbert spaces, often over only a subset of the variables. As an example, consider an $m$-mode unitary $U$ acting on the first $m$ modes of an $(m+n)$-mode ket $|\psi\rangle$. The inner product then involves the integral of the product of the two corresponding Bargmann functions and the result is the Bargmann function of the state $(U\otimes I)|\psi\rangle$:
\begin{align}\label{eq:Upsi}
    F_{(U\otimes I)|\psi\rangle}(z,v) = \int_{\mathbb{C}^{m}}d\mu(w)F_U(z,w^*)F_{|\psi\rangle}(w, v),
\end{align}
with $z\in\mathbb{C}^m$ and $v\in\mathbb{C}^{n}$. We will see a few explicit examples below, and derive close-form expressions when working with Gaussian objects in particular.

\subsection{Conventions for ordering variables}\label{app:ordering-conventions}

Recall that the Bargmann function is defined using a basis of rescaled coherent states and that when the object it represents is defined on a tensor product of Hilbert spaces and/or dual Hilbert spaces, the Bargmann function has multiple variables. Therefore, one needs to specify a canonical ordering of these variables. Here we introduce the main conventions used throughout the manuscript. We order the variables in ``\emph{mode-wise}'' or ``\emph{type-wise}'' order, depending on whether we group the variables by mode or by type, i.e.~all types of variable for a given mode before moving on to the next mode, or all the variables of a given type in ascending mode order before moving on to the next type. We now go through examples for kets, density matrices, unitaries and channels, to see the orderings in action.

The Bargmann function of an $n$-mode pure state $\ket\psi$ is 
\begin{align}
    F_{\ket\psi}(z_1,\dots,z_n) = e^{\frac{|z_1|^2+\dots+|z_n|^2}{2}}\langle z_1^\ast,\dots,z_n^\ast | \psi\rangle, \quad \text{(type-wise or mode-wise order)}.
\end{align}
In this case there is only one type of variable and therefore the two orderings coincide.

For an $n$-mode density matrix $\rho$, we have $n$ variables on the ket side and another $n$ on the bra side, giving rise to two orderings. In the type-wise ordering we have
\begin{align}\label{eq:dm-bbkk}
F_\rho(w_1,\dots,w_n,z_1,\dots,z_n) = e^{\frac{\|z\|^2+\|w\|^2}{2}}\bra{z_1^*,\dots,z_n^*} \rho \ket{w_1,\dots,w_n}, \quad \text{(type-wise order)}
\end{align}
whereas in the mode-wise order we have
\begin{align}\label{eq:dm-bkbk}
F_\rho(w_1,z_1,\dots,w_n,z_n) = e^{\frac{\|z\|^2+\|w\|^2}{2}}\bra{z_1^*,\dots,z_n^*} \rho \ket{w_1,\dots,w_n}, \quad \text{(mode-wise order)}.
\end{align}

For a unitary $U$, in the type-wise order we have
\begin{align}
F_U(z_1,\dots,z_n,z'_1,\dots,z'_n) = e^{\frac{\|z\|^2+\|z'\|^2}{2}}\bra{z_1^*,\dots,z_n^*} U \ket{z'_1,\dots,z'_n}, \quad \text{(type-wise order)},
\end{align}
whereas in the mode-wise order we have
\begin{align}\label{eq:u-bkbk}
F_U(z_1,z'_1,\dots,z_n,z'_n) = e^{\frac{\|z\|^2+\|z'\|^2}{2}}\bra{z_1^*,\dots,z_n^*} U \ket{z'_1,\dots,z'_n}, \quad \text{(mode-wise order)}.
\end{align}
Note that we use $z$ and $z'$ rather than $z$ and $w$. This is a convention to indicate that a unitary acts on kets and returns other kets, so we put a prime on ``input'' variables and no prime of ``output'' variables.

Lastly, for an $n$-to-$n$ mode channel $\Phi$ we have four variables per mode: a bra and a ket variable at the input and a bra and a ket variable at the output. So we define
\begin{align}
\begin{split}
&F_\Phi(w_1,\dots,w_n,w_1',\dots,w_n',z_1,\dots,z_n,z_1',\dots,z_n') =\\
&\quad\quad e^{\frac{\|w\|^2 + \|z\|^2+\|w'\|^2 + \|z'\|^2}{2}} \bra{z^*}\Phi\left( \ket{z'} \bra{w'^\ast}\right) \ket{w},\quad \text{(type-wise order)},
\end{split}
\end{align}
whereas in the mode-wise order we have
\begin{align}
F_\Phi(w_1,w_1',z_1,z_1',\dots,w_n,w_n',z_n,z_n') = e^{\frac{\|w\|^2 + \|z\|^2+\|w'\|^2 + \|z'\|^2}{2}} \bra{z^\ast}\Phi\left( \ket{z'} \bra{w'^\ast}\right) \ket{w},\quad \text{(mode-wise order)}.
\end{align}
At times for channels we may want to introduce yet a different ordering, where we write all the output variables before the input ones: $F_\Phi(w_1,\dots,w_n,z_1,\dots,z_n,w_1',\dots,w_n',z_1',\dots,z_n')$, which we will refer to as the output-input order. Note that for a unitary, the output-input order coincides with the type-wise order.

So far it may look like specifying these orderings is just a pedantry, but in the next section we introduce Gaussian objects, whose parametrization does depend on the chosen ordering, and some operations and proofs in this paper may require one to switch between different orderings.

\subsection{Gaussian stellar functions and the Abc parametrization}\label{app:Abc-of-Gaussians}
A generic multivariable Gaussian complex function is the exponential of a quadratic polynomial:
\begin{align}\label{app:exp_abc}
    F(z) &= c \exp\left(\frac{1}{2}z^{\mathrm{T}} Az+z^{\mathrm{T}}b\right)\quad z\in\mathbb{C}^n,
\end{align}
where $A$ is a complex symmetric matrix (not necessarily hermitian) with eigenvalues in the unit disk, $b$ is a complex vector and $c$ is a complex scalar. We refer to $(A,b,c)$ as the ``Abc'' parametrization.

As an example to show how the orderings change the Abc parametrization, consider a two-mode mixed Gaussian state, which is defined by an $A$ matrix in type-wise order as
\begin{align}\label{eq:example-type-wise}
A_\rho = 
\begin{blockarray}{c cccc}
  & \substack{w_1 (\text{bra 1})} 
  & \substack{w_2 (\text{bra 2})} 
  & \substack{z_1 (\text{ket 1})} 
  & \substack{z_2 (\text{ket 2})} \\ 
\begin{block}{c(cccc)}
\substack{w_1 (\text{bra 1})}   & a_{11} & a_{12} & a_{13} & a_{14} \\
\substack{w_2 (\text{bra 2})}   & a_{21} & a_{22} & a_{23} & a_{24} \\
\substack{z_1 (\text{ket 1})}   & a_{31} & a_{32} & a_{33} & a_{34} \\
\substack{z_2 (\text{ket 2})}   & a_{41} & a_{42} & a_{43} & a_{44} \\
\end{block}
\end{blockarray},
\end{align}
while the same state in mode-wise order would have the two central rows and columns swapped:
\begin{align}\label{eq:example-mode-wise}
A_\rho = 
\begin{blockarray}{c cccc}
  & \substack{w_1 (\text{bra 1})} 
  & \substack{z_1 (\text{ket 1})} 
  & \substack{w_2 (\text{bra 2})} 
  & \substack{z_2 (\text{ket 2})} \\ 
\begin{block}{c(cccc)}
\substack{w_1 (\text{bra 1})}   & a_{11} & a_{13} & a_{12} & a_{14} \\
\substack{z_1 (\text{ket 1})}   & a_{31} & a_{33} & a_{32} & a_{34} \\
\substack{w_2 (\text{bra 2})}   & a_{21} & a_{23} & a_{22} & a_{24} \\
\substack{z_2 (\text{ket 2})}   & a_{41} & a_{43} & a_{42} & a_{44} \\
\end{block}
\end{blockarray}.
\end{align}

Associated with any Gaussian pure state, mixed state, unitary, channel, Kraus operator, and any object that can be obtained by compositions of such elements is a Gaussian stellar function. Such a unified parametrization grants complete flexibility when working with Gaussian stellar functions because it requires only one formula for the inner product that can then be used for unitary evolution, Gaussian measurements, tracing operations, and so on.

Note that for several kinds of objects, $c$ is not independent of $A$ and $b$. For example, for pure states we can require the norm to be 1, for mixed states we can require the trace to be 1, for unitaries we can require the product with the inverse to be the identity, and $c$ will follow from $A$ and $b$ via special cases of Eq.~\eqref{eq:gauss_intc}. In this paper we will omit $c$ if it can be derived from $A$ and $b$ and by knowing the nature of the object they parametrize.

\subsection{Relation to other representations}\label{app:representations}
In this subsection we lay out the connections between the Bargmann representation and the Fock, quadrature and phase-space representations.

\subsubsection{Fock representation}
The Bargmann and Fock representations are intimately connected, as the Taylor series of the Bargmann function (rescaled by a square-root factorial) consists precisely in the Fock amplitudes. For example,
\begin{align}
    |\psi\rangle=\sum_{n=0}^\infty\psi_n|n\rangle \rightarrow F_{|\psi\rangle}(z) = \sum_{n=0}^\infty \psi_n\frac{z^n}{\sqrt{n!}},
\end{align}
In particular, the Bargmann function of a number state is a monomial: $F_{|n\rangle}(z) = e^{|z|^2/2}\langle z^*|n\rangle = \frac{z^n}{\sqrt{n!}}$.
This implies that the Fock amplitudes can be computed in two ways: using derivatives
\begin{align}
    \langle n|\psi\rangle = \frac{\partial^n_z}{\sqrt{n!}}F_{|\psi\rangle}(z)|_{z=0},
\end{align}
or using integrals
\begin{align}
    \langle n|\psi\rangle = \int_\mathbb{C}\frac{d^2z}{\pi}e^{-|z|^2}\frac{z^n}{\sqrt{n!}}F_{|\psi\rangle}(z).
\end{align}

In the case of Gaussian objects their Bargmann function is the exponential of a quadratic polynomial, and so the rescaled Taylor series $G_k$ (with $k\in\mathbb{N}^N$) of an $N$-variables Gaussian object parametrized by $A$, $b$, $c$ as in Eq.~\eqref{app:exp_abc} can be found recursively \cite{miatto2019recursive, miatto2020fast, yao2024riemannian}:
\begin{align}\label{eq:recrel}
G_{k+1_i} = \frac{1}{\sqrt{k_i+1}}\left[b_iG_k + \sum_{j=1}^N\sqrt{k_j}A_{ij}G_{k-1_j}\right],
\end{align}
or using the more numerically stable (albeit slower) version
\begin{align}\label{eq:recrel_stable}
G_{k} = \frac{1}{|\{i|k_i>0\}|}\sum_{i|k_i>0}\frac{1}{\sqrt{k_i}}\left[b_iG_{k-1_i} + \sum_{j=1}^N\sqrt{k_j-
\delta_{ij}}A_{ij}G_{k-1_j-1_i}\right],
\end{align}
with $G_0 = c$ and where $1_i$ is the vector of zeros with a 1 at index $i$. Note that this recursive formula is independent of the nature of the object, as long as its Bargmann function has the functional form in Eq.~\eqref{eq:exp_quad_poly}.

It is possible to apply the recurrence relation on a subset of lattice points, and produce a subset of Fock amplitudes. This can be more efficient than calculating all the amplitudes up to a certain cutoff on all the modes \cite{de2023quadratic, yao2021fast}.

\subsubsection{Quadrature representation}
There are two natural ways to go from the Bargmann representation to the quadrature representation.
The first consists in the observation that the integral of the Bargmann function along an axis yields the wave function (using $z = x+ip$):
\begin{align}
\begin{split}
\sqrt{2}(\pi^3\hbar)^{1/4}e^{\frac1{4\hbar}x^2}\psi\left(\frac{x}{\sqrt{2\hbar}}\right) &= \int_\mathbb{R}dp\,e^{-\frac{1}{2}p^2}F_{|\psi\rangle}(x+ip),\\
\sqrt{2}(\pi^3\hbar)^{1/4}e^{\frac{1}{4\hbar}p^2}\tilde\psi\left(-\frac{p}{\sqrt{2\hbar}}\right) &= \int_\mathbb{R}dx\,e^{-\frac{1}{2}x^2}F_{|\psi\rangle}(x+ip),
\end{split}
\end{align}
and this holds for any intermediate direction, in which case one obtains the wavefunction on the quadrature orthogonal to the integration axis.

Alternatively, one can use an integral transform with an appropriate kernel written in the Bargmann representation. This allows us to use yet again the Abc formulation. For example an $n$-mode ket with Bargmann function $F_{|\psi\rangle}(z)$ where $z\in\mathbb{C}^n$ has a wavefunction along the quadrature $\lambda\in\mathbb{R}^n$ at angle $\phi$ with the position axis given by:
\begin{align}
    \psi_\phi(\lambda) = \int_{\mathbb{C}^n}\frac{d^{2n}z}{\pi^n}e^{-\|z\|^2}K_\phi(\lambda,z)F_{|\psi\rangle}(z),
\end{align}
where the kernel is the exponential of a quadratic polynomial and therefore admits an Abc parametrization: \begin{align}
    K_\phi(\lambda,z) = \frac{1}{(\pi\hbar)^{n/4}}e^{\frac{1}{2}(\begin{smallmatrix}
    \lambda& z
\end{smallmatrix})A_\phi\left(\begin{smallmatrix}
    \lambda\\ z
\end{smallmatrix}\right)}.\end{align}
The Abc parametrization of the kernel is
\begin{align}
\begin{split}
A_\phi&=\begin{bmatrix}
-\frac{\mathbb{1}}{\hbar} & e^{-i\phi}\sqrt{\frac{2}{\hbar}}\mathbb{1}\\
e^{-i\phi}\sqrt{\frac{2}{\hbar}}\mathbb{1} & -e^{-2i\phi}\mathbb{1}
\end{bmatrix},\\
b_\phi &= 0,\\
c_\phi &= \frac{1}{(\pi\hbar)^{n/4}}.
\end{split}
\end{align}
An advantage of this formulation is that we can decide which subset of variables to transform to other representations. For example, one could start with a two-mode ket $F_{|\psi\rangle}(z_1,z_2)$, apply the integral transform with $\phi=0$ to the first mode obtaining a new function $F'_{|\psi\rangle}(x_1,z_2)$, and then transform the second mode to the Fock representation $F''(x_1|k)=\frac1{\sqrt{k!}}\partial_{z_2}^kF'_{|\psi\rangle}(x_1,z_2)|_{z_2=0}$ yielding position wave functions on the first mode conditional on projecting the second mode onto number states $|k\rangle$.

\subsubsection{Phase-space representations}
Analogously to the case of the wavefunction, one can transform the Bargmann function to the phase-space functions (Wigner, Husimi, Glauber, and their characteristic functions) by integrating against kernels that are themselves exponentials of a quadratic polynomial---therefore admitting an Abc parametrization. This means that the phase-space functions of Gaussian objects can be computed with a Gaussian integral and expressed as exponentials of a quadratic polynomial. However, note that contrary to the Bargmann function, the phase-space functions, being real and non-constant, are not holomorphic.

The $s$-parametrized kernel $\Delta_s$ that underpins the transformation from the Bargmann representation to the phase-space functions is known as the Stratonovich-Weyl kernel \cite{brif1998general}. The map from $\rho$ to the $s$-parametrized phase-space functions is then:
\begin{align}
    W_s(z^*,z) = \mathrm{Tr}[\rho\,\Delta_s(z^*,z)] = \int_{\mathbb{C}^{2n}}\frac{d^{2n}y}{\pi^{n}}\frac{d^{2n}w}{\pi^{n}}e^{-\|y\|^2-\|w\|^2}F_\rho(w,y)F_{\Delta_s}(z,z^\ast, w,y)^*,
\end{align}
such that for $s=0$ we obtain the Wigner function, for $s=1$ we obtain the Glauber $P$ function and for $s=-1$ we obtain the Husimi $Q$ function. The Abc triple of the Stratonovich-Weyl kernel in output-input order is:
\begin{align}
\begin{split}
A_{\Delta_{s}} &= \frac{2}{s-1}\begin{bmatrix}
    X & -\mathbb 1\\
    -\mathbb 1 & \frac{s+1}{2}X
\end{bmatrix},\\
b_{\Delta_{s}}&=0,\\
c_{\Delta_{s}}&=\frac{2}{\pi^n\abs{s-1}^n},
\end{split}
\end{align}
where 
\begin{align*}
X =\begin{bmatrix}
    0&\mathbb{1}\\\mathbb{1}&0
\end{bmatrix}.
\end{align*}
We highlight that the Glauber $P$ function of states can be even more singular than a delta function, and therefore the appearance of $|s-1|$ in the denominator of $A_{\Delta_s}$ and $c_{\Delta_s}$ is not surprising. Note that in order to obtain the usual Wigner function defined over the position and momentum coordinates we need a further transformation of the coordinates from $(z^*,z)$ to $(q,p)$, and a rescaling by $\sqrt{2\hbar}$ per mode. In other words, we have
\begin{align}
W^{PS}_s(x,p) = \frac{1}{(2\hbar)^n}W_s(z^\ast,z)
\end{align}
with $z = \frac1{\sqrt{2\hbar}}(x+ip)$, where the superscript `PS' is shorthand for `phase space.' For instance, the function $W^{PS}_0$ is the standard phase-space Wigner function.

Closely related is the complex Fourier transform version which maps from the Bargmann representation to the $s$-parametrized characteristic functions:
\begin{align}
    \chi_s(z^*,z) = \mathrm{Tr}[\rho \,T_s(z^*,z)] = \int_{\mathbb C^{2n}} \frac{\mathrm d^{2n} y}{\pi^n}\frac{\mathrm d^{2n} w}{\pi^n} e^{-\norm{y}^2 - \norm{w}^2} F_\rho(w,y) F_{T_s}(z,z^\ast,w,y)^\ast,
\end{align}
whose Abc parametrization is given by
\begin{align}
\begin{split}
    A_{T_{s}} &= \begin{bmatrix}
        \frac{s-1}{2}X&\Omega^T\\
        \Omega&X
    \end{bmatrix},\\
    b_{T_{s}} &=0,\\
    c_{T_{s}} &=1,
\end{split}
\end{align}
where $\Omega = \left(\begin{smallmatrix}
    0&\mathbb{1}\\-\mathbb{1}&0
\end{smallmatrix}\right)$.

Note that these integrals can also be mathematically carried out with $s$ taking non-integer values.

\subsection{Stellar function of Gaussian objects}
Let us now see a few examples of the Abc parametrization of common Gaussian objects.

\subsubsection{Mixed states}
The Abc parametrization of an $n$-mode Gaussian state $\rho$ is related to the covariance matrix $\Sigma$ and mean vector $\mu$ in the $q/p$ basis via the Husimi covariance $Q$ and means $\beta$. These are defined as $Q=W\left(\frac{1}{\hbar}\Sigma+\frac{\mathbb{1}}{2}\right)W^\dagger$ and $\beta=\frac{1}{\sqrt{\hbar}}W\mu$ where 
\begin{align}\label{eq:W}
W=\frac{1}{\sqrt{2}}\begin{bmatrix}
    \mathbb{1} & \mathbb{1}i\\\mathbb{1}&-\mathbb{1}i
\end{bmatrix}
\end{align}
is the unitary that rotates the $q/p$ basis into the amplitude basis. Then we can define
\begin{align}\label{eq:rho-from-PS}
    A_\rho &= X(\mathbb{1}-Q^{-1})=X\left[\mathbb{1}-W\left(\frac{1}{\hbar}\Sigma+\frac{\mathbb{1}}{2}\right)^{-1}W^\dagger\right]\\
    b_\rho &=XQ^{-1}\beta=\frac{1}{\sqrt{\hbar}}XW\left(\frac{1}{\hbar}\Sigma+\frac{\mathbb{1}}{2}\right)^{-1}\mu\\
    c_\rho &=\frac{e^{-\frac12\beta^\dagger Q^{-1}\beta}}{\sqrt{\det(Q)}}=\frac{e^{-\frac{1}{2\hbar}\mu^{\mathrm{T}} \left(\frac{1}{\hbar}\Sigma+\frac{\mathbb{1}}{2}\right)^{-1}\mu}}{\sqrt{\det\left(\frac{1}{\hbar}\Sigma+\frac{\mathbb{1}}{2}\right)}}
\end{align}
where $X = \left[\begin{smallmatrix}
    0& \mathbb{1}\\\mathbb{1}&0
\end{smallmatrix}\right]$.
Note that the covariance matrix of an $n$-mode state has $2n$ rows and columns whether the state is pure or mixed. In contrast, the size of the Abc parametrization depends on the nature of the object it represents, so $A_\rho$ has $2n$ rows and columns, but $A_{|\psi\rangle}$ has only $n$ of them.

\subsubsection{Unitaries}
Let us now turn to the Abc parameterization of Gaussian unitaries. Recall that a Gaussian unitary, say $U$, represented by the symplectic transformation $S$, corresponds to the phase-space transformation
\begin{align}
U: \vec v \mapsto S \vec v + \vec d.
\end{align}
Recall that for an $n$-mode unitary, the corresponding symplectic transforms form the symplectic group $\mathrm{Sp(2n, \mathbb R)}$,  defined as
\begin{align}
\mathrm{Sp}(2n,\mathbb R):=\{M\in\mathbb R^{2n\times 2n}| M^{\mathrm{T}} \Omega M = \Omega\},
\end{align}
with 
\begin{align}
\Omega = \begin{bmatrix}
0 & \mathbb 1_n\\
-\mathbb 1_n & 0
\end{bmatrix}.
\end{align}
The following proposition shows how we can efficiently transform the Bargmann representation of a unitary to its symplectic form. Moreover, as the map can be readily inverted, we get also a simple map from the symplectic representation to the Bargmann parametrization. 
\begin{proposition}\label{prop:AofU}
Consider a Gaussian unitary $U$ with the symplectic matrix
\begin{align}
S = W \begin{bmatrix}
S_1 & S_2\\
S_2^\ast & S_1^\ast
\end{bmatrix}
W^\dag,
\end{align}
and the displacement part
\begin{align}
\begin{bmatrix}
\vec\gamma\\
\vec \gamma^\ast
\end{bmatrix}
=
W
\begin{bmatrix}
\vec x\\
\vec p
\end{bmatrix},
\end{align}
with $W$ as defined in \eqref{eq:W}.
Then, the Abc parametrization of this unitary is\footnote{We note that the phase of $c_U$ cannot be inferred from the phase-space description of the unitary alone, as $U$ and $e^{i\phi} U$ have the same phase space effect. However, this phase is calculated in \cite{barnett2002methods, quesada2025s}, given the quadratic Hamiltonian generating the dynamics.}
\begin{align}\label{eq:Abc-of-U}
\begin{split}
    A_U &= \begin{bmatrix}S_2 S_1^\ast {}^{-1} & S_1^\dag {}^{-1}\\
    S_1^\ast {}^{-1} & -S_1^{\ast} {}^{-1} S_2^\ast
    \end{bmatrix}\\
    b_U &=\begin{bmatrix}
        -S_2 S_1^\ast {}^{-1} \gamma^\ast + \gamma\\
        -S_1^\ast {}^{-1} \gamma^\ast 
    \end{bmatrix}\\
    |c_U| &= \frac{\exp(-\frac{1}{2\hbar}d^{\mathrm{T}} (\mathbb 1 + SS^{\mathrm{T}})^{-1}d)}{\left[\mathrm{det}\left(\frac12 (SS^{\mathrm{T}}+\mathbb 1)\right)\right]^{1/4}}.
\end{split}
\end{align}
\end{proposition}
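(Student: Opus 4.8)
The plan is to establish a dictionary between the symplectic description of a Gaussian unitary $U$ and its Bargmann $\mathrm{Abc}$ parametrization by exploiting how the creation and annihilation operators transform under $U$, and then reading off $A_U$, $b_U$, $c_U$ from the action on coherent states. First I would recall that a Gaussian unitary implements a Bogoliubov transformation in the amplitude basis, namely $U^\dagger a\, U = S_1 a + S_2 a^\dagger + \gamma$ (together with its conjugate), where the block decomposition of $S$ is exactly the one given in the statement. The matrices $S_1, S_2$ satisfy the symplectic constraints $S_1 S_1^\dagger - S_2 S_2^\dagger = \mathbb{1}$ and $S_1 S_2^{\mathrm{T}} = S_2 S_1^{\mathrm{T}}$, which I expect to use repeatedly to verify symmetry of $A_U$ and to simplify the expressions.

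The cleanest route to the $A_U$ and $b_U$ entries is to use the fact that in the Bargmann picture creation acts as multiplication by $z$ and annihilation acts as $\partial_z$. I would write down the differential equations characterizing $F_U(z,z') = e^{(\|z\|^2+\|z'\|^2)/2}\langle z^* \mathclose| U \mathclose| z'\rangle$ that follow from the intertwining relations $a_i U = U(U^\dagger a_i U)$ and $U a_i'^\dagger = (U a_i'^\dagger U^\dagger) U$ applied inside the matrix element, and then solve this system for a Gaussian ansatz of the form $c\exp(\tfrac12 (z,z')^{\mathrm{T}} A_U (z,z') + (z,z')^{\mathrm{T}} b_U)$. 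Matching coefficients against the Bogoliubov data produces the stated blocks, with $S_1^{*\,-1}$ appearing naturally as the inverse governing the annihilation-multiplication exchange. An equally valid alternative, which I would use as a cross-check, is to evaluate $U$ on a fiducial coherent state and track how displacement and squeezing parameters enter; this directly yields $b_U$ from the displacement vector $\gamma$.

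For the modulus $|c_U|$, I would fix the normalization by demanding unitarity, i.e.\ $\langle z^* \mathclose| U^\dagger U \mathclose| w\rangle = \langle z^* \mathclose| w\rangle$, which in Bargmann language becomes a Gaussian integral identity. Concretely, computing $\|U|0\rangle\|^2 = 1$ (or equivalently evaluating a single Gaussian overlap integral) pins down $|c_U|^2$ in terms of $\det(\tfrac12(SS^{\mathrm{T}}+\mathbb{1}))$; the displacement-dependent exponential factor $\exp(-\tfrac{1}{2\hbar}\, d^{\mathrm{T}}(\mathbb{1}+SS^{\mathrm{T}})^{-1} d)$ then emerges from completing the square in the means. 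The phase of $c_U$ is genuinely inaccessible from the symplectic data alone, as the footnote already notes, so I would only claim the modulus.

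The main obstacle I anticipate is bookkeeping: the ordering conventions (mode-wise versus type-wise), the placement of conjugates on the Bargmann variables, and the $\hbar$ and $W$-rotation factors all have to be tracked with care, and a single misplaced conjugate or transpose will corrupt the off-diagonal blocks of $A_U$. In particular, verifying that the proposed $A_U$ is symmetric (as required of any valid $\mathrm{Abc}$ matrix) is where the symplectic identities $S_1 S_2^{\mathrm{T}} = S_2 S_1^{\mathrm{T}}$ and $S_1 S_1^{\mathrm{T}} - S_2 S_2^{\mathrm{T}} = \mathbb{1}$ must be invoked, and checking that $S_2 S_1^{*\,-1}$ and $-S_1^{*\,-1} S_2^*$ are themselves symmetric is the delicate step. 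Once symmetry and unitarity are confirmed, inverting the map to recover $S$ from $A_U$ is a straightforward algebraic rearrangement, so I would present that direction only briefly.
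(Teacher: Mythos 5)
Your proposal is correct and follows essentially the same route as the paper's proof: the paper likewise starts from the Bogoliubov relations $a_i U = \sum_j (S_1)_{ij} U a_j + (S_2)_{ij} U a_j^\dagger + \gamma_i U$, converts them into differential equations on $F_U(z,w)$ using the multiplication/derivative dictionary, matches coefficients against a Gaussian ansatz to read off $A_U$ and $b_U$, and fixes $|c_U|$ from unitarity via the Bargmann contraction formulas. The only cosmetic difference is that you explicitly flag verifying symmetry of $A_U$ via the symplectic identities, which the paper leaves implicit.
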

\begin{proof}
We show this through the identities
\begin{align}\label{eq:a-transformation-under-U}
\begin{split}
a_i  U = \sum_{j} (S_1)_{ij} U a_j + (S_2)_{ij} U  a^\dagger_j +\gamma_i U,\\
a^\dagger_i  U = \sum_{j} (S_2^\ast)_{ij} U a_j + (S_1^\ast)_{ij} U  a^\dagger_j +\gamma^\ast_i U,
\end{split}
\end{align}
which are essentially the formula for evolving annihilation operators under Gaussian transformations.
Defining the unitary's Bargmann function as \eqref{eq:Barg_U}, we can rewrite \eqref{eq:a-transformation-under-U} as
\begin{align}\label{eq:computation-of-Au}
\begin{split}
\partial_{z_i} F_U(z,w) &= \sum_j (S_1)_{ij} w_j F_U(z,w) + \sum_j (S_2)_{ij} \partial_{w_j} F_U(z,w) + \gamma_i F_U(z,w),\\
z_i F_U(z,w) &= \sum_j (S_2)^\ast_{ij} w_j F_U(z,w) + \sum_j (S_1)^\ast_{ij} \partial_{w_j} F_U(z,w) + \gamma_i^\ast F_U(z,w).
\end{split}
\end{align}
Expanding the above equation using the Gaussian expression for $F_U$ gives Ab provided in \eqref{eq:Abc-of-U}. To elaborate, note that
\begin{align}\label{eq:partials-of-Fu}
\begin{split}
\partial_{z_i} F_U(z,w) &= c \cdot \left( (b_1)_i + (A_1 z)_i + (A_2^{\mathrm{T}} w)_i \right) F_U(z,w)\\
\partial_{w_i} F_U(z,w) &= c \cdot \left( (b_2)_i + (A_2 z)_i + (A_3 w)_i \right) F_U(z,w),
\end{split}
\end{align}
where we have used blocks of Ab parametrization of $U$ as
\begin{align}
A_U = 
\begin{bmatrix}
A_1 & A_2^{\mathrm{T}}\\
A_2 & A_3
\end{bmatrix},
\quad b_U = \begin{bmatrix}
b_1\\
b_2
\end{bmatrix}.
\end{align}
Plugging this into \eqref{eq:computation-of-Au}, gives
\begin{align}\label{eq:generating-S2Au}
\begin{split}
b_1 + A_1 z + A_2^{\mathrm{T}} w &= S_1 w + S_2 (b_2 + A_2 z + A_3 w) + \gamma\\
z &= S_2^\ast w + S_1^\ast (b_2 + A_2 z + A_3 w) + \gamma^\ast.
\end{split}
\end{align}
As the above set of equations must hold for all $z,w\in\mathbb C^n$, we get
\begin{align}\label{eq:equations-for-S2Au}
\begin{split}
\mathbb 1 &= S_1^\ast A_2 \Rightarrow A_2 = S_1^\ast {}^{-1},\\
0 &= S_2^\ast + S_1^\ast A_3 \Rightarrow A_3 = -S_1^\ast {}^{-1} S_2^\ast
\end{split}
\end{align}
Furthermore, we have
\begin{align}
A_1 = S_2 A_2 \overset{(i)}{=} S_2 S_1^\ast {}^{-1},
\end{align}
where $(i)$ follows from \eqref{eq:equations-for-S2Au}. Finally, note that \eqref{eq:generating-S2Au} also gives
\begin{align}\label{eq:b2-in-S2Au}
0 = S_1^\ast b_2 + \gamma^\ast \Rightarrow b_2 = -S_1^\ast {}^{-1} \gamma^\ast,
\end{align}
and also
\begin{align}
b_1 = S_2 b_2 + \gamma \overset{(ii)}{=} -S_2 S_1^\ast {}^{-1} \gamma^\ast +\gamma,
\end{align}
where $(ii)$ follows from \eqref{eq:b2-in-S2Au}.

Finally, the expression for $c$ is obtained from unitarity (i.e., computing c of $U U^\dagger$ in the Bargmann representation, using contraction formulas of \protect\cref{app:inner-product-in-bargmann}). 

Note that the inverse map of \eqref{eq:Abc-of-U} is readily constructible. In other words, given Abc triple of a Gaussian unitary, it is straightforward to use \eqref{eq:Abc-of-U} and compute the symplectic representation.
\end{proof}

\subsubsection{Channels}
As obtained in \cite[Appendix B]{yao2024riemannian}, for Gaussian channels we have
\begin{align}
\begin{split}
    A_\Phi &= X {L}  \begin{bmatrix}
        \mathbb{1}_{2m} -  {\xi}^{-1} &   {\xi}^{-1} {X_\Phi}  \\
        X_\Phi^{\mathrm{T}} {\xi}^{-1} & \mathbb{1}_{2m} -  {X_\Phi}^{\mathrm{T}} {\xi}^{-1} {X_\Phi}
    \end{bmatrix} {L}^\dagger, \\
    &=X{L}\left(\mathbb{1}_{4m}-\begin{bmatrix}
    {\xi}^{-1} &   -{\xi}^{-1} {X_\Phi}  \\
        -{X_\Phi}^{\mathrm{T}} {\xi}^{-1} & {X_\Phi}^{\mathrm{T}} {\xi}^{-1} {X_\Phi}
    \end{bmatrix}\right){L}^\dagger,\\
    b_\Phi &= \frac{1}{\sqrt{\hbar}}  {L}^* \begin{bmatrix} {\xi}^{-1} {d}  \\ - {X_\Phi}^{\mathrm{T}}{\xi}^{-1} {d}   \end{bmatrix},\\
    c_\Phi &= \frac{\exp\left[ -\tfrac{1}{2\hbar} {d}^{\mathrm{T}} {\xi}^{-1} {d} \right]}{\sqrt{\det({\xi})}},
\end{split}
\end{align}
where $X = \left[\begin{smallmatrix}
    0_{m} & \mathbb{1}_m \\
    \mathbb{1}_m & 0_m 
\end{smallmatrix} \right] $ and
\begin{align}\label{eq:Rdef}
    {L} = \frac{1}{\sqrt{2}}\left[\begin{array}{cccc} \mathbb{1}_m & i \mathbb{1}_m & 0_m & 0_m \\ 0_m & 0_m & \mathbb{1}_m & -i \mathbb{1}_m \\ \mathbb{1}_m & -i \mathbb{1}_m & 0_m & 0_m \\ 0_m & 0_m & \mathbb{1}_m & i \mathbb{1}_m \\\end{array}\right], \quad \xi = \frac{1}{2}\left(\mathbb{1}_{2m} + {X_\Phi}{X_\Phi}^{\mathrm T} + \frac{2 {Y_\Phi}}{\hbar} \right).
\end{align}
and $X_\Phi,Y_\Phi,d_\Phi$ represent the phase-space transformation, meaning that, assuming a Gaussian state $\rho$ with covariance matrix $\Sigma_\rho$ and mean $\mu_\rho$, we have that the covariance matrix and vector of means for $\Phi[\rho]$ is given by
\begin{align}
\Sigma_{\Phi[\rho]} = X_\Phi \Sigma_\rho X_\Phi^{\mathrm{T}} + Y_\Phi, \quad \mu_{\Phi[\rho]} = X_\Phi \mu_\rho + d_\Phi.
\end{align}
One can invert this map, to convert the Abc parametrization to the $X_\Phi,Y_\Phi$ representation. This is a new result, that we summarize in the following proposition.
\begin{proposition}\label{prop:AtoXY-for-channels}
Consider a Gaussian channel with A matrix given as
\begin{align}
A_\Phi = \begin{bmatrix}
A^{\mathrm{out}}_\Phi & \Gamma_\Phi^{\mathrm{T}}\\
\Gamma_\Phi & A^{\mathrm{in}}_\Phi
\end{bmatrix}.
\end{align}
It is the case that
\begin{align}\label{eq:XY}
X_\Phi &= W^\dagger (\mathbb 1 - XA^{\mathrm{out}}_\Phi)^{-1} X \Gamma_\Phi^{\mathrm{T}} X W,\\
\frac{1}{\hbar}Y_\Phi &= W^\dag ( \mathbb 1 - XA_\Phi^{\mathrm{out}})^{-1} W - \frac12 X_\Phi X_\Phi^{\mathrm{T}},
\end{align}
with $W$ defined in \eqref{eq:W}.
\end{proposition}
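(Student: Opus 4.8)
The plan is to algebraically invert the forward map displayed just above the proposition, which writes
\begin{align}
M := \begin{bmatrix} \mathbb 1_{2m} - \xi^{-1} & \xi^{-1} X_\Phi \\ X_\Phi^{\mathrm T}\xi^{-1} & \mathbb 1_{2m} - X_\Phi^{\mathrm T}\xi^{-1}X_\Phi\end{bmatrix},
\end{align}
so that $A_\Phi = X L M L^\dagger$, with $L$ and $\xi$ from \eqref{eq:Rdef} and $X$ the swap involution. The two facts I would exploit first are that $L$ is unitary ($LL^\dagger=\mathbb 1$, checked directly from \eqref{eq:Rdef}) and $X^2=\mathbb 1$, so the left factor $X$ and the $*$-congruence by $L$ can be peeled off cleanly, leaving the blocks of $M$ to be read out. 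The structure to lean on is already visible in the second line of the forward map: $\mathbb 1_{4m}-M$ factors as $\big[\begin{smallmatrix}\mathbb 1\\ -X_\Phi^{\mathrm T}\end{smallmatrix}\big]\xi^{-1}\big[\begin{smallmatrix}\mathbb 1 & -X_\Phi\end{smallmatrix}\big]$, which makes explicit how $\xi^{-1}$ sits in the output block and how $X_\Phi$ rides on top of it in the coupling block.

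Concretely, I would rotate $A_\Phi$ back through $X$ and $L$ to recover $M$, then isolate the output block. Passing to the amplitude basis via $W$ of \eqref{eq:W} exhibits an effective output Husimi matrix $Q_{\mathrm{out}}$ obeying a state-type relation $A^{\mathrm{out}}_\Phi = X(\mathbb 1 - Q_{\mathrm{out}}^{-1})$, exactly mirroring the mixed-state formula $A_\rho = X(\mathbb 1 - Q^{-1})$ of \protect\cref{app:Abc-of-Gaussians}. Solving gives $Q_{\mathrm{out}} = (\mathbb 1 - X A^{\mathrm{out}}_\Phi)^{-1}$, and tracking the relation between $Q_{\mathrm{out}}$ and $\xi$ yields $W^\dagger(\mathbb 1 - X A^{\mathrm{out}}_\Phi)^{-1}W = \xi - \tfrac12\mathbb 1$. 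The off-diagonal block of $M$, namely $\xi^{-1}X_\Phi$, then delivers $X_\Phi$: it equals (up to the $W,X$ dressing) the coupling block of $A_\Phi$, so multiplying by the already-recovered $\xi$ and undoing the basis change produces $X_\Phi = W^\dagger(\mathbb 1 - X A^{\mathrm{out}}_\Phi)^{-1}X\Gamma_\Phi^{\mathrm T}XW$.

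Finally, $Y_\Phi$ follows purely from the definition of $\xi$ in \eqref{eq:Rdef}: rearranging $\xi = \tfrac12(\mathbb 1 + X_\Phi X_\Phi^{\mathrm T} + 2Y_\Phi/\hbar)$ gives $Y_\Phi/\hbar = \xi - \tfrac12\mathbb 1 - \tfrac12 X_\Phi X_\Phi^{\mathrm T}$, and substituting $\xi - \tfrac12\mathbb 1 = W^\dagger(\mathbb 1 - X A^{\mathrm{out}}_\Phi)^{-1}W$ from the previous step cancels the $\tfrac12\mathbb 1$ and reproduces $\tfrac1\hbar Y_\Phi = W^\dagger(\mathbb 1 - X A^{\mathrm{out}}_\Phi)^{-1}W - \tfrac12 X_\Phi X_\Phi^{\mathrm T}$. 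Well-definedness requires $\mathbb 1 - X A^{\mathrm{out}}_\Phi$ to be invertible, which I would argue from positive definiteness of $\xi = \tfrac12(\mathbb 1 + X_\Phi X_\Phi^{\mathrm T}) + Y_\Phi/\hbar$, valid for any physical noise matrix $Y_\Phi$.

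The main obstacle is not conceptual but the careful bookkeeping of the interleaved basis changes---the unitary $L$, the amplitude rotation $W$, and the swap $X$---together with the $\tfrac12\mathbb 1$ offsets that appear because $\xi$ is defined with the symmetric-ordering shift; placing every conjugate, transpose, and $X$ in the correct slot is where the real work lies. A cleaner logical route, which I would also use as a cross-check, is to run the computation forward: substitute the claimed expressions for $X_\Phi$ and $Y_\Phi$ into the forward formula for $A_\Phi$ and verify that the output block $A^{\mathrm{out}}_\Phi$ and the coupling block $\Gamma_\Phi$ are reproduced, which certifies the inverse without having to rediscover the offsets from scratch.
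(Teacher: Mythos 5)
Your route---directly inverting the forward map $A_\Phi = XLML^\dagger$ quoted just above the proposition, using unitarity of $L$ and $X^2=\mathbb 1$---is viable and genuinely different from both of the paper's proofs (one perturbs around vacuum inputs, the other computes the dual channel's action on displacement operators and compares with a Bargmann contraction). Your structural observations are also correct: the dressed coupling block satisfies $M_{12}=W^\dagger X\Gamma_\Phi^{\mathrm{T}}XW$, and $(\mathbb 1 - XA_\Phi^{\mathrm{out}})^{-1}$ is the Husimi matrix $Q_{\mathrm{out}}$ of the state obtained by feeding vacuum through the channel. The gap is the single identity tying $\xi$ to $A_\Phi^{\mathrm{out}}$---exactly one of the ``$\tfrac12\mathbb 1$ offsets'' you flagged as the real work. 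From \eqref{eq:Rdef}, $\xi = \tfrac12\mathbb 1+\tfrac12X_\Phi X_\Phi^{\mathrm{T}}+\tfrac1\hbar Y_\Phi = \tfrac{1}{\hbar}\Sigma'_{\mathrm{vac}}+\tfrac12\mathbb 1$, where $\Sigma'_{\mathrm{vac}}$ is the covariance of the vacuum output; since the Husimi matrix already carries that shift, $Q_{\mathrm{out}}=W\bigl(\tfrac1\hbar\Sigma'_{\mathrm{vac}}+\tfrac12\mathbb 1\bigr)W^\dagger = W\xi W^\dagger$, hence the correct relation is $W^\dagger(\mathbb 1 - XA_\Phi^{\mathrm{out}})^{-1}W=\xi$, not $\xi-\tfrac12\mathbb 1$: you have conflated $W^\dagger Q_{\mathrm{out}}W$ with $\tfrac1\hbar\Sigma'_{\mathrm{vac}}$. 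The slip also makes your proposal internally inconsistent: $X_\Phi=\xi M_{12}$ reproduces the displayed $X_\Phi$ formula only if $\xi=W^\dagger(\mathbb 1 - XA_\Phi^{\mathrm{out}})^{-1}W$, while your $Y_\Phi$ step substitutes $\xi-\tfrac12\mathbb 1=W^\dagger(\mathbb 1 - XA_\Phi^{\mathrm{out}})^{-1}W$; both cannot hold.

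The twist is what happens when the error is repaired. With the correct identity, your own last step gives $\tfrac1\hbar Y_\Phi = W^\dagger(\mathbb 1-XA_\Phi^{\mathrm{out}})^{-1}W-\tfrac12\mathbb 1-\tfrac12X_\Phi X_\Phi^{\mathrm{T}}$, i.e.\ \eqref{eq:XY} with an additional $-\tfrac12\mathbb 1$---and that term is genuinely needed: for the identity channel ($A_\Phi^{\mathrm{out}}=0$, $\Gamma_\Phi=\mathbb 1$) the printed formula yields $Y_\Phi=\tfrac{\hbar}{2}\mathbb 1$ instead of $0$, and for the loss channel of \protect\cref{sec:triples} it yields $\tfrac1\hbar Y_\Phi=\bigl(1-\tfrac\eta2\bigr)\mathbb 1$ instead of $\tfrac{1-\eta}{2}\mathbb 1$, while the printed $X_\Phi$ formula checks out in both cases. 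So the proposition as printed is itself missing a $-\tfrac12\mathbb 1$ (the paper's own first proof silently drops that term when expanding $(\mathbb 1 - XA_\Phi^{\mathrm{out}}-X\Gamma_\Phi^{\mathrm{T}}A\Gamma_\Phi)^{-1}$), and your compensating slip lands you on the printed formula rather than the correct one. This is precisely what your proposed cross-check---substituting the recovered $(X_\Phi,Y_\Phi)$ back into the forward map---would expose; treat it as mandatory rather than optional, since here it both repairs your derivation and flags the typo in the statement.
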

\begin{proof}
Note that we are guaranteed the invertibility of $\mathbb 1 - XA^{\mathrm{out}}_\Phi$ from the physicality constraints (\protect\cref{rem:physicality}). To obtain \eqref{eq:XY}, we provide two separate proofs. 

\paragraph{First proof} We apply a perturbative approach. Note that the covariance matrix $\Sigma$ corresponding to a state can be written as (see \eqref{eq:rho-from-PS})
\begin{align}
\frac{1}{\hbar}\Sigma =  W^\dag (\mathbb 1 - X A)^{-1} W - \frac{1}{2} \mathbb 1.
\end{align}
We use perturbations to find the $X_\Phi$ matrix. Let $A$ correspond to the $A$ matrix of a state $\rho$ that is close to vacuum i.e., $\norm{A}\ll 1$. We have
\begin{align}
\frac{1}{\hbar}\Sigma = \frac12 \mathbb 1 + W^\dagger X A W + O(\norm{A}^2).
\end{align}
Moreover, for small $A$ we have that if $\rho' = \Phi[\rho]$, then
\begin{align}
A' = A^{\mathrm{out}}_\Phi + \Gamma^{\mathrm{T}}_\Phi A \Gamma_\Phi+O(\norm{A}^2).
\end{align}
This concludes that
\begin{align}
\begin{split}
\frac{1}{\hbar}\Sigma' &= W^\dagger ( \mathbb 1 - X A_{\Phi}^{\mathrm{out}} -X \Gamma^{\mathrm{T}}_\Phi A \Gamma_\Phi)^{-1} W - \frac12 \mathbb 1 + O(\norm{A}^2),\\
&= W^\dag\left( (\mathbb 1 -X  A_\Phi^{\mathrm{out}})^{-1} + (\mathbb 1 - X  A_\Phi^{\mathrm{out}})^{-1} X\Gamma^{\mathrm{T}}_\Phi A \Gamma_\Phi (\mathbb 1 - X  A_\Phi^{\mathrm{out}})^{-1}\right) W + O(\norm{A}^2),\\
&= \frac{1}{\hbar}Y_\Phi + \left[ W^\dag (\mathbb 1 - X  A_\Phi^{\mathrm{out}})^{-1}  X  \Gamma^{\mathrm{T}}_\Phi  X W\right] \frac{\Sigma}{\hbar} \left[ W^\dag \Gamma_\Phi (\mathbb 1 -  X A_\Phi^{\mathrm{out}})^{-1} W\right] + O(\norm{A}^2),
\end{split}
\end{align}
where
$
\frac{1}{\hbar}Y_\Phi = W^\dag (\mathbb 1 - X A^{\mathrm{out}}_\Phi)^{-1} W - \frac12 X_\Phi X_\Phi^{\mathrm{T}}
$
is the zeroth-order term. Now, note that $\overline{W^\dag} = W^\dag X$, and $(\mathbb 1 - X A_\Phi^{\mathrm{out}})^{-1} {}^{\mathrm{T}} = (\mathbb 1-  A^{\mathrm{out}}_\Phi X)^{-1} =  X(\mathbb 1 - X  A^{\mathrm{out}}_\Phi)^{-1} X$, which leads to
\begin{align}\label{eq:X_phit}
W^\dag \Gamma_\Phi (\mathbb 1 - X A^{\mathrm{out}}_\Phi)^{-1} W = X_\Phi^{\mathrm{T}},
\end{align}
and therefore
\begin{align}
\Sigma' =  Y_\Phi +  X_\Phi \Sigma X_\Phi^{\mathrm{T}} + O(\norm{A}^2),
\end{align}
which is the formula for channel transformation in the limit where the higher-order term vanishes.

\paragraph{Second proof} Let $D(\alpha) = \exp(\alpha a^\dagger - \alpha^\ast a)$ and note that
\begin{align}
\Phi^\dagger(D(\alpha)) \propto D(\alpha'),
\end{align}
where $\alpha = (x+ip)/\sqrt{2}$ and, $\alpha' = (x'+ip')/\sqrt2$ are related via \cite[Eq. (5.55)]{serafini2017quantum} 
\begin{align}
\begin{bmatrix}
x'\\
p'
\end{bmatrix} = -\Omega X^{\mathrm{T}}_\Phi \Omega \begin{bmatrix}
x\\
p
\end{bmatrix}.
\end{align}
Using the fact that $D(\alpha)$ is parametrized by Ab parameters
\begin{align}
A_{D(\alpha)}= X, \quad b_{D(\alpha)} = \begin{bmatrix}
\alpha\\
-\alpha^\ast
\end{bmatrix} = Z W \begin{bmatrix}
x\\
p
\end{bmatrix},
\end{align}
with $Z = \begin{bmatrix}
\mathbb 1 & 0\\
0 & -\mathbb 1
\end{bmatrix}$. Note that this parametrization can be readily obtained from \protect\cref{prop:AofU}. Therefore, we get
\begin{align}\label{eq:b-transformation-via-serafini}
b_{D(\alpha')} = -W X_{\Phi}^{\mathrm{T}} W^\dagger b_{D(\alpha)},
\end{align}
where in doing so, we have used the identity $ZW\Omega = i W$. Alternatively, we can use inner product formulas in Bargmann representation (see \protect\cref{app:inner-product-in-bargmann}) to obtain that the b vector corresponding to $\Phi^\dagger(D(\alpha))$ is
\begin{align}\label{eq:b-transformation}
\begin{split}
b_{D(\alpha')} &= -\begin{bmatrix}
0 & \Gamma_{\Phi}
\end{bmatrix}
\begin{bmatrix}
X & -\mathbb 1\\
-\mathbb 1 & A_\Phi^{\mathrm{out}}
\end{bmatrix}^{-1}
\begin{bmatrix}
b_{D(\alpha)}\\
0
\end{bmatrix}\\
&= -\Gamma_\Phi (\mathbb 1 - X A_{\Phi}^{\mathrm{out}})^{-1} b_{D(\alpha)}.
\end{split}
\end{align}
Combining \eqref{eq:b-transformation-via-serafini} with \eqref{eq:b-transformation}, we get
\begin{align}
X_{\Phi}^{\mathrm{T}} = W^\dagger \Gamma (\mathbb 1 - X A_{\Phi}^{\mathrm {out}})^{-1} W,
\end{align}
which is identical to \eqref{eq:X_phit}, and hence, we have obtained the same result. For the $Y_{\Phi}$ matrix, we can simply send a vacuum state inside the channel and compute its A matrix via Bargmann inner products and from there, compute the covariance matrix. The resulting covariance matrix must be equal to $\frac\hbar2X_\Phi X_\Phi^{\mathrm{T}} + Y_\Phi$, and from there, we get $Y_\Phi$. As this is equivalent to calculations performed for computation of $Y_\Phi$ in our first proof, we do not repeat it.
\end{proof}

\subsection{Inner product formulas}\label{app:inner-product-in-bargmann}

One of the many advantages of the Bargmann representation is that all inner products between Gaussian objects are defined through a complex Gaussian integral, regardless of the nature of the objects involved. In this section, we provide a general formula for computing such inner products and we give some examples.

As an example of combining two Gaussian objects, we consider the action of an $m$-mode unitary on the first $m$ modes of an $(m+n)$-mode ket. Then the integral is to be computed as follows:
\begin{align}\label{eq:Upsi_integral}
F_{(U\otimes I)|\psi\rangle}(z,v) &=\int_{\mathbb{C}^m}\frac{d^{2m}w}{\pi^m}e^{-\|w\|^2}F_U(z,w^*)F_{|\psi\rangle}(w,v)\\
&=\int_{\mathbb{C}^m}\frac{d^{2m}w}{\pi^m}e^{\frac12\left(\begin{smallmatrix}w^*\\w\end{smallmatrix}\right)^{\mathrm{T}}\left(\begin{smallmatrix}0&-\mathbb{1}\\-\mathbb{1}&0\end{smallmatrix}\right)\left(\begin{smallmatrix}w^*\\w\end{smallmatrix}\right)}c_U\exp\left[\frac{1}{2}\left(\begin{smallmatrix}z\\w^*\end{smallmatrix}\right)^{\mathrm{T}}A_U\left(\begin{smallmatrix}z\\w^*\end{smallmatrix}\right)+ \left(\begin{smallmatrix}z\\w^*\end{smallmatrix}\right)^{\mathrm{T}}b_U\right]\\
\nonumber&\quad\quad\quad\quad\quad\quad\quad\quad\quad\quad\quad\quad\quad\quad\quad\times c_{|\psi\rangle}\exp\left[\frac{1}{2}\left(\begin{smallmatrix}w\\v\end{smallmatrix}\right)^{\mathrm{T}}A_{|\psi\rangle}\left(\begin{smallmatrix}w\\v\end{smallmatrix}\right)+ \left(\begin{smallmatrix}w\\v\end{smallmatrix}\right)^{\mathrm{T}}b_{|\psi\rangle}\right]\\\label{eq:gaussian_integral_final}
&=c_Uc_{|\psi\rangle}\int_{\mathbb{C}^m}\frac{d^{2m}w}{\pi^m}\exp\left[\frac{1}{2}\left(\begin{smallmatrix}z\\w^*\\w\\v\end{smallmatrix}\right)^{\mathrm{T}}A\left(\begin{smallmatrix}z\\w^*\\w\\v\end{smallmatrix}\right)+ \left(\begin{smallmatrix}z\\w^*\\w\\v\end{smallmatrix}\right)^{\mathrm{T}}b\right],
\end{align}
where the $A$ matrix and $b$ vector can be expressed in block form:
\begin{align}\label{eq:Abc_Upsi}
    A = \begin{bmatrix}
        A^\mathrm{out}_U & \Gamma_U^{\mathrm{T}} & 0 & 0\\
        \Gamma_U & A^\mathrm{in}_U & -\mathbb{1} & 0\\
        0 & -\mathbb{1} & A_{|\psi\rangle}^w & \Gamma_{|\psi\rangle}^{\mathrm{T}}\\
        0 & 0 & \Gamma_{|\psi\rangle} & A_{|\psi\rangle}^{v}
    \end{bmatrix},
    \quad
    b = \begin{bmatrix}
        b^\mathrm{out}_U\\
        b^\mathrm{in}_U\\
        b_{|\psi\rangle}^w\\
        b_{|\psi\rangle}^v
    \end{bmatrix}.
\end{align}
Note that we incorporated the term $e^{-||w||^2}$ from the measure into the $A$ matrix through the off-diagonal identity blocks.

Eq.~\eqref{eq:gaussian_integral_final} is a well-known complex Gaussian integral with leftover variables. The solution is yet another exponential of a quadratic polynomial with a new triple of Abc parameters:
\begin{align}\label{eq:gauss_intA}
    A &= \begin{bmatrix}
        A^\mathrm{out} & 0\\
        0 & A_{|\psi\rangle}^v
    \end{bmatrix}-\begin{bmatrix}\Gamma_U^{\mathrm{T}}&0\\0&\Gamma_{|\psi\rangle}\end{bmatrix}M^{-1}\begin{bmatrix}\Gamma_U&0\\0&\Gamma_{|\psi\rangle}^{\mathrm{T}}\end{bmatrix},\\\label{eq:gauss_intb}
    b &= \begin{bmatrix}b^\mathrm{out}\\b_{|\psi\rangle}^v\end{bmatrix}-\begin{bmatrix}\Gamma_U^{\mathrm{T}}&0\\0&\Gamma_{|\psi\rangle}\end{bmatrix}M^{-1}\begin{bmatrix}b^\mathrm{in}_U\\b_{|\psi\rangle}^w\end{bmatrix},\\\label{eq:gauss_intc}
    c &= c_Uc_{|\psi\rangle}\frac{\exp(-\frac{1}{2}
    \begin{bmatrix}b^\mathrm{in}_U&b_{|\psi\rangle}^w
    \end{bmatrix}
    M^{-1}
    \begin{bmatrix}
    b^\mathrm{in}_U\\b_{|\psi\rangle}^w\end{bmatrix})}{\sqrt{\det(iM)}},
\end{align}
where $M = \begin{bmatrix}A_U^\mathrm{in}&-\mathbb{1}\\-\mathbb{1}&A^w_{|\psi\rangle}\end{bmatrix}$ is the block that includes the off-diagonal identities coming from the measure. 

All inner products between two Gaussian objects take this form, with the appropriate blocks replaced. The only ``exception'' is the tracing operation, which does not involve two initial Bargmann functions, but rather only one. In that case one can start directly by subtracting the off-diagonal identities from the $A$ matrix of the Bargmann function on the corresponding integration variables, and then proceed with the Gaussian integral. As an example, we show how to calculate the partial trace of a two-mode Gaussian density matrix (note we use mode-wise order):
\begin{align}
\begin{split}
F_{\mathrm{Tr}_0[\rho]}(w,v) &= \int_\mathbb{C}\frac{d^2z}{\pi}e^{-\|z\|^2}F_\rho(z^*,z,w,v)\\
&=c_\rho\int_\mathbb{C}\frac{d^2z}{\pi}\exp{\biggl[\frac12\left(\begin{smallmatrix}
    z^*\\z\\w\\v
\end{smallmatrix}\right)^{\mathrm{T}}\begin{bmatrix}
    A_0-X&R\\R^{\mathrm{T}}&A_1
\end{bmatrix}\left(\begin{smallmatrix}
    z^*\\z\\w\\v
\end{smallmatrix}\right) + \left(\begin{smallmatrix}
    z^*\\z\\w\\v
\end{smallmatrix}\right)^{\mathrm{T}}\begin{bmatrix}
    b_0\\b_1
\end{bmatrix}\biggr]}
\end{split}
\end{align}
So $M=A_0-X$ and we can apply the standard formulas for Gaussian integrals to obtain the triple that parametrizes the result:
\begin{align}\label{eq:inner-prods}
\begin{split}
    A &= A_1 - R^{\mathrm{T}}(A_0-X)^{-1}R\\
    b &= b_1 - R^{\mathrm{T}}(A_0-X)^{-1}b_0\\
    c &= c_\rho\frac{\exp[-\frac12b_0^{\mathrm{T}}(A_0-X)^{-1}b_0]}{\sqrt{-\det(A_0-X)}}
\end{split}
\end{align}
A useful way to show the formulas for the Gaussian integral is to think of matrices in block form and then to manipulate the blocks. So after subtracting the off-diagonal identities due to the integration measure, we permute rows and columns until the integration variables come before all the other ones. Then the matrix and the vector that parametrize the integrand will have the following block form:

\begin{center}
\begin{tikzpicture}[scale=0.5]
\fill[red!30] (0,2) rectangle (1,3);  % A block
\fill[blue!30] (1,2) rectangle (3,3);  % B blocks top
\fill[blue!30] (0,0) rectangle (1,2);  % B blocks left
\fill[green!30] (1,0) rectangle (3,2); % C blocks
\draw[thick] (0,0) rectangle (3,3);
\draw[thick] (1,0) -- (1,3);
\draw[thick] (0,2) -- (3,2);
\node at (0.5,2.5) {$M$};
\node at (-1,1.5) {$A=$};
\end{tikzpicture}
\end{center}

\begin{center}
\begin{tikzpicture}[scale=0.5]
\fill[violet!60] (0,2) rectangle (0.5,3);  % top vector
\draw[thick] (0,2) rectangle (0.5,3);
\fill[lime] (0,0) rectangle (0.5,2);  % bottom vector
\draw[thick] (0,0) rectangle (0.5,2);
\node at (-1,1.5) {$b=$};
\end{tikzpicture}
\end{center}

Then, the Abc triple that parametrizes the result of the integral is given by the following block formulas:
\begin{center}
\begin{tikzpicture}[scale=0.6]
\fill[green!30] (0,0) rectangle (2,2);
\draw[thick] (0,0) rectangle (2,2);

\draw[thick] (2.7,1) -- (3.3,1);  % minus sign

\fill[blue!30] (4,0) rectangle (5,2);
\draw[thick] (4,0) rectangle (5,2);
\fill[red!30] (5.1,1) rectangle (6.1,2);
\draw[thick] (5.1,1) rectangle (6.1,2);
\fill[blue!30] (6.2,1) rectangle (8.2,2);
\draw[thick] (6.2,1) rectangle (8.2,2);

\node at (-1,1) {$A=$};
\node at (5.6,1.5) {\scalebox{0.7}{$M^{\scriptscriptstyle{-1}}$}};

\end{tikzpicture}
\end{center}

\begin{center}\hspace{-2.25em}
\begin{tikzpicture}[scale=0.6]
\fill[lime] (0.75,0) rectangle (1.17,2);
\draw[thick] (0.75,0) rectangle (1.17,2);

\draw[thick] (2.7,1) -- (3.3,1);  % minus sign

\fill[blue!30] (4,0) rectangle (5,2);
\draw[thick] (4,0) rectangle (5,2);
\fill[red!30] (5.1,1) rectangle (6.1,2);
\draw[thick] (5.1,1) rectangle (6.1,2);
\fill[violet!60] (6.2,1) rectangle (6.62,2);
\draw[thick] (6.2,1) rectangle (6.62,2);

\node at (-1,1) {$b=$};
\node at (5.6,1.5) {\scalebox{0.7}{$M^{\scriptscriptstyle{-1}}$}};
\end{tikzpicture}
\end{center}

\begin{center}\hspace{-3.25em}
\begin{tikzpicture}[scale=0.6]
% Define offset variables
\def\x{1.0}  % Horizontal offset
\def\y{0}  % Vertical offset

% Base coordinates that will be offset
\draw[thick] (-0.2,1) -- (6,1);  % fraction line

\fill[violet!60] ({1.3+\x},{1.78+\y}) rectangle ({2.3+\x},{2.2+\y});
\draw[thick] ({1.3+\x},{1.78+\y}) rectangle ({2.3+\x},{2.2+\y});

\fill[red!30] ({2.4+\x},{1.2+\y}) rectangle ({3.4+\x},{2.2+\y});
\draw[thick] ({2.4+\x},{1.2+\y}) rectangle ({3.4+\x},{2.2+\y});

\fill[red!30] (2.8,-0.3) rectangle (3.8,0.7);
\draw[thick] (2.8,-0.3) rectangle (3.8,0.7);
\node at (3.3,0.2) {\scalebox{0.8}{$iM$}};

\fill[violet!60] ({3.5+\x},{1.2+\y}) rectangle ({3.92+\x},{2.2+\y});
\draw[thick] ({3.5+\x},{1.2+\y}) rectangle ({3.92+\x},{2.2+\y});

\node at (-1,1) {$c=$};
\node at (3.9,1.7) {\scalebox{0.7}{$M^{\scriptscriptstyle{-1}}$}};
\node at (1.0,1.7) {$\exp\bigl(-\frac12$};
\node at (5.5,1.7) {$\bigr)$};
\node at (2.5, 0.3) {$\sqrt{\det\phantom{\bigl(}\hspace{0.5cm}}$};
\end{tikzpicture}
\end{center}

One special case where the Bargmann representation is particularly useful is the inner product with the vacuum. The Bargmann function of the $n$-mode vacuum is the constant function $F_{|0\rangle}(z) = 1$, which leads to the useful fact that the leftover Bargmann function is obtained by evaluating the original Bargmann function at zero on the variables involved in the inner product, or alternatively by discarding the corresponding rows and columns from the Abc parametrization. 

This gives a direct interpretation of the $c$ part of the Abc parametrization. For a ket it is $c_{|\psi\rangle} = F_{|\psi\rangle}(0) = \langle 0|\psi\rangle$, i.e., the vacuum amplitude of $|\psi\rangle$. For a unitary it is $c_U = F_U(0,0) = \langle0|U|0\rangle$, i.e., the vacuum-to-vacuum transition amplitude. For a density matrix it is $c_\rho = F_\rho(0,0) = \langle0|\rho|0\rangle$, i.e., the vacuum probability. And so on depending on the nature of the object described by the Bargmann function.

An instructive example of this is the Fock damping operator, which is equivalent to a beamsplitter with vacuum in one of its inputs and in one of its outputs. In fact, the Bargmann matrix of the Fock damping operator can be computed by starting from the Bargmann matrix of a beam splitter and then deleting the two rows and columns that correspond to one input and one output.

In what follows, we describe how one can use inner product formulas above to compute the inverse of a Gaussian operation.

\subsection{Convergence of Gaussian integrals} As a remark, we highlight that the contraction formulas introduced above assume the Gaussian integral converges. This is often the case, as the application of Gaussian unitaries and channels on Gaussian states is well-defined. However, convergence of a Gaussian integral is not always guaranteed, and we should be careful when contracting non-physical Gaussian components. Assume we are contracting two Gaussian objects $G_1$ and $G_2$, where $G_1$ is defined on the joint space $MN$ and $G_2$ is defined on the joint space $NP$. The contraction is on the space $N$. This is the most general form of a contraction. Let $A_{N}^{(i)}$ be the $N$ block of $A_{G_i}$ for $i=1,2$. This means
\begin{align}
A_{G_1} = \begin{bmatrix}
\ast & \ast\\
\ast & A_{N}^{(1)}
\end{bmatrix}, \quad A_{G_2} = \begin{bmatrix}
A_{N}^{(2)} & \ast\\
\ast & \ast
\end{bmatrix}.
\end{align}
We are interested in checking the convergence of the following integral
\begin{align}
\int_{z\in\mathbb C^{|N|}} e^{-\norm{z}^2} F_{G_1}(w,z^\ast) F_{G_2}(z,v)\, \mathrm d^{2|N|}z
\end{align}
for all $w\in\mathbb C^{|M|}$ and $v\in\mathbb C^{|P|}$.
We show that
\begin{proposition}\label{prop:contraction-condition}
The contraction of $G_1$ and $G_2$ is well-defined if
\begin{align}
\norm{A_N^{(1)} + A_{N}^{(2)}{}^\ast} < 2,
\end{align}
where $\norm{\cdot}$ denotes the operator norm (i.e., the largest singular value).
\end{proposition}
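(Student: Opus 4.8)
The plan is to reduce convergence to the sign of the real part of the integrand's exponent in the integration variable $z$, and to show that the stated bound forces this real part to decay quadratically, which gives absolute convergence for every fixed $w,v$. First I would write both Bargmann functions in their Abc form and collect all terms quadratic in $z$ and $z^\ast$. Since the Gaussian weight supplies $-\norm{z}^2$, the $N$-block of $G_2$ (whose $N$-variable is $z$) supplies a $z^{\mathrm T}A_N^{(2)}z$ term, and the $N$-block of $G_1$ (whose $N$-variable is $z^\ast$) supplies a $(z^\ast)^{\mathrm T}A_N^{(1)}z^\ast$ term, the exponent of the integrand is
\begin{align}
E(z) = -\norm{z}^2 + \tfrac12 z^{\mathrm T} A_N^{(2)} z + \tfrac12 (z^\ast)^{\mathrm T} A_N^{(1)} z^\ast + (\text{terms linear in } z,z^\ast),
\end{align}
where the linear part collects the couplings to the spectator variables $w,v$ and the $b$-vectors. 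Because the linear terms grow only linearly in $z$, they cannot affect convergence and I would discard them from the outset.

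Next I would compute $\mathrm{Re}\,E(z)$. Using $\overline{z^{\mathrm T} M z} = (z^\ast)^{\mathrm T} M^\ast z^\ast$, the two quadratic pieces recombine into a single real expression,
\begin{align}
\mathrm{Re}\,E(z) = -\norm{z}^2 + \tfrac12 \,\mathrm{Re}\!\left( z^{\mathrm T} B\, z \right), \qquad B := A_N^{(2)} + A_N^{(1)\ast}.
\end{align}
The point is that $B^\ast = A_N^{(1)} + A_N^{(2)\ast}$ is precisely the matrix in the statement, and since the operator norm is invariant under entrywise conjugation we have $\norm{B} = \norm{B^\ast} = \norm{A_N^{(1)} + A_N^{(2)\ast}}$. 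The decisive estimate is then a Cauchy--Schwarz bound: writing $z^{\mathrm T}Bz = \langle \bar z, Bz\rangle$ yields $|z^{\mathrm T}Bz| \le \norm{\bar z}\,\norm{Bz} \le \norm{B}\,\norm{z}^2$, hence
\begin{align}
\mathrm{Re}\,E(z) \le -\norm{z}^2 + \tfrac12 \norm{B}\,\norm{z}^2 = -\Big(1 - \tfrac12\norm{B}\Big)\norm{z}^2.
\end{align}
Under the hypothesis $\norm{A_N^{(1)} + A_N^{(2)\ast}} = \norm{B} < 2$, the coefficient $\delta := 1 - \tfrac12\norm{B}$ is strictly positive, so the modulus of the integrand is dominated by $e^{-\delta\norm{z}^2}$ times the exponential of a linear form in $z$; this is integrable over $\mathbb C^{|N|}$ for all $w,v$, which proves absolute convergence and hence that the contraction is well-defined.

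I expect the only genuinely delicate point to be the bookkeeping of conjugates: one must track that $G_1$ is evaluated at $z^\ast$ while $G_2$ is evaluated at $z$, so that the quadratic blocks assemble into $B = A_N^{(2)} + A_N^{(1)\ast}$ rather than $A_N^{(1)}+A_N^{(2)}$, and one must invoke $\norm{B}=\norm{B^\ast}$ to recover the norm as written in the statement. Everything else---dropping the linear terms and applying Cauchy--Schwarz---is routine. I would close with the remark that the derived condition is only sufficient: it is exactly the threshold at which the real-part quadratic form $-\norm{z}^2 + \tfrac12\mathrm{Re}(z^{\mathrm T}Bz)$ can cease to be negative definite, so one should not expect to relax it without further structural assumptions on the blocks.
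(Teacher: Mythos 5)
Your proof is correct, and it reaches the paper's condition by a genuinely more elementary route. The paper's proof changes to real coordinates via $z=(x+iy)/\sqrt{2}$, rewrites the full quadratic form as a real symmetric $2|N|\times 2|N|$ matrix $M'$, invokes the criterion that the real Gaussian integral converges when $\mathrm{Re}(M')<0$, and then needs a dedicated spectral lemma (that a block matrix $\bigl(\begin{smallmatrix}D & E\\ E & -D\end{smallmatrix}\bigr)$ with $D,E$ real symmetric has extreme eigenvalue $\sigma_{\max}(D+iE)$) to convert that eigenvalue condition into the operator-norm statement $\norm{A_N^{(1)}{}^\ast+A_N^{(2)}}<2$. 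You instead stay in complex coordinates, observe that $\Re\bigl((z^\ast)^{\mathrm T}A_N^{(1)}z^\ast\bigr)=\Re\bigl(z^{\mathrm T}A_N^{(1)\ast}z\bigr)$ so the two quadratic pieces merge into $\tfrac12\Re(z^{\mathrm T}Bz)$ with $B=A_N^{(2)}+A_N^{(1)\ast}$, and then use Cauchy--Schwarz, $|z^{\mathrm T}Bz|\le\norm{B}\norm{z}^2$, to get the uniform Gaussian decay $e^{-\delta\norm{z}^2}$ with $\delta=1-\tfrac12\norm{B}>0$; your bookkeeping of the conjugates and the observation $\norm{B}=\norm{B^\ast}$ are both right, as is discarding the linear terms. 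What each approach buys: yours avoids the real-coordinate reduction and the auxiliary block-matrix lemma entirely, giving a shorter and self-contained sufficiency argument; the paper's route computes the exact real Hessian of the exponent, which pins down negative definiteness as the precise convergence mechanism and supplies the helper lemma reused nowhere else but potentially of independent interest. Note also that nothing is lost by your seemingly cruder Cauchy--Schwarz step: since $B$ is complex symmetric (diagonal blocks of symmetric matrices), Takagi factorization gives $\sup_{\norm{z}=1}|z^{\mathrm T}Bz|=\sigma_{\max}(B)$, so your estimate saturates exactly at the same threshold the paper identifies, consistent with your closing remark that the condition marks the boundary where the real quadratic form ceases to be negative definite.
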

\begin{proof}
The convergence of the contraction requires
the convergence of the integral
\begin{align}
\int_{z\in\mathbb C^{|N|}} \exp\left(\frac12 [z^\ast{}^{\mathrm T} z^{\mathrm T}] M \begin{bmatrix}
z^\ast\\
z
\end{bmatrix}
+ \zeta^{\mathrm T} \begin{bmatrix}
z^\ast\\
z
\end{bmatrix}
\right)\, \mathrm d^{2|N|}z,
\end{align}
where
\begin{align}
M = \begin{bmatrix}
A_{N}^{(1)} & -\mathbb 1\\
-\mathbb 1 & A_N^{(2)}
\end{bmatrix}
\end{align}
Here, $\zeta$ is some vector that depends on the other blocks of the Abc parametrizations of $G_1$ and $G_2$ and extra variables $w,v$, but it will not impact the convergence of the integral. We can rewrite the integral in terms of real-valued parameters by letting $z = \frac{x+iy}{\sqrt 2}$ and noticing that
\begin{align}\label{eq:M-prime}
\begin{bmatrix}
z^\ast{}^{\mathrm T} & z^{\mathrm T}
\end{bmatrix}
M
\begin{bmatrix}
z^\ast\\
z
\end{bmatrix}
=
\begin{bmatrix}
x^{\mathrm T} & y^{\mathrm T}
\end{bmatrix}
\underbrace{
\begin{bmatrix}
-\mathbb 1 + \frac{A_N^{(1)}  + A_N^{(2)}}{2} & \frac{iA_N^{(1)} -i A_N^{(2)}}{2}\\
\frac{iA_N^{(1)}  -i A_N^{(2)}}{2} & -\mathbb 1 - \frac{A_N^{(1)}  + A_N^{(2)}}{2}
\end{bmatrix}}_{M'}
\begin{bmatrix}
x\\
y
\end{bmatrix}.
\end{align}
As the Gaussian integral
\begin{align}
\int_{x,y\in\mathbb R^{|N|}} \exp\left(\frac12 [x^{\mathrm T} y^{\mathrm T}] M' \begin{bmatrix}
x\\
y
\end{bmatrix}
+ v^T \begin{bmatrix}
x\\
y
\end{bmatrix}
\right)\, \mathrm d^{|N|}x\mathrm d^{|N|}y
\end{align}
is convergent if $\mathrm{Re}(M') < 0$ (see \cite[Section 1.2]{zinn2010path}). Using \eqref{eq:M-prime} for the expression of $M'$, we get
\begin{align}\label{eq:contraction-condition-int}
\begin{bmatrix}
\Re(A_N^{(1)} + A_N^{(2)}) & -\Im(A_N^{(1)} - A_N^{(2)})\\
-\Im(A_N^{(1)} - A_N^{(2)}) & -\Re(A_N^{(1)} + A_N^{(2)})
\end{bmatrix} < 2\cdot\mathbb 1.
\end{align}
We then use the following lemma. In the following lemma, we use $(\lambda_{j})_j$ to denote the eigenvalues, and we use $\sigma_{\max}$ to denote the largest singular value.
\begin{lemma}\label{lem:helper-2}
The symmetric matrix constructed as
\begin{align}\label{eq:composite-matrix}
C = \begin{bmatrix}
D & E\\
E & -D
\end{bmatrix}
\end{align}
by real-valued symmetric matrices $D,E \in \mathbb R^n$ satisfies $\max_{j=1,\cdots,2n} |\lambda_{j}(C)| = \sigma_{\max}(D+iE)$.
\end{lemma}
\begin{proof}
Let $W = \frac1{\sqrt 2} \begin{bmatrix}
\mathbb 1_n & i \mathbb 1_n\\
\mathbb 1_n & -i \mathbb 1_n
\end{bmatrix}$ and notice that
\begin{align}
W C W^\dag = \begin{bmatrix}
0 & D + iE\\
D - iE & 0
\end{bmatrix}.
\end{align}
Let $H = W C W^\dagger$. As $H$ is Hermitian, we have that $\max_j |\lambda_j(H)| = \max_j \sqrt{\lambda_j(H^2)}$. Furthermore, we have
\begin{align}
H^2 = \begin{bmatrix}
(D+iE)(D-iE) & 0\\
0 & (D-iE)(D+iE)
\end{bmatrix},
\end{align}
and as eigenvalues of $(D+iE)(D-iE)$ and $(D-iE)(D+iE)$ are the squared singular values of $D+iE$, we conclude that $\max_j |\lambda_j(H^2)| = \sigma_{\max}(D+iE)^2$. The final result follows as $\lambda_j(H) = \lambda_j(C)$ since they are related through a conjugation by a unitary ($W$).
\end{proof}
We finish the proof, by noting that according to \protect\cref{lem:helper-2}, the largest eigenvalue on the left-hand side of \eqref{eq:contraction-condition-int} is $\norm{\Re(A_N^{(1)} + A_N^{(2)}) - i \Im (A_N^{(1)} - A_N^{(2)})} = \norm{A_N^{(1)}{}^\ast +A_N^{(2)}}$.
\end{proof}

To give example use cases of \protect\cref{prop:contraction-condition}, we note that the condition for a Gaussian state $\ket{\psi}$ (with $A_\psi$) to be normalizable is that it should be able to contract it with $\bra{\psi}$ (which has the A-matrix $A_\psi^\ast$). According to \protect\cref{prop:contraction-condition}, this is possible if $\norm{A_\psi + (A_\psi^\ast)^\ast} < 2$ which is equivalent to the known condition $\norm{A_\psi} <1 $. As another example we consider inversion of a Gaussian operator. Let $G$ be a Gaussian operator, with Abc parametrization in the output-input order written as
\begin{align}
A_G = \begin{bmatrix}
A_{\mathrm{out}} & R^T\\
R & A_{\mathrm{in}}
\end{bmatrix},
\end{align}
where $A_{\mathrm{out}}, A_{\mathrm{in}}, R \in \mathbb C^{n\times n}$. We have that $G$ is invertible if $\norm{A_{\mathrm{in}}^\ast + (A_{\mathrm{in}} - R A_{\mathrm{out}}^{-1} R^T)^{-1}  } \leq 2$, and its inverse $G^{-1}$ is parametrized by
\begin{align}
A_{G^{-1}} = X A_{G}^{-1} X,\quad b_{G^{-1}} = -XA_{G}^{-1} b_G,
\end{align}
where $X = \begin{bmatrix}
    0 & \mathbb 1_n\\
    \mathbb 1_n & 0
\end{bmatrix}$. This can be verified directly by using the Gaussian contraction formulas and checking \protect\cref{prop:contraction-condition}.

\section{Physicality requirements}\label{app:physicality}

In this section we discuss conditions under which a CV Gaussian operator is Hermitian, positive, and lastly conditions under which the operator corresponds to a CV density matrix. We then continue by extending these conditions to CV maps, using their Choi–Jamiołkowski operator.

\subsection{Operators}

Recall that every Gaussian operator that acts on a Hilbert space of $n$ particles (e.g., a Gaussian mixed state, or a Gaussian unitary) has a Gaussian Bargmann function \cite{yao2024riemannian} defined as
\begin{align}\label{eq:rho-in-bargmann}
\begin{split}
F(z, z') &= \exp(\frac{\norm{z}^2 + \norm{z'}^2}{2}) \bra{z^\ast} \rho \ket{z'}\\
&= c  \exp(\zeta^{\mathrm{T}} A \zeta + \zeta^{\mathrm{T}}b),
\end{split}
\end{align}
with $\zeta = \begin{bmatrix}z\\z'\end{bmatrix}\in\mathbb{C}^{2n}$.
Let $G$ represent such an operator. Note that $ A\in\mathbb C^{2n\times 2n}$ and $ b \in \mathbb C^{2n}$. If the operator is a Hermitian CV operator, then its Ab parametrization has to satisfy certain block structures, determined in the next lemma.

\begin{lemma}\label{lem:hermiticity}
Let $A_G$ and $b_G$ represent the Ab parameters of a Hermitian operator in type-wise ordering. Then, they admit the following block structures
\begin{align}\label{eq:block-structures}
A_G = \begin{bmatrix}
\Lambda^\dagger & \Gamma\\
\Gamma^{\mathrm{T}} & \Lambda
\end{bmatrix},
\quad
b_G = \begin{bmatrix}
\beta^\ast\\
\beta
\end{bmatrix}
\end{align}
with blocks satisfying
\begin{align}\label{eq:hermitian-conditions}
\Lambda = \Lambda^{\mathrm{T}}, \quad \Gamma = \Gamma^\dag,
\end{align}
\end{lemma}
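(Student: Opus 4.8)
The plan is to translate the operator identity $G=G^\dagger$ into a single functional symmetry of the Bargmann function $F_G$, and then recover the block structure by comparing the coefficients of the Gaussian exponential on both sides.

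First I would record how complex conjugation acts on $F_G$. Starting from $F_G(z,z')=e^{(\|z\|^2+\|z'\|^2)/2}\langle z^\ast|G|z'\rangle$ and using $\langle z'|G^\dagger|z^\ast\rangle=\overline{\langle z^\ast|G|z'\rangle}$, a relabeling of the coherent-state arguments gives the identity $\overline{F_G(z,z')}=F_{G^\dagger}(\overline{z'},\overline{z})$, valid for any operator. Specializing to the Hermitian case $G=G^\dagger$ yields $\overline{F_G(z,z')}=F_G(\overline{z'},\overline{z})$, the functional reflection that encodes Hermiticity in the Bargmann picture.

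Next I would substitute the Gaussian ansatz and match monomials. Writing $\zeta=(z,z')$ and blocking $A$ as $\bigl[\begin{smallmatrix}A_{11}&A_{12}\\A_{21}&A_{22}\end{smallmatrix}\bigr]$ and $b$ as $(b_1,b_2)$, I recall that the Bargmann $A$ matrix is always complex symmetric, so $A_{11}^{\mathrm T}=A_{11}$, $A_{22}^{\mathrm T}=A_{22}$, and $A_{21}=A_{12}^{\mathrm T}$. Expanding both sides of $\overline{F_G(z,z')}=F_G(\overline{z'},\overline{z})$ and equating the linearly independent monomials $1$, $\overline{z}_i$, $\overline{z'}_i$, $\overline{z}_i\overline{z}_j$, $\overline{z}_i\overline{z'}_j$, $\overline{z'}_i\overline{z'}_j$ forces $\overline{c}=c$, $A_{22}=\overline{A_{11}}$, $A_{12}=A_{12}^\dagger$, and $b_2=\overline{b_1}$. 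Setting $\Lambda:=A_{22}$ (symmetric, so $\Lambda=\Lambda^{\mathrm T}$), $\Gamma:=A_{12}$, and $\beta:=b_2$, the relation $A_{22}=\overline{A_{11}}$ rewrites as $A_{11}=\overline{\Lambda}=\Lambda^\dagger$, the condition $A_{12}=A_{12}^\dagger$ reads $\Gamma=\Gamma^\dagger$, and $b_1=\overline{b_2}=\beta^\ast$; together with $A_{21}=A_{12}^{\mathrm T}=\Gamma^{\mathrm T}$ this is exactly the claimed block form, and it shows $c$ is real as a byproduct.

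The only delicate step is the first one: obtaining the reflection $\overline{F_G(z,z')}=F_G(\overline{z'},\overline{z})$ requires careful tracking of the conjugation built into the bra coherent state $\langle z^\ast|$ together with the argument swap induced by taking the adjoint. Once that identity is secured, the remainder is a routine coefficient comparison, made unambiguous by the uniqueness of the Gaussian parametrization.
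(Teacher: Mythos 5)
Your proposal is correct and follows essentially the same route as the paper: Hermiticity is encoded as the conjugation-plus-swap symmetry of the Bargmann function (your reflection identity $\overline{F_G(z,z')}=F_G(\overline{z'},\overline{z})$ is the paper's $\bra{z}G\ket{w}=\bra{w}G\ket{z}^\ast$), which yields $XA_GX=A_G^\ast$ and $Xb_G=b_G^\ast$, and combining with complex symmetry of $A_G$ gives the block structure. Your explicit monomial matching is just a more detailed execution of the same coefficient comparison, with the reality of $c$ as a harmless bonus.
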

\begin{proof}
Note that $A_G$ represents a quadratic form, and hence, can always be taken to be complex symmetric. Moreover, as $G$ is Hermitian, we get
\begin{align}
\bra{ z}G\ket{ w} = \bra{w}G\ket{ z}^\ast.
\end{align}
Using the formulation in \eqref{eq:rho-in-bargmann}, this gives $ X  A_G  X =  A^\ast_G$ and $ X  b =  b^\ast$ with $ X = (\begin{smallmatrix}
    0&\mathbb{1}\\\mathbb{1} & 0
\end{smallmatrix})$, which (using the fact that $ A_G$ is symmetric) gives the conditions in Eq.~\eqref{eq:hermitian-conditions} and the block structures in Eq. \eqref{eq:block-structures}.
\end{proof}

We now introduce conditions under which $G$ corresponds to a positive CV operator.

\begin{lemma}\label{lem:positivity}
The operator $G$ is positive semi-definite, if and only if
\begin{align}
\Gamma \geq 0, \text{and }c\geq 0.
\end{align}
\end{lemma}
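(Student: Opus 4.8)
The plan is to test positivity directly against coherent states, exploiting that they are overcomplete: since finite combinations $\ket{\psi}=\sum_i c_i\ket{\alpha_i}$ span a dense subspace, $G\ge0$ is equivalent to the statement that the kernel $\mathcal K(\alpha,\beta):=\bra{\alpha}G\ket{\beta}$ is a positive semidefinite kernel, i.e.\ $\sum_{i,j}\bar c_i c_j\,\mathcal K(\alpha_i,\alpha_j)\ge0$ for every finite family $\{\alpha_i\}$, $\{c_i\}$. First I would rewrite $\mathcal K(\alpha,\beta)=e^{-(\|\alpha\|^2+\|\beta\|^2)/2}F_G(\alpha^\ast,\beta)$ and substitute the Gaussian form of \eqref{eq:rho-in-bargmann} together with the Hermitian block structure of \cref{lem:hermiticity}, so that $\zeta=(\alpha^\ast,\beta)^{\mathrm T}$.

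The key bookkeeping step is to split the quadratic exponent into ``single-argument'' pieces and one genuine cross term. Using $\Lambda=\Lambda^{\mathrm T}$ and the relation $b_-=b_+^\ast$ between the two halves of $b$ from \cref{lem:hermiticity}, the $\Lambda$ and $b$ contributions depend either antiholomorphically on $\alpha$ alone or holomorphically on $\beta$ alone; together with the measure factor $e^{-(\|\alpha\|^2+\|\beta\|^2)/2}$ they assemble into a product $\overline{g(\alpha)}\,g(\beta)$ for an explicit nonvanishing $g$. The only term coupling the two arguments comes from the $\Gamma$ block, which (using $\Gamma=\Gamma^\dagger$) contributes $e^{\alpha^\dagger\Gamma\beta}$, up to a positive convention-dependent multiple of the exponent. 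Hence $\mathcal K(\alpha,\beta)=c\,\overline{g(\alpha)}\,g(\beta)\,e^{\alpha^\dagger\Gamma\beta}$, and with $d_i:=c_i\,g(\alpha_i)$ the positivity condition collapses to requiring $c\sum_{i,j}\bar d_i d_j\,e^{\alpha_i^\dagger\Gamma\alpha_j}\ge0$ for all $\{d_i\}$, $\{\alpha_i\}$. Evaluating at $\alpha=\beta=0$ gives $c=\bra{0}G\ket{0}$, so $G\ge0$ immediately forces $c\ge0$; conversely $c\ge0$ lets me factor out the nonnegative scalar.

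It then remains to prove the clean auxiliary statement that the kernel $k_0(\alpha,\beta)=e^{\alpha^\dagger\Gamma\beta}$ is positive semidefinite if and only if $\Gamma\ge0$. For sufficiency I would factor $\Gamma=V^\dagger V$ and write $k_0(\alpha,\beta)=e^{\langle V\alpha,V\beta\rangle}=\sum_{k\ge0}\tfrac1{k!}\langle V\alpha,V\beta\rangle^{k}$; each summand is a Schur (entrywise) power of the Gram kernel $\langle V\alpha,V\beta\rangle\ge0$, hence positive semidefinite, and the sum inherits this. For necessity I would probe with points $\alpha_i=t\,v_i$ and expand $k_0(\alpha_i,\alpha_j)=1+t^2\,\bar v_i^{\mathrm T}\Gamma v_j+O(t^4)$: choosing coefficients with $\sum_i d_i=0$ kills the constant all-ones kernel, so positive semidefiniteness forces $\sum_{i,j}\bar d_i d_j\,\bar v_i^{\mathrm T}\Gamma v_j\ge0$ at order $t^2$, and the two-point choice $(d,v)\in\{(1,w),(-1,0)\}$ isolates $w^\dagger\Gamma w\ge0$ for arbitrary $w$, i.e.\ $\Gamma\ge0$. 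Combining the two paragraphs yields $G\ge0\iff\Gamma\ge0$ and $c\ge0$.

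I expect the main obstacle to be the factorization in the second paragraph: one must verify carefully that Hermiticity is precisely what guarantees every term other than $\alpha^\dagger\Gamma\beta$ splits as $\overline{g(\alpha)}\,g(\beta)$ (in particular that the linear contribution uses $b_-=b_+^\ast$), because the entire reduction to the standard exponential kernel hinges on this. Once that is in place, the two halves of the kernel lemma are routine, with the reverse implication needing only the small-$t$ test-point trick above.
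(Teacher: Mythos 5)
Your proof is correct, but it takes a genuinely different route from the paper's. The paper works in the Fock basis and by explicit factorization: for necessity it reads off $c=\bra{0}G\ket{0}$ and the one-photon block $\bra{1_i}G\ket{1_j}=c\,(\Gamma+\beta^\ast\beta^{\mathrm T})_{ij}$, removing $b$ by conjugation with displacements so that positivity of this principal submatrix forces $c\,\Gamma\ge0$; for sufficiency it exhibits an explicit Gaussian operator $T$, whose $A$-matrix has $\sqrt{\Gamma}$ in its off-diagonal blocks, with $G=T^\dagger T$, verifying the contraction is well defined via \cref{prop:contraction-condition}. You instead work entirely with the coherent-state kernel: the Hermiticity structure of \cref{lem:hermiticity} (namely $\Lambda=\Lambda^{\mathrm T}$, hence $\Lambda^\dagger=\Lambda^\ast$, and the conjugate pairing of the two halves of $b$) is precisely what makes the $\Lambda$- and $b$-contributions split as $\overline{g(\alpha)}\,g(\beta)$, the cross term is exactly $\alpha^\dagger\Gamma\beta$ with no stray factor (the two off-diagonal contributions are transposes of the same scalar), and the lemma reduces to the classical fact that $e^{\alpha^\dagger\Gamma\beta}$ is a positive semidefinite kernel iff $\Gamma\ge0$, which you settle by the Schur product theorem in one direction and the small-$t$ probe in the other. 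What each buys: the paper's $T^\dagger T$ factorization is constructive---it hands you a Gaussian ``square root'' of $G$ reusable elsewhere in the formalism---whereas your kernel argument needs no displacement trick (the $b$-terms are absorbed into $g$ automatically), no matrix square root, and no Gaussian-integral convergence check, at the price of invoking kernel positivity theory and the overcompleteness of coherent states. Both arguments share the same mild edge cases, handled implicitly in the paper as well: the degenerate situation $c=0$ (where $G=0$ and $\Gamma$ carries no information), and the identification of operator positivity with positivity of matrix elements on a dense family (Fock states for the paper, finite combinations of coherent states for you).
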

\begin{proof}
First, note that $c\geq 0$ must hold as
\begin{align}
\bra{0} G \ket{0} = c,
\end{align}
and that $\bra{0} G \ket{0}\geq0$ is imposed by positivity of $G$.
We show the rest this in two steps:
\begin{itemize}
    \item $G \geq 0 \Rightarrow \Gamma\geq 0$: To see this, let $\ket{1_i}$ represent the Fock state of $n$ modes that has one photon on the $i$-th mode and is vacuum elsewhere, i.e.,
    \begin{align}
    \ket{1_i} := |0\cdots0
    \underbrace{1}_{\text{\small $i$-th position}}
    0\cdots 0\rangle.
\end{align}
We then have
\begin{align}
\bra{1_i} \rho \ket{1_j} = c\cdot(\Gamma + \beta^\ast\beta^{\mathrm{T}})_{ij}.
\end{align}
Note that $(\bra{1_i} \rho \ket{1_j})_{ij}$ represents the principal submatrix of the Fock representation where the total photon number is $1$, it should be a positive semi-definite matrix. As the b vector can be made zero through conjugation by displacement operators (which does not affect the A matrix), we conclude that positivity of $G$ implies $c\cdot \Gamma \geq 0$.

\item $\Gamma\geq 0 \Rightarrow G\geq 0$: Here, we use the following decomposition
\begin{align}
G = T^\dagger T 
\end{align}
with $T$ being a Gaussian operator acting on Hilbert space of $n$ modes with the following Ab matrices
\begin{align}
A_T = \begin{bmatrix}
\Lambda & \sqrt\Gamma^{\mathrm{T}}\\
\sqrt\Gamma & 0
\end{bmatrix},
\quad
b_T = \begin{bmatrix}
\beta\\
0
\end{bmatrix},
\quad
c_T = \sqrt{c}.
\end{align}
One can readily verify that $ G = T^\dag T$ through the inner product formulas introduced in \protect\cref{app:inner-product-in-bargmann}. We also highlight that $T$ and $T^\dagger$ can be contracted as the contraction condition of \protect\cref{prop:contraction-condition} (in this case $\norm{0+0}=0<2$) is satisfied.
\end{itemize}
\end{proof}

We now introduce a physicality requirement for mixed states and quantum channels. Recall that the only requirement for $A_{|\psi\rangle}$ to represent a physical pure state $\ket\psi$ was to be complex symmetric with eigenvalues in the unit disk \cite{chabaud2021continuous}, which simply corresponds to $\ket\psi$ having a finite norm. We now proceed to introduce physicality requirements for a mixed Gaussian state. To this end, we first need an intermediary lemma:

\begin{lemma}\label{lem:block-pos}
We have the following equivalence
\begin{align}\label{eq:equality}
    \begin{bmatrix}
     P &  R\\ R^\dagger & Q
\end{bmatrix}\geq 0 \Leftrightarrow \begin{cases} P\geq0\\ Q\geq  R^\dagger  P^{+}  R.\end{cases}
\end{align}
and similarly,
\begin{align}\label{eq:strict-inequality}
    \begin{bmatrix}
     P &  R\\ R^\dagger & Q
\end{bmatrix} > 0 \Leftrightarrow \begin{cases} P>0\\ Q >  R^\dagger  P^{-1}  R.\end{cases}
\end{align}
\end{lemma}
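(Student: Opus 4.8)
The plan is to prove both equivalences with the standard Schur-complement technique. I would dispatch the strict (positive definite) statement first, where $P$ is automatically invertible, and then obtain the semidefinite statement from it by a perturbation $P\mapsto P+\epsilon\mathbb 1$ followed by $\epsilon\to0^+$. Throughout, the workhorse is a single block-triangular congruence together with the fact that congruence by an invertible matrix preserves (strict) positivity.

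For the strict statement, the key object is the invertible congruence
\begin{align}
L = \begin{bmatrix}\mathbb 1 & -P^{-1}R\\ 0 & \mathbb 1\end{bmatrix},
\end{align}
defined whenever $P$ is invertible, for which a direct computation gives the block-diagonalization
\begin{align}
L^\dagger\begin{bmatrix}P & R\\ R^\dagger & Q\end{bmatrix}L = \begin{bmatrix}P & 0\\ 0 & Q-R^\dagger P^{-1}R\end{bmatrix}.
\end{align}
Since $L$ is invertible, the left block matrix is $>0$ if and only if both diagonal blocks on the right are, i.e.\ $P>0$ and $Q>R^\dagger P^{-1}R$. For the forward direction I first observe that $P>0$ is forced by restricting the quadratic form of the block matrix to vectors supported on the first block (a principal submatrix of a positive definite matrix is positive definite), which makes $P$ invertible so that $L$ is well defined; the reverse direction is immediate since $P>0$ already supplies invertibility.

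For the semidefinite statement I would run the same congruence on the regularized matrix $M_\epsilon := M+\epsilon\left[\begin{smallmatrix}\mathbb 1 & 0\\ 0 & 0\end{smallmatrix}\right]$, whose top-left block is $P_\epsilon:=P+\epsilon\mathbb 1>0$. The block-diagonalization above is a purely algebraic identity needing only invertibility of $P_\epsilon$, so $M_\epsilon\geq0$ if and only if $Q\geq R^\dagger P_\epsilon^{-1}R$. Diagonalizing $P=U\,\mathrm{diag}(D,0)\,U^\dagger$ with $D>0$ and writing $U^\dagger R=\left[\begin{smallmatrix}R_1\\ R_2\end{smallmatrix}\right]$ gives the explicit split
\begin{align}
R^\dagger P_\epsilon^{-1}R = R_1^\dagger(D+\epsilon\mathbb 1)^{-1}R_1 + \tfrac1\epsilon R_2^\dagger R_2 .
\end{align}
If $M\geq0$ then $M_\epsilon\geq0$ for every $\epsilon>0$, so $Q\geq R^\dagger P_\epsilon^{-1}R$ for all $\epsilon$; the $\tfrac1\epsilon R_2^\dagger R_2$ term then forces $R_2=0$, and letting $\epsilon\to0^+$ yields $Q\geq R^\dagger P^+R$. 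Conversely, once $R_2=0$ the split is monotone and free of the singular term, so $Q\geq R^\dagger P^+R$ gives $Q\geq R^\dagger P_\epsilon^{-1}R$ for all $\epsilon$, whence $M_\epsilon\geq0$ and $M=\lim_{\epsilon\to0^+}M_\epsilon\geq0$ by closedness of the positive semidefinite cone.

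The step I expect to be the main obstacle, and which I would flag explicitly, is precisely the singular-$P$ case: the limiting argument shows that the correct characterization also requires the range condition $R_2=0$, i.e.\ $(\mathbb 1-PP^+)R=0$, which the argument both \emph{produces} in the forward direction and \emph{consumes} in the reverse. Dropping it makes the equivalence fail; for instance $P=\mathrm{diag}(1,0)$, $R=(0,1)^{\mathrm T}$, $Q=0$ satisfies $P\geq0$ and $Q\geq R^\dagger P^+R=0$, yet the block matrix has eigenvalue $-1$. Accordingly, in writing the final proof I would either append the condition $(\mathbb 1-PP^+)R=0$ to the right-hand side of the semidefinite equivalence, or verify that it holds automatically in the Hermitian block structure of \cref{lem:hermiticity} as it is used downstream.
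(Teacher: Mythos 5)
Your proof is correct, and for the semidefinite statement it takes a genuinely different route from the paper's. The paper diagonalizes $P$, asserts the range condition $R=\Pi_P R$, and then verifies the generalized block-triangular factorization \eqref{eq:lemma-backward} built from the pseudo-inverse, whose validity hinges on the identity $R^\dagger P^+ P=R^\dagger$; you instead settle the strict case by the standard congruence and reach the singular case by regularizing $P\mapsto P+\epsilon\mathbb 1$ and letting $\epsilon\to0^+$. Both arguments rest on the same congruence idea, but yours makes the role of $\ker P$ explicit (via the $\tfrac1\epsilon R_2^\dagger R_2$ term) rather than absorbing it into a factorization identity, at the cost of a limiting argument and closedness of the positive semidefinite cone.

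More importantly, the obstacle you flag is not an artifact of your method: it is a genuine defect of \eqref{eq:equality} as printed. Your counterexample ($P=\mathrm{diag}(1,0)$, $R=(0,1)^{\mathrm T}$, $Q=0$) is valid: the right-hand conditions hold since $R^\dagger P^+R=0=Q$, yet the block matrix contains the principal submatrix $\left(\begin{smallmatrix}0&1\\1&0\end{smallmatrix}\right)$ and hence has eigenvalue $-1$. The paper's proof derives $R=\Pi_P R$ from positivity of the full block matrix (a vanishing diagonal entry of a positive semidefinite matrix forces its row to vanish), which is legitimate only under the hypothesis of the forward direction; it then silently reuses $R^\dagger P^+P=R^\dagger$ to justify \eqref{eq:lemma-backward} in the backward direction, where that condition does not follow from $P\geq0$ and $Q\geq R^\dagger P^+R$ alone. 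So the backward implication of \eqref{eq:equality} needs your amendment $(\mathbb 1-PP^+)R=0$, while the strict version \eqref{eq:strict-inequality} is unaffected. One correction to your closing remark: the place this must be verified is not \cref{lem:hermiticity} (hermiticity alone does not produce the range condition) but the downstream uses of the backward direction, chiefly \eqref{eq:phi-physicality-break-down} in the proof of \cref{prop:mixed-state-case}. There the condition does hold: positivity of $\rho$, through the forward direction, yields $\Pi_{\alpha_\rho^{(m)}}\sigma_\rho^\dagger=\sigma_\rho^\dagger$, and right-multiplying by $(r_c^+)^\dagger$ preserves this, so the paper's applications survive once the lemma is restated with the extra hypothesis.
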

\begin{proof}
Let us first show \eqref{eq:equality}. Note that we can diagonalize $P$ by conjugating the big block matrix by $(\begin{smallmatrix}
V & 0\\
0 & \mathbb 1
\end{smallmatrix})$, where $V$ diagonalizes $P$ i.e., $V P V^\dagger$ is diagonal. Now, note that if any entry on the diagonal of $VPV^\dag$ is zero, the corresponding row in $V R$ must be entirely zero. Therefore, if the projector onto the image of $P$ is denoted by $\Pi_P$, we have $R = \Pi_P R$. Also, let us denote the pseudo inverse of a matrix, say $M$, by $M^+$. We then proceed to prove each direction of \eqref{eq:equality} in the following.
\begin{itemize}
\item To prove the backward ($\Leftarrow$) direction, note that
\begin{align}\label{eq:lemma-backward}
\begin{split}
\begin{bmatrix}
 P &  R\\
 R^\dagger &  Q
\end{bmatrix}
= \begin{bmatrix}
\mathbb 1 & 0\\
 R^\dagger  P^{+} & \mathbb 1
\end{bmatrix}
\begin{bmatrix}
 P & 0\\
0 &  Q -  R^\dagger  P^{+}  R
\end{bmatrix}
\begin{bmatrix}
\mathbb 1 &  P^{+} R\\
0 & \mathbb 1
\end{bmatrix}.
\end{split}
\end{align}
To see the equality, it is important to note that $R^\dag P^+ P = R^\dag \Pi_P = R^\dag$.
\item For the forward ($\Rightarrow$) direction, we first note that
\begin{align}
\begin{bmatrix}
\mathbb 1 &  A\\
0 & \mathbb 1
\end{bmatrix}^{-1} = 
\begin{bmatrix}
\mathbb 1 & - A\\
0 & \mathbb 1
\end{bmatrix},
\end{align}
which, based on Eq.~\eqref{eq:lemma-backward} implies
\begin{align}
\begin{split}
\begin{bmatrix}
 P & 0\\
0 &  Q -  R  P^{+}  R^\dagger
\end{bmatrix}= \begin{bmatrix}
\mathbb 1 & 0\\
- R^\dagger  P^{+} & \mathbb 1
\end{bmatrix}
\begin{bmatrix}
 P &  R\\
 R^\dagger &  Q
\end{bmatrix}
\begin{bmatrix}
\mathbb 1 & - P^{+} R\\
0 & \mathbb 1
\end{bmatrix}.
\end{split}
\end{align}
and proves the forward ($\Rightarrow$) direction.
\end{itemize}
For proving \eqref{eq:strict-inequality}, note that all principal submatrices of a positive definite matrix, are positive definite. This ensures the existence of $P^{-1}$. The rest of the proof is analogous to the proof of \eqref{eq:equality}, where one should note that the matrices of the form $(\begin{smallmatrix}
\mathbb 1 & A\\
0 & \mathbb 1
\end{smallmatrix})$ are full-rank and therefore, if $M = (\begin{smallmatrix}
    \mathbb 1 & 0\\
    A^\dag & 0
\end{smallmatrix}) N (\begin{smallmatrix}
\mathbb 1 & A\\
0 & \mathbb 1
\end{smallmatrix})$ with $N>0$, we conclude that $M>0$.
\end{proof}

\begin{proposition}\label{prop:dm-physicality}
Let $\rho$ be a Gaussian mixed state with Ab parameters $A_\rho,  b_\rho$ written in type-wise order as
\begin{align}
A_\rho = \begin{bmatrix}
\Lambda^\dag & \Gamma\\
\Gamma^{\mathrm{T}} & \Lambda
\end{bmatrix}
\end{align}
Then, we have
\begin{align}\label{eq:dm-physicality-conditions}
\begin{split}
&0\leq\Gamma<\mathbb{1},\\
&\Lambda^{\dagger}(\mathbb{1} - \Gamma^{\mathrm{T}})^{-1}\Lambda < \mathbb{1} - \Gamma,\\
&c_\rho = \sqrt{\mathrm{det}(\mathbb 1 - XA_\rho)}\, \exp(\frac12 b_\rho^{\mathrm{T}} (\mathbb 1 - X A_\rho)^{-1} b_\rho^\ast).
\end{split}
\end{align}
These conditions are both necessary and sufficient for a Bargmann function to represent a Gaussian mixed state.
\end{proposition}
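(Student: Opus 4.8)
The plan is to characterize when the Gaussian operator $G$ with Ab parameters $(A_\rho, b_\rho)$ in the Hermitian block form of Lemma \ref{lem:hermiticity} is a valid density matrix, i.e.~positive semidefinite with finite unit trace. Positivity is already handled by Lemma \ref{lem:positivity}, which gives $\Gamma \geq 0$ and $c_\rho \geq 0$ as necessary and sufficient for $G \geq 0$. So the remaining work is to pin down the conditions that guarantee the operator is \emph{trace-class} with finite (and normalizable) trace, and then to fix the value of $c_\rho$ from the normalization $\tr \rho = 1$. My expectation is that the strict inequalities $\Gamma < \mathbb{1}$ and $\Lambda^\dagger(\mathbb{1}-\Gamma^{\mathrm{T}})^{-1}\Lambda < \mathbb{1}-\Gamma$ will emerge precisely as the convergence/finiteness requirement, analogous to how $\|A_\psi\|<1$ controls normalizability of a pure state.

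First I would compute $\tr \rho$ directly using the Bargmann trace formula. Tracing a Gaussian operator amounts to a Gaussian integral where one subtracts the off-diagonal identity blocks $X$ coming from the measure and then applies the contraction formulas of \cref{app:inner-product-in-bargmann}; concretely the trace is controlled by the matrix $\mathbb{1} - X A_\rho$. This simultaneously produces the closed-form expression for $c_\rho$ in terms of $\det(\mathbb 1 - X A_\rho)$ and the quadratic form in $b_\rho$ listed in \eqref{eq:dm-physicality-conditions}, by setting the resulting scalar equal to $1$. The convergence of this Gaussian integral is exactly governed by \cref{prop:contraction-condition}: tracing corresponds to contracting $\rho$ against the identity kernel, so finiteness of the trace requires the relevant $M$-block to be invertible with the right spectral bound, which is what forces the $\det(\mathbb 1 - XA_\rho) \neq 0$ condition and thereby the strict inequalities.

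Next I would translate the positivity and finiteness conditions into the explicit matrix inequalities. Writing $M = \mathbb 1 - X A_\rho$ in $2\times 2$ block form using $A_\rho = \bigl[\begin{smallmatrix}\Lambda^\dagger & \Gamma\\ \Gamma^{\mathrm T} & \Lambda\end{smallmatrix}\bigr]$ and $X = \bigl[\begin{smallmatrix}0 & \mathbb 1\\ \mathbb 1 & 0\end{smallmatrix}\bigr]$, the blocks become $\mathbb 1 - \Gamma^{\mathrm T}$ (diagonal) and $-\Lambda$, $-\Lambda^\dagger$ (off-diagonal). Applying the Schur-complement characterization of \cref{lem:block-pos} to this block matrix converts the single positive-definiteness requirement into the pair $\mathbb{1} - \Gamma^{\mathrm T} > 0$ together with the Schur complement condition $\mathbb{1} - \Gamma > \Lambda^\dagger(\mathbb{1}-\Gamma^{\mathrm T})^{-1}\Lambda$, which are exactly the first two lines of \eqref{eq:dm-physicality-conditions} (noting $\mathbb{1}-\Gamma^{\mathrm T}>0$ and $\Gamma \geq 0$ combine to give $0 \leq \Gamma < \mathbb{1}$). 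The main obstacle I anticipate is carefully disentangling which part of the argument is about \emph{positivity of the operator} (Lemma \ref{lem:positivity}, giving $\Gamma \geq 0$) versus \emph{finiteness of the trace} (giving the strict upper bounds); these two must be shown to be jointly necessary and sufficient, and one has to verify that no additional hidden constraint is needed for the operator to be genuinely trace-class rather than merely having a convergent formal trace. I would close by checking sufficiency: given the three conditions, the $c_\rho$ formula makes $\tr\rho = 1$, Lemma \ref{lem:positivity} gives $\rho \geq 0$, and the strict inequalities guarantee the Fock-space representation decays, so $\rho$ is a legitimate Gaussian density matrix.
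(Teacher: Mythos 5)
Your overall strategy coincides with the paper's first proof of \cref{prop:dm-physicality}: positivity is delegated to \cref{lem:positivity}, unit trace is imposed via a Gaussian integral whose quadratic form is controlled by $\mathbb{1}-XA_\rho$, the strict inequalities come from applying the Schur-complement lemma (\cref{lem:block-pos}) to the block form $\mathbb{1}-XA_\rho=\bigl(\begin{smallmatrix}\mathbb{1}-\Gamma^{\mathrm{T}}&-\Lambda\\-\Lambda^\dagger&\mathbb{1}-\Gamma\end{smallmatrix}\bigr)$, and $c_\rho$ is fixed by normalization. Your block computation and the final assembly (combining $\Gamma\geq0$ with $\Gamma<\mathbb{1}$, then the Schur condition) are exactly the paper's steps.

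The one step that needs tightening is your justification of the convergence criterion. \cref{prop:contraction-condition} does not literally cover the tracing operation: it is formulated for contracting two Bargmann functions depending on disjoint variable sets, whereas the trace involves a single Bargmann function whose bra and ket variables are coupled through $\Gamma$ (the paper itself flags tracing as the one ``exception'' to the two-object contraction formulas). If you repair this by vectorizing and contracting $\mathrm{vec}(\rho)$ against $\mathrm{vec}(\mathbb{1})$, the criterion that proposition yields is $\norm{A_\rho+X}<2$, i.e.\ the eigenvalues of the Hermitian matrix $XA_\rho$ lie in $(-3,1)$, which is \emph{not} the same statement as $\mathbb{1}-XA_\rho>0$ (the two happen to agree on positive Gaussian operators, but that requires an extra argument). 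Moreover, \cref{prop:contraction-condition} is stated as a sufficient condition, while the necessity direction of the proposition requires knowing that the integral \emph{diverges} whenever $\mathbb{1}-XA_\rho\not>0$; and ``$\det(\mathbb{1}-XA_\rho)\neq0$'' is strictly weaker than what you need, since invertibility alone does not force the strict inequalities. The paper's fix is short and worth adopting: write $\tr\rho$ as a real Gaussian integral, observe that Hermiticity of $\rho$ makes $XA_\rho$ (hence $\mathbb{1}-XA_\rho$) Hermitian, so that the integral converges if and only if $\mathbb{1}-XA_\rho>0$; from there your Schur-complement step goes through verbatim. (The paper also records a second, independent proof via the Husimi covariance condition $\Sigma\geq\bigl(\begin{smallmatrix}0&0\\0&\mathbb{1}\end{smallmatrix}\bigr)$, which you did not pursue.)
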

\begin{proof}
We provide two separate proofs here. 

\paragraph{First proof} An operator $\rho$ corresponds to a valid density matrix if and only if $\rho$ is a positive semi-definite operator and $\tr(\rho) = 1$. The positivity condition, according to \protect\cref{lem:positivity} is equivalent to
\begin{align}\label{eq:positivity-first-pf}
\Gamma\geq 0.
\end{align}
We now need to check $\tr(\rho) = 1$. To this end, we write
\begin{align}\label{eq:dm-finite-trace}
\begin{split}
\tr(\rho) &= \frac{1}{\pi^m} \int_{\mathbb{C}^m} \bra{z}\rho \ket{z} \, \mathrm d^{2m}z\\
&= \frac{1}{\pi^m} \int_{\mathbb{C}^m} F_\rho(z^\ast, z)\exp(-\norm{z}^2) \,\mathrm d^{2m}z\\
&= \frac{c_\rho}{\pi^m} \int_{\mathbb{C}^m} \exp(\frac12
\begin{bmatrix}
z^\ast{^{\mathrm{T}}} & z^{\mathrm{T}}
\end{bmatrix}
A
\begin{bmatrix}
z^\ast\\
z
\end{bmatrix} + b^{\mathrm{T}}
\begin{bmatrix}
z^\ast\\
z
\end{bmatrix}) \exp(-\norm{z}^2) \, \mathrm d^{2m}z\\
&= \frac{c_\rho}{\pi^m} \int_{\mathbb{R}^{2m}} \exp\left( -\begin{pmatrix}
 x^{\mathrm{T}} &  y^{\mathrm{T}}
\end{pmatrix}  T^\dagger (\mathbb 1 - XA) T
\begin{pmatrix}
x\\
y
\end{pmatrix}\right) \, \mathrm d^mxd^my,
\end{split}
\end{align}
where $x,y$ are real vectors representing the real and imaginary part of the complex vector $z$, and $T = \begin{pmatrix}
\mathbb 1 & -i\mathbb 1\\
\mathbb 1 & i\mathbb 1
\end{pmatrix}$ is proportional to a unitary matrix. Finally, note that
\begin{align}\label{eq:finite-trace-int-form}
\int_{\mathbb{R}^{2m}} \exp\left( -\begin{pmatrix}
 x^{\mathrm{T}} &  y^{\mathrm{T}}
\end{pmatrix}  T^\dagger (\mathbb 1 - XA) T
\begin{pmatrix}
x\\
y
\end{pmatrix} + 
\begin{pmatrix}
\beta_r &  \beta_i
\end{pmatrix}
\begin{pmatrix}
x\\
y
\end{pmatrix}
\right) \, \mathrm d^mxd^my < \infty \Leftrightarrow \mathbb 1 - XA > 0,
\end{align}
We can write the if and only if in \eqref{eq:finite-trace-int-form} since $\mathbb 1 - XA$ is always Hermitian as $\rho$ is a Hermitian operator. Finally, applying \protect\cref{lem:block-pos}, we obtain that
\begin{align}
\mathbb 1 - XA > 0 \Leftrightarrow \begin{cases}
\Gamma < \mathbb 1,\\
\Lambda^\dag (\mathbb 1 - \Gamma^{\mathrm{T}})^{-1} \Lambda < \mathbb 1 - \Gamma.
\end{cases}
\end{align}
Moreover, note that using $\tr(\rho) = 1$ and using the Gaussian integral above, we get
\begin{align}
c = \sqrt{\mathrm{det}(\mathbb 1 - XA_\rho)}\, \exp(\frac12 b_\rho^{\mathrm{T}} (\mathbb 1 - X A_\rho)^{-1} b_\rho^\ast),
\end{align}
which together with \eqref{eq:positivity-first-pf} form all our physicality conditions.

\paragraph{Second proof} The $ A_\rho$ matrix of a mixed state is related to the Husimi covariance matrix via $\Sigma = (\mathbb{1} -  X A_\rho)^{-1}$. The physicality constraints using the Husimi covariance are $\Sigma \geq \left(\begin{smallmatrix}
    0&0\\0&\mathbb{1}
\end{smallmatrix}\right)$, meaning that a state is physical if and only if its Husimi covariance matrix satisfies such inequality. Therefore, we need to convert this requirement into conditions that apply to $ A_\rho$.

Recall that we want to find conditions on $ A_\rho$ such that
\begin{align}\label{eq:phys-initial-form}
\left( \mathbb 1 -  X  A_\rho \right)^{-1} \geq \begin{bmatrix}
0 & 0\\
0 & \mathbb 1
\end{bmatrix}.
\end{align}
Using Schur's complement formula, we can rewrite the LHS as 
\begin{align}
\begin{bmatrix}
\mathbb 1 - \Gamma ^{\mathrm{T}} & -\Lambda\\
-\Lambda^\dagger & \mathbb 1 - \Gamma
\end{bmatrix}^{-1}
=
\begin{bmatrix}
(\mathbb 1 - \Gamma^{\mathrm{T}})^{-1} + (\mathbb 1 - \Gamma^{\mathrm{T}})^{-1}\Lambda {H}^{-1} \Lambda^\dag (\mathbb 1 - \Gamma^{\mathrm{T}})^{-1} & (\mathbb 1 - \Gamma^{\mathrm{T}})^{-1} \Lambda {H}^{-1}\\
{H}^{-1} \Lambda^\dag (\mathbb 1 - \Gamma^{\mathrm{T}})^{-1} & {H}^{-1}
\end{bmatrix},
\end{align}
where ${H} = \mathbb 1 - \Gamma - \Lambda^\dag(\mathbb 1 - \Gamma^{\mathrm{T}})^{-1} \Lambda$. Therefore, the condition \eqref{eq:phys-initial-form} combined with \protect\cref{lem:block-pos} implies
\begin{itemize}
\item The first condition on the RHS of the lemma reads as
\begin{align}\label{eq:R-condition}
{H}^{-1} - \mathbb 1 \geq 0 \Leftrightarrow 0 < {H} \leq \mathbb 1.
\end{align}
\item The second condition on the RHS of the lemma gives
\begin{align}
(\mathbb 1 - \Gamma^{\mathrm{T}})^{-1} + (\mathbb 1 - \Gamma^{\mathrm{T}})\Lambda {H}^{-1} \Lambda^\dag (\mathbb 1 - \Gamma^{\mathrm{T}})^{-1} \geq (\mathbb 1 - \Gamma^{\mathrm{T}})^{-1} \Lambda {H}^{-1}({H}^{-1}-\mathbb 1)^{-1} {H}^{-1}\Lambda^\dag (\mathbb 1 - \Gamma^{\mathrm{T}})^{-1}.
\end{align}
Note that the above inequality implies $\Gamma < \mathbb 1$. Using this fact, we get the following equivalent form
\begin{align}
(\mathbb 1 - \Gamma^{\mathrm{T}})^{-1} \geq (\mathbb 1 - \Gamma^{\mathrm{T}})^{-1} \Lambda (\mathbb 1 - {H})^{-1} \Lambda^\dag (\mathbb 1 - \Gamma^{\mathrm{T}})^{-1},
\end{align}
which then finally gives
\begin{align}
\mathbb 1 - \Gamma^{\mathrm{T}} \geq \Lambda (\mathbb 1 - {H})^{-1} \Lambda^\dag.
\end{align}
Now, noticing that this is simply equivalent to $\mathbb 1 - {H} \geq \Lambda^\dag (\mathbb 1 - \Gamma^{\mathrm{T}})^{-1}\Lambda$, and plugging in ${H} = \mathbb 1 - \Gamma - \Lambda^\dag(\mathbb 1 - \Gamma^{\mathrm{T}})^{-1}\Lambda$, we get that this all becomes equivalent to $\Gamma \geq 0$. Overall, this condition gave us the following
\begin{align}
0 \leq \Gamma < \mathbb 1.
\end{align}
\end{itemize}
We point out that plugging in the expression for ${H}$ in \eqref{eq:R-condition} and utilizing the condition $0\leq \Gamma < \mathbb 1$, we get the other physicality condition $\Lambda^{\dagger}(\mathbb{1} - \Gamma^{\mathrm{T}})^{-1}\Lambda < \mathbb{1} - \Gamma$. Lastly, we point out that $c$ can be computed as in the first proof we presented.
\end{proof}

\begin{remark}\label{rem:physicality}
Note that $(\mathbb 1 - XA_\rho) > 0$ is imposed by the first two lines of inequalities in \eqref{eq:dm-physicality-conditions}, which then imposes that $c_\rho \geq 0$ via the last equality there. To see $(\mathbb 1 - XA_\rho) > 0$, note that following our second proof, we have obtained that \eqref{eq:phys-initial-form} is identical to the first two lines of inequalities in \eqref{eq:dm-physicality-conditions}. This is while \eqref{eq:phys-initial-form} indicates $\mathbb 1-X A_\rho$ is invertible and positive semi-definite, which then results in positive definiteness of $\mathbb 1-X A_\rho$ 
\end{remark}

\subsection{Maps}
We now turn to prove complete positivity and trace-preserving conditions on Gaussian maps. Firstly, we point out that a generic Gaussian CP map has an Abc triple $(A_\Phi,b_\Phi,c_\Phi)$, with the following block form
\begin{align}\label{eq:channel-block-structure}
A_\Phi = 
\begin{bmatrix}
\Lambda^\dag_\Phi & \Gamma_\Phi\\
\Gamma_\Phi^{\mathrm{T}} & \Lambda_\Phi
\end{bmatrix},
\quad
b_\Phi = \begin{bmatrix}
\beta_\Phi^\ast\\
\beta_\Phi
\end{bmatrix}.
\end{align}
Note that as $A$ matrix is symmetic, we have $\Lambda_\Phi = \Lambda^T_\Phi$. Also, as we will show later in \cref{rem:cp-map-block-structure}, for any CP map $\Phi$, it is the case that $\Gamma_\Phi = \Gamma_\Phi^\dag$. Hence, \eqref{eq:channel-block-structure} matches the block structure in \eqref{eq:CP-map}. We recall that the channel has the following blocks in output-input ordering
\begin{align}
\begin{bmatrix}
A_\Phi^{\mathrm{out}} & R^{\mathrm{T}}_\Phi\\
R_\Phi & A_\Phi^{\mathrm{in}}
\end{bmatrix}.
\end{align}

We now prove the lemma introduced in the main text, determining the conditions for completely positive (CP) maps.

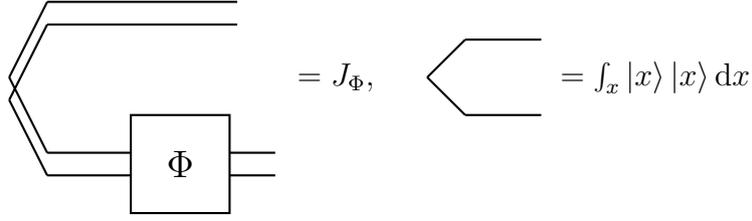
\begin{figure}
    \centering
\begin{tikzpicture}
\draw[thick, black] (-1,0) -- (1.5,0);
\draw[thick, black] (-1,-.3) -- (1.5,-.3);
\draw[thick, black] (-1,-2) -- (0.1,-2);
\draw[thick, black] (-1,-2.3) -- (.1, -2.3) ;

\draw[thick, black] (-1,0) -- (-1.5,-1);
\draw[thick, black] (-1.5, -1) -- (-1,-2);
\draw[thick, black] (-1, -.3) -- (-1.5, -1.3);
\draw[thick, black] (-1.5,-1.3) -- (-1, -2.3);

\draw[thick, black] (0.1, -1.5) rectangle (1.4, -2.8);
\node at (.75, -2.15) {\Large $\Phi$};
\draw[thick, black] (1.4, -2) -- (2, -2);
\draw[thick, black] (1.4, -2.3) -- (2, -2.3);

\node at (2.8, -1) {\large $= J_\Phi,$};

\draw[thick, black] (4.5,-.5) -- (5.5,-0.5);
\draw[thick, black] (4.5,-1.5) -- (5.5,-1.5);
\draw[thick, black] (4, -1) -- (4.5, -.5);
\draw[thick, black] (4, -1) -- (4.5, -1.5);

\node at (7, -1) {\large $= \int_x \ket x\ket x \mathrm dx$};
\end{tikzpicture}
\caption{The Choi state ($J_\Phi$) corresponding to a channel $\Phi$. As it can be observed, the Choi state is obtained by merely `bending the wires' of $\Phi$. This simply corresponds to rearranging the indices of the Bargmann representation. Interestingly, the order of indices does not change, if written in type-wise ordering (introduced in \protect\cref{app:ordering-conventions}).}
\label{fig:choi}
\end{figure}
\begin{lemma}[\protect\cref{prop:channels-in-bargmann} in the main text]\label{lem:cp-condition-in-appendix}
A Gaussian map, say $\Phi$, represented by a Gaussian Bargmann function as in \eqref{eq:CP-map}, is completely positive if and only if
\begin{align}
0\leq \Gamma_\Phi, \text{and } 0\leq c_\Phi.
\end{align}
\end{lemma}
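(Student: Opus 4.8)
The plan is to prove this via the Choi–Jamiołkowski isomorphism, which reduces the question of complete positivity of a Gaussian map $\Phi$ to the question of positivity of its Choi operator $J_\Phi$. The crucial observation, illustrated in \protect\cref{fig:choi}, is that building the Choi operator from a channel amounts to merely ``bending the wires,'' which in the Bargmann picture is nothing more than a relabelling of variables. In particular, I would emphasize that in the type-wise ordering the Abc triple of $\Phi$ and of its Choi operator $J_\Phi$ coincide, so that the A matrix, b vector, and scalar c carry over verbatim.

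\textbf{Key steps.} First I would recall the standard fact that $\Phi$ is completely positive if and only if its Choi operator $J_\Phi = (\Phi \otimes \mathcal I)(\ketbra{\Omega}{\Omega})$ is positive semi-definite, where $\ket{\Omega} = \int_x \ket{x}\ket{x}\,\mathrm dx$ is the (unnormalized) maximally entangled state. Second, I would argue that the Bargmann representation of $J_\Phi$, written in type-wise order, has precisely the A matrix and b vector appearing in \eqref{eq:CP-map}, and scalar $c_{J_\Phi} = c_\Phi$. This is exactly the content of the claim that the index order is unchanged under wire-bending when one uses type-wise ordering. Third, having identified $J_\Phi$ as a Gaussian operator whose type-wise Ab block structure is
\begin{align}
A_{J_\Phi} = \begin{bmatrix}
\Lambda_\Phi^\dagger & \Gamma_\Phi\\
\Gamma_\Phi^{\mathrm{T}} & \Lambda_\Phi
\end{bmatrix},
\quad
b_{J_\Phi} = \begin{bmatrix}
\beta_\Phi^\ast\\
\beta_\Phi
\end{bmatrix},
\quad c_{J_\Phi} = c_\Phi,
\end{align}
I would invoke \protect\cref{lem:positivity} directly: a Gaussian operator with this Hermitian block structure is positive semi-definite if and only if $\Gamma_\Phi \geq 0$ and $c_{J_\Phi}\geq0$. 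Since $c_{J_\Phi} = c_\Phi$, this immediately yields the stated conditions $\Gamma_\Phi \geq 0$ and $c_\Phi\geq0$.

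\textbf{Main obstacle.} The routine parts (the CJ equivalence and the final appeal to \protect\cref{lem:positivity}) are standard; the delicate step is establishing that the Bargmann parametrization is genuinely preserved under the wire-bending that produces the Choi operator. Concretely, I would need to verify that contracting $\Phi$ with the identity-channel ket $\ket{\Omega}$ on the input legs--which in the Bargmann formalism means evaluating against the constant Bargmann function of a maximally entangled state and identifying input bra/ket variables with the ancilla's--does not permute the blocks away from the form assumed in \eqref{eq:CP-map}, and in particular that the $\Gamma_\Phi$ block of $\Phi$ becomes exactly the off-diagonal ($\Gamma$) block governing positivity of $J_\Phi$ in \protect\cref{lem:positivity}. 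I would also need to confirm that $\Gamma_\Phi = \Gamma_\Phi^\dagger$ (the Hermiticity of the relevant block), which is required before \protect\cref{lem:hermiticity} and \protect\cref{lem:positivity} can be applied; this is deferred in the text to \protect\cref{rem:cp-map-block-structure}, and I would either cite it or derive it here from the fact that $J_\Phi$ is Hermitian. Once the index-preservation under type-wise ordering is pinned down, the rest of the argument is an immediate reduction to the operator positivity lemma.
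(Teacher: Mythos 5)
Your proposal is correct and takes essentially the same route as the paper's own proof: the Choi--Jamio\l{}kowski reduction, the observation that wire-bending preserves the type-wise Abc parametrization of $\Phi$, and a direct appeal to \protect\cref{lem:positivity}. The delicate points you flag (index preservation under type-wise ordering and the Hermiticity $\Gamma_\Phi = \Gamma_\Phi^\dagger$) are handled in the paper exactly as you anticipate, the latter being deferred to \protect\cref{rem:cp-map-block-structure}.
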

\begin{proof}
The proof relies on utilizing the Choi state representation of the channel. Recall that the Choi state, represented by $J_\Phi$ is defined as 
\begin{align}
J_\Phi = (\mathcal I \otimes \Phi)[\ket{\mathrm{EPR}}\!\bra{\mathrm{EPR}}],
\end{align}
and as shown in \protect\cref{fig:choi}, this transformation simply `bends the wires' of $\Phi$ (here $\ket{\mathrm{EPR}} = \int_x \ket{x}\ket{x} \mathrm dx = \sum_n \ket{n}\ket{n}$ represents the CV maximally entangled state). This means that by properly arranging the indices of $A_\Phi$, it will transform into the A matrix representing the Choi state. We recall that a channel is completely positive (CP) if and only if $J_\Phi$ corresponds to a positive operator \cite{watrous2018theory}, and hence, by applying \protect\cref{lem:positivity}, we get the CP requirement as
\begin{align}
\Gamma_\Phi \geq 0, \text{and } c_\Phi\geq 0.
\end{align}
\end{proof}

\begin{remark}\label{rem:cp-map-block-structure}
From the proof above, we can see that the Choi state of a Gaussian channel is a Gaussian state with the same Abc parameters as the channel itself (in the type-wise ordering). Hence, by \cref{lem:hermiticity}, we can see that for any CP map $\Phi$, it is the case that $\Gamma_\Phi = \Gamma_\Phi^\dag$. This was also implied by the condition $\Gamma_\Phi \geq 0$ from \cref{lem:cp-condition-in-appendix}.
\end{remark}

We can also derive trace-preserving conditions for a map, and therefore, summarize the conditions under which a map is CPTP.

\begin{proposition}\label{prop:TP}
A map $\Phi$ is trace-preserving if and only if
\begin{align}\label{eq:channel-tp}
\begin{split}
&\mathbb 1 - X A_\Phi^{\mathrm{out}} > 0,\\
&A_\Phi^{\mathrm{in}} = R_\Phi ( A_\Phi^{\mathrm{out}} - X)^{-1} R_\Phi^{\mathrm{T}} + X,\\
&c_\Phi = \sqrt{\mathrm{det}(\mathbb 1 - XA_\Phi^{\mathrm{out}})} \, \exp(\frac12 (b_\Phi^{\mathrm{out}})^{\mathrm{T}} (\mathbb 1 - XA_\Phi^{\mathrm{out}}) (b_\Phi^{\mathrm{out}})^\ast).
\end{split}
\end{align}
As a result, $\Phi$ corresponds to a CPTP map if and only if other than \eqref{eq:channel-tp} it satisfies $\Gamma_\Phi \geq 0$.
\end{proposition}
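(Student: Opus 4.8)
The plan is to reduce trace-preservation to a partial-trace computation on the Choi state and then match the result against the Bargmann parametrization of the identity operator. Recall that a map $\Phi$ is trace-preserving if and only if $\tr_{\mathrm{out}}[J_\Phi] = \mathbb 1_{\mathrm{in}}$, where $J_\Phi$ is the Choi state and the trace is over the output modes. By \cref{rem:cp-map-block-structure}, $J_\Phi$ carries exactly the Abc parameters of $\Phi$ in the type-wise ordering, so the whole problem becomes: compute the partial trace of a Gaussian operator over its output block and impose that the leftover operator equals the identity. The key simplification is that the identity operator has the elementary Bargmann parametrization $A = X$, $b = 0$, $c = 1$, which is what the traced object must reproduce.

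First I would reorder the variables into output–input form so that $A_\Phi^{\mathrm{out}}$ and $b_\Phi^{\mathrm{out}}$ occupy the block to be integrated out. Tracing over the output is then a Gaussian contraction in the sense of \cref{app:inner-product-in-bargmann}: subtract the off-diagonal identity block $X$ from $A_\Phi^{\mathrm{out}}$ and apply the integral formulas \eqref{eq:inner-prods}, exactly as in the partial-trace example there. This produces an effective operator on the input with
\begin{align}
A &= A_\Phi^{\mathrm{in}} - R_\Phi (A_\Phi^{\mathrm{out}} - X)^{-1} R_\Phi^{\mathrm{T}},\\
b &= b_\Phi^{\mathrm{in}} - R_\Phi (A_\Phi^{\mathrm{out}} - X)^{-1} b_\Phi^{\mathrm{out}},\\
c &= c_\Phi \, \frac{\exp\!\left(-\tfrac12 (b_\Phi^{\mathrm{out}})^{\mathrm{T}}(A_\Phi^{\mathrm{out}} - X)^{-1} b_\Phi^{\mathrm{out}}\right)}{\sqrt{-\det(A_\Phi^{\mathrm{out}} - X)}}.
\end{align}
The existence of this Gaussian integral is precisely the positivity $\mathbb 1 - X A_\Phi^{\mathrm{out}} > 0$, which I would justify exactly as in the finite-trace analysis of the proof of \cref{prop:dm-physicality} (equivalently, by checking the hypothesis of \cref{prop:contraction-condition}). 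This gives the first listed condition.

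Matching the traced operator to the identity is then structural. Imposing $A = X$ yields $A_\Phi^{\mathrm{in}} = R_\Phi (A_\Phi^{\mathrm{out}} - X)^{-1} R_\Phi^{\mathrm{T}} + X$, which is the second condition, while solving $c = 1$ for $c_\Phi$ produces the third. The remaining requirement $b = 0$, together with the rewriting of the determinant and exponent into the stated form of $c_\Phi$, follows by invoking the Hermiticity block structure of \cref{lem:hermiticity} (namely $b_\Phi^{\mathrm{out}} = (\beta^\ast,\beta)^{\mathrm{T}}$ and the $X$-symmetry of $A_\Phi^{\mathrm{out}}$), combined with the identity $A_\Phi^{\mathrm{out}} - X = -X(\mathbb 1 - X A_\Phi^{\mathrm{out}})$ to re-express $(A_\Phi^{\mathrm{out}}-X)^{-1}$ and $\det(A_\Phi^{\mathrm{out}}-X)$ in terms of $\mathbb 1 - X A_\Phi^{\mathrm{out}}$. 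The CPTP claim is then immediate: combining these trace-preservation conditions with the complete-positivity criterion $\Gamma_\Phi \geq 0$ from \cref{lem:cp-condition-in-appendix} gives the full characterization.

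The main obstacle I anticipate is bookkeeping rather than conceptual. I must track the reordering from type-wise to output–input form consistently, so that the correct sub-blocks enter the contraction, and I must reconcile the sign and determinant conventions so that the generic Gaussian-integral output matches the specific form of $c_\Phi$ in \eqref{eq:channel-tp}. I would also verify carefully that the $b$-matching imposes no extra independent constraint beyond those listed, which I expect to hold automatically once the Hermitian structure of $b_\Phi$ is used.
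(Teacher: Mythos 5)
Your proposal is correct and takes essentially the same route as the paper: the Choi-state condition $\tr_{\mathrm{out}}[J_\Phi]=\mathbb{1}$ is precisely the paper's dual-unitality condition $\Phi^\dagger(\mathbb{1})=\mathbb{1}$ (the Choi state carries the same Abc data as $\Phi$, as the paper itself exploits for the CP condition), and both arguments then read off the stated conditions from the same Gaussian contraction, with the first inequality guaranteeing convergence of that integral. Even your flagged loose end is shared with the paper, whose proof likewise equates only the $A$ and $c$ parts of $\Phi^\dagger(\mathbb{1})=\mathbb{1}$ and leaves the $b$-vector matching unaddressed.
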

\begin{proof}
First, note that sending vacuum into the channel, we get a Gaussian state with A matrix of $A_\Phi^{\mathrm{out}}$. Requiring the output state to have a finite trace we get (see \eqref{eq:dm-finite-trace})
\begin{align}
\mathbb 1 - XA_\Phi^{\mathrm{out}} > 0.
\end{align}
Next, we note that $\Phi$ is trace preserving if and only if its dual is unital, i.e.
\begin{align}\label{eq:channel-tp-def}
\Phi^\dagger(\mathbb 1) = \mathbb 1.
\end{align}
Applying contraction formulas of \eqref{app:inner-product-in-bargmann}, we can re-write \eqref{eq:channel-tp-def} as
\begin{align}
X = A_\Phi^{\mathrm{in}}  - R_\Phi(A_\Phi^{\mathrm{out}} - X)^{-1} R_\Phi^{\mathrm{T}},
\end{align}
which is the other condition for $\Phi$ to be TP. The equation for $c_\Phi$ is also obtained from equating the `c' parameter of the equation $\Phi^\dag(\mathbb 1) = \mathbb 1$.
\end{proof}

\subsection{Variation to Gaussian core states}\label{app:variation}

We highlight an interesting fact about our definition of core states. In particular, we show that restricting to $k_1=k_2$ in \protect\cref{def:core} results in the same set of core states.

\begin{proposition}\label{prop:equivalent-core}
A Gaussian state $\rho$ defined on a bipartite system $MN$ is a Gaussian core state if and only if for all $k\in\mathbb N^{n}$, the conditional state
\begin{align}\label{eq:alternate-core}
(\mathbb 1 \otimes \bra{k}) \rho (\mathbb 1\otimes \ket{k})
\end{align}
has finite support in the Fock basis.
\end{proposition}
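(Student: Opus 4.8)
The statement is an equivalence, and the forward direction is immediate: \cref{def:core} already demands finite Fock support for every pair $(k_1,k_2)$, so specializing to $k_1=k_2$ costs nothing. The plan is therefore to prove the converse, namely that the diagonal conditions alone force the full core property, and in fact I expect the single case $k_1=k_2=0$ to already do all the work.

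First I would set up the Bargmann function of $\rho$ in mode-wise order, grouping the $2m$ variables attached to $M$ into $u=(w_M,z_M)$ and the $2n$ variables attached to $N$ into $t=(w_N,z_N)$, so that
\begin{align}
F_\rho(u,t) = c_\rho\exp\left(\tfrac12 u^{\mathrm T}A_{MM}\,u + u^{\mathrm T}A_{MN}\,t + \tfrac12 t^{\mathrm T}A_{NN}\,t + u^{\mathrm T}b_M + t^{\mathrm T}b_N\right),
\end{align}
where $A_{MM}$ is the $M$--$M$ block of $A_\rho$ and $b_M$ the $M$ part of $b_\rho$. Projecting $N$ onto vacuum on both sides is, as recalled in the main text, the same as evaluating at $t=0$, so the conditional operator $(\mathbb 1\otimes\bra 0)\rho(\mathbb 1\otimes\ket 0)$ has Bargmann function $c_\rho\exp(\tfrac12 u^{\mathrm T}A_{MM}u + u^{\mathrm T}b_M)$. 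By the Bargmann--Fock dictionary this operator has finite Fock support if and only if that function is a polynomial in $u$. Here I would invoke the elementary fact that a polynomial equal to $\exp(q)$ for a polynomial $q$ must be a nonvanishing constant (since $\exp$ has no zeros), whence $q$ is constant; using $c_\rho>0$ (the vacuum probability of a genuine Gaussian state) this forces $A_{MM}=0$ and $b_M=0$. This is exactly the block structure of \cref{prop:core-states-in-bargmann}, so one could conclude immediately by citing it.

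To keep the argument self-contained and free of any circular dependence on \cref{prop:core-states-in-bargmann}, I would instead close the converse directly: once $A_{MM}=0$ and $b_M=0$, the Bargmann function collapses to $c_\rho\exp(\ell(u)^{\mathrm T}t + \tfrac12 t^{\mathrm T}A_{NN}t)$ with $\ell(u)=A_{MN}^{\mathrm T}u + b_N$ affine in $u$. Forming a general conditional operator $(\mathbb 1\otimes\bra{k_1})\rho(\mathbb 1\otimes\ket{k_2})$ amounts to applying the Fock-projection derivatives in the $N$ variables, of total order $\|k_1\|_1+\|k_2\|_1$, and evaluating at $t=0$; differentiating an exponential whose argument is at most quadratic in $t$ and affine in $u$ yields a polynomial in $u$ of degree at most $\|k_1\|_1+\|k_2\|_1$. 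Hence every conditional operator has finite Fock support, which is precisely \cref{def:core}. The step requiring the most care is the Bargmann--Fock correspondence ``finite Fock support $\iff$ polynomial Bargmann function'' phrased for operators rather than pure states, together with the several-variable version of the exponential-of-a-polynomial lemma; the rest is routine Gaussian bookkeeping.
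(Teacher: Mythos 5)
Your proof is correct and takes essentially the same route as the paper's: the forward direction is dismissed as immediate, and the converse uses only the $k=0$ case to force the $M$-blocks of $A_\rho$ and $b_\rho$ to vanish, concluding via \cref{prop:core-states-in-bargmann}. The only difference is that you supply self-contained justifications for two facts the paper simply invokes — the exponential-of-a-polynomial lemma in place of the cited fact that vacuum is the only Gaussian state with finite Fock support, and the direct-differentiation closing in place of citing the sufficiency direction of \cref{prop:core-states-in-bargmann} — both of which are sound.
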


This proposition allows us to interpret Gaussian core states as those that result in core state when we measure a subsystem (i.e., $N$) in the number basis.

\begin{proof}
It is transparent that any Gaussian core state (as defined in \protect\cref{def:core}) satisfies the provided constraint. We need to show that any $\rho$ that satisfies \protect\cref{eq:alternate-core}, is a Gaussian core state. To this end, we use the constraint with $k=0$ to get $(\mathbb 1\otimes \bra{0})\rho(\mathbb 1 \otimes \ket{0}) \propto \ket{0}\bra{0}$ as vacuum is the only Gaussian state with finite Fock support. Moreover, note that $(\mathbb 1\otimes \bra{0})\rho(\mathbb 1 \otimes \ket{0}) \propto \ket{0}\bra{0}$ implies that the top left block of $A_\rho$ and the first block of $b_\rho$ in mode-wise ordering must be zero. This, together with \protect\cref{prop:core-states-in-bargmann} implies that $\rho$ is a Gaussian core state (as defined in \protect\cref{def:core}).
\end{proof}

\section{Proofs}
Here we provide the technical proofs for the claims in the main text.

\subsection{The \texorpdfstring{$b\neq0$}{b != 0} case}

In the following, we show that if a state with Ab parameters $A_1,b_1=0$ (either pure or mixed) satisfies a stellar decomposition, so does any state with $A_1, b_2 \neq 0$. This implies that the existence of a stellar decomposition solely depends on the quadratic exponent described by the A matrix.

\begin{lemma}[Disregarding displacements]
Whether or not a state (either pure or mixed) accepts a stellar decomposition is solely dependent on $A$. In other words, a state $\rho$ with Ab parameters $(A,b)$ admits a stellar decomposition iff $\sigma$ with Ab parameters $(A,0)$ does so.
\end{lemma}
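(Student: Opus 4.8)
The plan is to exploit the fact that the only difference between $\rho$ and $\sigma$ is a displacement. Since they share the same $A$ matrix, they have the same covariance matrix, and the mismatch in $b$ corresponds purely to a shift of the vector of means; hence there is a Gaussian displacement $D$ with $\rho = D\,\sigma\,D^\dagger$. Because $D$ factorizes across the bipartition as $D = D_M\otimes D_N$, I can try to absorb each piece into the appropriate part of a stellar decomposition of $\sigma$. The recurring tool is that a displacement is a Gaussian unitary whose symplectic part is the identity, so conjugating any Gaussian operator by $D$ leaves its $A$ matrix untouched and modifies only $b$ (and $c$).

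First I would take a stellar decomposition $\sigma = (\Phi\otimes\mathcal I)(\sigma_{\mathrm{core}})$ guaranteed by the hypothesis. The $M$-side displacement is harmless: setting $\Phi'(\cdot) = D_M\,\Phi(\cdot)\,D_M^\dagger$ gives another Gaussian channel, and since $D_N$ acts on $N$ while $\Phi'$ acts on $M$ they commute, so $\rho = (\Phi'\otimes\mathcal I)(\tilde\sigma_{\mathrm{core}})$ with $\tilde\sigma_{\mathrm{core}} = (\mathbb 1\otimes D_N)\,\sigma_{\mathrm{core}}\,(\mathbb 1\otimes D_N^\dagger)$. By the invariance of $A$ under displacements, $\tilde\sigma_{\mathrm{core}}$ has the same $A$ matrix as $\sigma_{\mathrm{core}}$, so by \cref{prop:core-states-in-bargmann} its $MM$ block still vanishes.

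The subtlety, which I expect to be the main obstacle, is that $\tilde\sigma_{\mathrm{core}}$ need not itself be a Gaussian core state: displacing $N$ propagates through the $M$–$N$ correlations encoded in the off-diagonal block of $A$ and generically produces a nonzero $M$ block of $b$, violating the $b$-condition of \cref{prop:core-states-in-bargmann} (a quick check on the two-mode squeezed vacuum, displacing the heralding mode, already exhibits this). The repair is a compensating displacement $D_M'$ on $M$, and here the vanishing $MM$ block of $A$ does the essential work: precisely because this block is zero, conjugating by an $M$-displacement shifts the $M$ block of $b$ directly by the displacement amount, with no feedback from the quadratic part, so I can choose $D_M'$ to cancel the induced $M$ block of $b$ exactly. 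The resulting state $\sigma_{\mathrm{core}}' = (D_M'\otimes\mathbb 1)\,\tilde\sigma_{\mathrm{core}}\,(D_M'^\dagger\otimes\mathbb 1)$ still has vanishing $MM$ block of $A$ and now also vanishing $M$ block of $b$, hence is a genuine Gaussian core state. Whatever happens to the $N$ block of $b$ is immaterial, since \cref{prop:core-states-in-bargmann} places no constraint there.

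Finally I would fold the compensating displacement back into the channel: with $\Phi''(\cdot) = \Phi'\big(D_M'^\dagger(\cdot)D_M'\big)$, again a Gaussian channel, one gets $\rho = (\Phi''\otimes\mathcal I)(\sigma_{\mathrm{core}}')$, a valid stellar decomposition of $\rho$. The converse direction is immediate by symmetry, writing $\sigma = D^\dagger\rho D$ and running the same argument. The pure-state case follows verbatim with a unitary in place of the channel, writing $|\psi\rangle = D|\phi\rangle$, absorbing $D_M$ to form $U' = D_M U$, repairing the core vector with $D_M'$, and finally setting $U'' = U' D_M'^\dagger$, since products of Gaussian unitaries with displacements remain Gaussian unitaries.
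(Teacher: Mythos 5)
Your proposal is correct and follows essentially the same route as the paper: transport the decomposition of $\sigma$ to $\rho$ via the displacement, observe that the resulting candidate core state has the correct $A$ matrix but a possibly nonzero $M$ block of $b$, and then cancel that block with a compensating displacement on $M$ --- which works precisely because the vanishing $MM$ block of $A$ means an $M$-displacement shifts $b_M$ with no feedback from the quadratic part --- before folding that displacement into the channel. The only differences are cosmetic: you spell out the step the paper calls ``straightforward'' (commuting $D_N$ through the channel) and you state the converse and pure-state cases explicitly, which the paper leaves implicit.
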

Let the stellar decomposition of $\sigma$ be $\sigma = (\Phi' \otimes \mathcal I) (\sigma_{\mathrm{core}})$, where the parametrization of $\Phi'$ and $\sigma_{\mathrm{core}}$ are $A_\Phi$ and $A_{\mathrm{core}}$. It is straightforward to show that $\rho = (\Phi'\otimes \mathcal I)(\rho')$ where $\rho'$ has the same A matrix as $\sigma_{\mathrm{core}}$ i.e., $A_{\mathrm{core}}$ but has potentially non-zero b vector, which we denote by $b'$. Recall that we need the part of $b'$ corresponding to the first $m$ modes to be zero in order for $\rho'$ to be a core state. In what follows, we show that by absorbing a displacement $D(\gamma)$ into $\rho'$ we can make it a Gaussian core state. We refer to \protect\cref{fig:disregard-displacements} for a schematic representation of the argument.

A straightforward calculation shows that $D(\gamma)\otimes \mathbb 1$ affects the b vector of a state $\rho(A,b)$ as
\begin{align}\label{eq:displacement-evolution}
D(\gamma)\otimes\mathbb 1: b \mapsto \begin{bmatrix}
\mathbb 1 - X_mA_m & -X_mR^{\mathrm{T}}\\
-X_n R & \mathbb 1 - X_n A_n
\end{bmatrix} 
\begin{bmatrix}
\tilde\gamma\\
0
\end{bmatrix}
+ b,
\end{align}
where the equation is written in mode-wise ordering with $X_n:= (\begin{smallmatrix}
0 & \mathbb 1_n\\
\mathbb 1_n & 0
\end{smallmatrix})$ being an X Pauli of size $2n$. Recall that our goal is to bring the b vector into the form $(\begin{smallmatrix}
0\\
*
\end{smallmatrix})$. By restricting \eqref{eq:displacement-evolution} to the first entries (i.e., the ones corresponding to the first $m$ modes), and recalling that the state $\rho'$ has the A matrix of a core state (i.e., $(A_m)_{\rho'}=0$) we find the displacement that allows transformation of $\rho'$ into an actual core state
\begin{align}
\gamma = -(b'_m)_k,
\end{align}
where $(b'_m)_k$ is the part of the b vector of $\rho'$ (i.e., $b'$) that corresponds to the ket part of the first $m$ modes of $\rho$. 

\begin{figure}[t]
\centering
\begin{tikzpicture}

% Semi-circle for rho
\draw[thick, black] 
(-1, -1.6) arc[start angle=270, end angle=90, radius=1.2] -- (-1, 0) -- (-1, -1.6) -- cycle;
\node at (-1.59, -0.4) {$\rho(A,b)$};

% Connecting wires
\draw[thick, black] (-1, 0) -- (0, 0) node[ above] {\small $m$ modes};
\draw[thick, black] (-1, -1) -- (0, -1) node[below] {\small $n$ modes};
\node at (.5, -.5) {$=$};

% Semi-circle for rho_core
\draw[thick, black] 
(2.4, -2) arc[start angle=270, end angle=90, radius=1.55] -- (2.4, 0.4) -- (2.4, -1.6) -- cycle;
\node at (1.65, -0.45) {$\rho'{(A_c,b')}$};

% Box for phi
\draw[thick, black] (3, -.4) rectangle (3.8, 0.4);
\node at (3.4, 0) {\large $\Phi'$};

% Connecting wires from rho_core to box
\draw[thick, black] (2.4, 0) -- (3, 0);
\draw[thick, black] (2.4, -1) -- (4.4, -1);

% Output wires from box
\draw[thick, black] (3.8, 0) -- (4.4, 0);

\node at (.5, -3.7) {$=$};

\draw[thick, black] 
(2, -4.8) arc[start angle=270, end angle=90, radius=1.1] -- (2, -2.8) -- (2, -2.8) -- cycle;
\node at (1.52, -3.7) {$\rho'$};

% Connecting wires
\draw[thick, black] (2, -3.2) -- (2.4, -3.2);
\draw[thick, black] (2, -4.2) -- (4.2, -4.2);

\draw[thick, black] (2.4, -3.6) rectangle (3.4, -2.8);
\node at (2.9, -3.2) {$D(\gamma)$};

\draw[thick, black] (3.4, -3.2) -- (3.8, -3.2);

\draw[thick, black] (3.8, -3.6) rectangle (5, -2.8);
\node at (4.4, -3.2) {$D(\gamma)^\dagger$};
\draw[thick, black] (5, -3.2) -- (5.5, -3.2);

\draw[thick, black] (5.5, -3.6) rectangle (6.3, -2.8);
\node at (5.9, -3.2) {\large $\Phi'$};
\draw[thick, black] (6.3, -3.2) -- (6.8, -3.2);

%% end of second line

\node at (.5, -6.3) {$=$};

\draw[thick, black] 
(2, -7.4) arc[start angle=270, end angle=90, radius=1.1] -- (2, -5.4) -- (2, -5.4) -- cycle;
\node at (1.52, -6.3) {$\rho_{\mathrm{core}}$};

\draw[thick, black] (2, -5.8) -- (2.5, -5.8);
\draw[thick, black] (2, -6.8) -- (3.8, -6.8);

\draw[thick, black] (2.5, -6.2) rectangle (3.3, -5.4);
\node at (2.9, -5.8) {\large $\Phi$};
\draw[thick, black] (3.3, -5.8) -- (3.8, -5.8);
\end{tikzpicture}
\caption{Circuit diagram of the method for dealing with states with non-zero $b$. We start by applying the stellar decomposition as if $b$ was zero. This gives us the first equation, where $\rho'$ has the right A matrix ($A_{\mathrm{core}}$ represented by $A_c$ in the figure) but the it could be that $b'\neq0$. In this case (i.e., $b'\neq 0$), we consider absorbing a displacement into $\rho'$. As explained in this Appendix, we can always find $D(\gamma)$ such that $\rho_{\mathrm{core}} = (D(\gamma)\otimes \mathbb 1)\rho'$ has the right b vector $b_{\mathrm{core}}$.}\label{fig:disregard-displacements}
\end{figure}
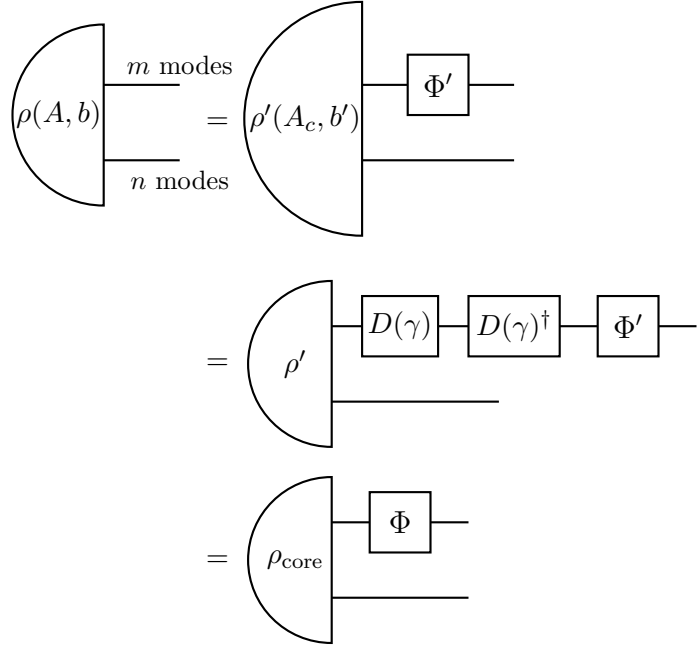

%%%%%%%%%%%%%%%%%%%%%%%
% Proposition 1
%%%%%%%%%%%%%%%%%%%%%%%

\subsection{Proof of \texorpdfstring{\protect\cref{prop:core-states-in-bargmann}}{}}\label{app:proof-of-ket-decomposition}

We first show that a Gaussian core state must satisfy \eqref{eq:core-state-block-form}. To see this, let $F_\rho(z_1,w_1,z_2,w_2)$ be the Bargmann function of $\rho$ in mode-wise order, and note that $F_\rho(z_1,w_1,0,0)$ defines the Bargmann function of $(\mathbb 1 \otimes \bra0) \rho (\mathbb 1 \otimes \ket0)$. Note that $(\mathbb 1 \otimes \bra{0})\rho(\mathbb 1 \otimes \ket0) \propto \ket0\bra0$ as vacuum is the only Gaussian state with finite Fock support, and therefore, we get that the top-left block of $A_\rho$ must be zero (note that the top-left block of $A$ is the A matrix of an object where all other modes are projected into zero).

For the other direction, we need to show that the block parametrization of \eqref{eq:core-state-block-form} implies the state is a Gaussian core state. We do this in two steps:
\begin{itemize}
    \item The pure state case: We consider $\ket{\psi_{\mathrm{core}}}$ with the following Abc parametrization:
\begin{align}\label{eq:Acore_def}
    A_{|\psi_\mathrm{core}\rangle}&=\begin{bmatrix}
    0 & R^{\mathrm{T}}\\
    R & A
    \end{bmatrix},\\\label{eq:bcore_def}
    b_{|\psi_\mathrm{core}\rangle}&=\begin{bmatrix}
    0\\
    b
    \end{bmatrix},
\end{align}
for some blocks $R$, $A$ and $b$.
The zero blocks guarantee that the Taylor coefficients of the Bargmann function parametrized by $A_{|\psi_\mathrm{core}\rangle}$, $b_{|\psi_\mathrm{core}\rangle}$ and $c_{|\psi_\mathrm{core}\rangle}$ terminate whenever the sum of the orders of the derivatives on the first $m$ variables is larger than the sum of the orders of the derivatives on the last $n$ variables \cite{miatto2020fast}. To see this, recall that the Fock amplitudes are
\begin{align}
G_{\ell} := \langle \ell | \psi_{\mathrm{core}} \rangle = \frac{1}{\sqrt{\ell!}}\partial^\ell F(0).
\end{align}
{where $\ket{\ell}$ is a Fock state for any tuple of integers $\ell\in\mathbb N^{m+n}$. In other words, we have $\ket{\psi_{\mathrm{core}}} = \sum_{\ell\in\mathbb N^{m+n}} G_\ell \ket{\ell}$}. Interestingly, these Fock amplitudes {$G_\ell$} satisfy a simple recursion relation detailed in \eqref{eq:recrel}. We exploit this recursion relation to complete the proof. {Let us write $\ell = (j,k)$ for $j\in\mathbb N^{m}, k\in\mathbb N^{n}$}. We claim that
\begin{align}
G_{j,k} = 0, \;\text{if} \; \norm{j}_1 > \norm{k}_1.
\end{align}
To see this, note that whenever $\|j\|_1 = \|k\|_1$, taking any more derivative with respect to the first $m$ variables (i.e., when $i \leq m$) results in a zero Fock amplitude:
\begin{align}\label{eq:fock_edge}
    G_{j+1_i,k} = \frac{1}{\sqrt{j_i+1}}\biggl[\underbrace{b_i}_0G_{j,k}+\sum_{p=1}^n\sqrt{k_{p}}R_{ip}\underbrace{G_{j,k-1_{p}}}_{0}\biggr],
\end{align}
where $G_{jk}$ is the Fock amplitude at the point $(j,k)$ in the $(m+n)$-dimensional Fock lattice.
The first contribution is zero by definition ($b_i = 0$ whenever $i \leq m$ because of the zero block in \eqref{eq:bcore_def}). That the second part is zero can be seen recursively for all values of $\|k\|_1$. Starting from $\|j\|_1=0$ and $\|k\|_1=0$ we have that $\sqrt{k_p}=0$ for all $p$ and so $G_{jk}=0$ for $i\leq m$, $\|j\|_1 = 1$ and $\|k\|_1=0$. The same holds for larger values of $\|j\|_1$, i.e. $G_{jk} = 0$ if $\|j\|_1 > 0$ and $\|k\|_1=0$. Then consider $G_{j+1_i,k}$ with $\|j\|_1=1$ and $\|k\|_1=1$. Again, the first contribution in \eqref{eq:fock_edge} is zero because $b_i=0$ for $i\leq m$ and the second contribution is zero because we just proved that $G_{jk} = 0$ when $\|j\|_1>0$ and $\|k\|_1=0$. As before, this holds for larger values of $\|j\|_1$, proving $G_{jk} = 0$ with $\|j\|_1 > 1$ and $\|k\|_1=1$. 
This argument can be repeated for all the subsequent values of $\|k\|_1$, proving that $G_{jk} = 0$ whenever $\|j\|_1>\|k\|_1$ in general.

\item Mixed state case: Note that by vectorizing our core state $\rho_{\mathrm{core}}$ (in the type-wise ordering), we have (see \cite{watrous2018theory})
\begin{align}
\mathrm{vec}\left( (\mathbb 1 \otimes \bra{k_1}) \rho_{\text{{core}}} (\mathbb 1 \otimes \ket{k_2})\right) = (\mathbb 1 \otimes \bra{k_1}\otimes \mathbb 1\otimes \bra{k_2}) \mathrm{vec}(\rho_{\text{{core}}}) \qquad \text{(type-wise ordering)}
\end{align}
Rewriting the above equation in mode-wise ordering, would simply swap the middle two spaces (see e.g. \eqref{eq:example-type-wise} and \eqref{eq:example-mode-wise}), and hence, we get
\begin{align}\label{eq:rho-core-pf}
\mathrm{vec}\left( (\mathbb 1 \otimes \bra{k_1}) \rho_{\mathrm{core}} (\mathbb 1 \otimes \ket{k_2})\right) = (\mathbb 1 \otimes \bra{k_1,k_2}) \mathrm{vec}(\rho_{\mathrm{core}}) \qquad \text{(mode-wise ordering)}
\end{align}
and hence, we would like to show that $\mathrm{vec}(\rho)$, with the Ab parametrization
\begin{align}
A_{\rho_{\mathrm{core}}} = \begin{bmatrix}
0 & R^{\mathrm{T}}\\
R & A
\end{bmatrix},
\quad
b_{\rho_{\mathrm{core}}} = \begin{bmatrix}
0\\
b
\end{bmatrix}
\end{align}
has the property that the right-hand-side of \eqref{eq:rho-core-pf} has a finite support in the Fock basis.\footnote{Note that $R$ is a $2n\times 2m$ and $A$ is a $2n\times 2n$ matrix.} This is guaranteed by our proof above for the pure states (note that we did not use the property that $\ket{\psi_{\mathrm{core}}}$ is of unit norm, and the only property we needed was the zero blocks \eqref{eq:Acore_def} and \eqref{eq:bcore_def}).
\end{itemize}

Lastly, we highlight that, as shown in our proof for pure states, the state
\begin{align}
(\mathbb 1 \otimes \bra{k})\ket{\psi_{\mathrm{core}}}
\end{align}
has Fock support at most on the set
\begin{align}
J_{\norm{k}_1}^{m}:=\{\ell\in\mathbb N^m: \norm{\ell}_1 \leq \norm{k}_1\}.
\end{align}
Similarly, the operator
\begin{align}
\tilde\rho = (\mathbb 1 \otimes \bra{k_1}) \rho_{\mathrm{core}}(\mathbb 1\otimes \ket{k_2})
\end{align}
satisfies
\begin{align}
\bra{\ell_1}\tilde\rho\ket{\ell_2} = 0
\end{align}
whenever
\begin{align}
\norm{\ell_1 + \ell_2}_1 \geq \norm{k_1+k_2}_1.
\end{align}
We have provided a pictorial representation of this fact for pure states in \protect\cref{fig:finite-cutoff-of-core-state}.

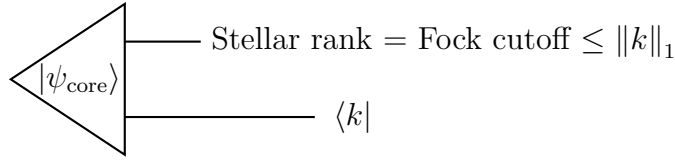
\begin{figure}[t]
\centering
\begin{tikzpicture}
    % Triangle for psi_core
    \draw[thick] (3.5,0) -- (5,1) -- (5,-1) -- cycle;
    \node at (4.4,0) {\large $\ket{\psi_{\text{core}}}$};
    % Wires
    \draw[thick] (5,0.5) -- (6,0.5);
    \draw[thick] (5,-0.5) -- (7.5,-0.5);
    \node at (8, -.5) {\large $\bra{k}$};
    \node at (8,.5) {\large Stellar rank $\leq \norm{k}_1$};
\end{tikzpicture}
\caption{A Gaussian core state over $MN$ with $|M| =m$ and $|N|=n$. By definition (\protect\cref{def:core}), such states have finite support in the photon number basis, whenever we perform a number-basis measurement on modes $N$. Moreover, we can show that measuring a Fock pattern $k\in\mathbb{N}^n$ on subsystem $N$ leaves the state on subsystem $M$ with Fock support on the set $J_{\norm{k}_1}^{m}$ of Fock states (see \eqref{eq:J-notation} for the definition of such sets). Recall that this means the stellar rank (or equivalently the degree of the polynomial stellar function associated to the state on subsystem $M$) is upper bounded by $\norm{k}_1$ i.e., the total number of measured photons. This figure is reported in the main text (Fig.~3).}
\label{fig:finite-cutoff-of-core-state}
\end{figure}

\subsection{Proof of \texorpdfstring{\protect\cref{prop:pure-state-case}}{}}\label{app:pf-of-pure-state}
Our proof is structured as follows: we start from a generic $(m+n)$-mode pure Gaussian state, and we construct a Gaussian core state $|\psi_\mathrm{core}\rangle$ and a Gaussian unitary $U$ such that when the unitary is applied to the first $m$ modes of the Gaussian core state we obtain the original Gaussian state.

Consider then an initial Gaussian $(m+n)$-mode state $|\psi\rangle$ with Abc parametrization
\begin{align}\label{eq:initial_ketA}
    A_{|\psi\rangle}&=\begin{bmatrix}
    A_\psi^{(m)} & R_\psi^{\mathrm{T}}\\
    R_\psi & {A_\psi^{(n)}}
    \end{bmatrix},\\\label{eq:initial_ketb}
    b_{|\psi\rangle}&=\begin{bmatrix}
    {b_\psi^{(m)}}\\
    {b_\psi^{(n)}}
    \end{bmatrix},
\end{align}
where we have specified the separation between the first $m$ modes and the remaining $n$ modes using blocks of appropriate size: $A_\psi^{(m)}$ is $m\times m$, $R_\psi$ is $n\times m$ and ${A_\psi^{(n)}}$ is $n\times n$. ${b_\psi^{(m)}}$ is an $m$-dimensional complex vector and ${b_\psi^{(n)}}$ is an $n$-dimensional complex vector. We omit $c_{|\psi\rangle}$ as it is determined by normalization.

Recall that if we project the last $n$ modes of $|\psi_\mathrm{core}\rangle$ onto $|k\rangle$, the corresponding marginal $|\mathrm{core}_k\rangle$ on the first $m$ modes will have Fock support on $J_{\norm{k}_1}^{m}$. In particular, if we project the last $n$ modes onto vacuum, the marginal on the first $m$ modes must be the $m$-mode vacuum. It follows that the unitary $U$ that we seek must map the $m$-mode vacuum to the pure Gaussian state that results from projecting the last $n$ modes of the original state $|\psi\rangle$ onto vacuum. In other words, we want $U$ to satisfy
\begin{align}
(\mathbb 1 \otimes \bra{0^n})\ket{\psi} = U(\mathbb 1 \otimes \bra{0^n}) \ket{\psi_{\mathrm{core}}}.
\end{align}
Since $\ket{\psi_{\mathrm{core}}}$ is a Gaussian core state, we have that
\begin{align}
(\mathbb 1 \otimes \bra{0^n})\ket{\psi_{\mathrm{core}}} \propto \ket{0^m}.
\end{align}
As a result, we have
\begin{align}\label{eq:elementary-building-block}
(\mathbb 1 \otimes \bra{0^n})\ket{\psi} \propto U\ket{0^m}.
\end{align}
The fact that vacuum projections are elementary operations in the Bargmann formalism (see \protect\cref{app:inner-product-in-bargmann}), allows us to use \eqref{eq:elementary-building-block} to compute the Gaussian unitary $U$ straightforwardly. 

%%%
\begin{figure}[htbp]
\centering
\begin{tikzpicture}
    % Triangle for psi
    \draw[thick] (3,0) -- (4.5,1) -- (4.5,-1) -- cycle;
    \node at (4,0) {\large $\ket{\psi}$};
    % Wires
    \draw[thick] (4.5,0.5) -- (5.5,0.5);
    \draw[thick] (4.5,-0.5) -- (5.5,-0.5);
    \node at (6, -0.5) {\large $\bra{0^n}$};
    \node at (7, 0.0) {\large $=$};
    % Triangle for psi_core
    \draw[thick] (7.5,0) -- (9,1) -- (9,-1) -- cycle;
    \node at (8.4,0) {\large $\ket{\psi_\mathrm{core}}$};
    % Wires
    \draw[thick] (9,0.5) -- (10,0.5);
    \draw[thick] (9,-0.5) -- (10,-0.5);
    \node at (10.5, -0.5) {\large $\bra{0^n}$};
    %box
    \draw[thick] (10,1) -- (11,1) -- (11,0) -- (10,0) -- cycle;
    \node at (10.5, 0.5) {\large $U$};
    \draw[thick] (11,0.5) -- (12,0.5);
    \node at (12.5, 0.0) {\large $=$};
    %box
    \node at (13.5, 0.5) {\large $|0^m\rangle$};
    \draw[thick] (14,0.5) -- (15,0.5);
    \draw[thick] (15,1) -- (16,1) -- (16,0) -- (15,0) -- cycle;
    \node at (15.5, 0.5) {\large $U$};
    \draw[thick] (16,0.5) -- (17,0.5);
\end{tikzpicture}
\caption{A visual proof that the unitary of the pure stellar decomposition maps the $m$-mode vacuum to the heralded state $\langle0^n|\psi\rangle$ corresponding to measuring vacuum on the last $n$ modes of $|\psi\rangle$ (Fig. \ref{fig:proof_of_U} in the main text).} 
\end{figure}
%%%

\paragraph{Gaussian unitary.}
We now find the Abc parametrization of such $U$, and note that the requirement uniquely identifies $U$ up to an $m$-mode passive Gaussian transformation that preserves the photon number (interferometer). To this end, note that the A matrix of the left-hand-side of \eqref{eq:elementary-building-block} is $A_\psi^{(m)}$ (i.e., the top-left block in \eqref{eq:initial_ketA}). Let $A_U$ and $b_U$ be parametrized as follows:
\begin{align}\label{eq:AU}
    A_U &= \begin{bmatrix}
        A_U^\mathrm{out} & \Gamma_U^{\mathrm{T}}\\\Gamma_U & A_U^\mathrm{in}
    \end{bmatrix}=\begin{bmatrix}
        x & \sqrt{\mathbb{1} - xx^*}\\\sqrt{\mathbb{1}-x^*x} & -x^*
    \end{bmatrix},\\\label{eq:bU}
    b_U &= \begin{bmatrix}
        b_U^\mathrm{out}\\
        b_U^\mathrm{in}
    \end{bmatrix}=\begin{bmatrix}
        \Gamma_U^{\mathrm{T}}\gamma\\
        A_U^\mathrm{in}\gamma - \gamma^*
    \end{bmatrix},
\end{align}
for some matrix $x\in\mathbb C^{m\times m}$ and complex vector $\gamma\in\mathbb C^m$.
We can write the four $m\times m$ blocks in \eqref{eq:AU} this way because the $A$ matrix of a Gaussian unitary is itself unitary and symmetric \cite{miatto2020fast}. The blocks in \eqref{eq:bU} are obtained by composing $U$ with a displacement by $\gamma\in\mathbb{C}^m$, which will be chosen such that $U$ applied to $|\psi_\mathrm{core}\rangle$ reproduces the displacement of the original state.

Therefore the condition that $U$ maps vacuum to the state with A matrix of $A_\psi^{(m)}$ and $b$ of ${b_\psi^{(m)}}$ implies 
\begin{align}
A_U^\mathrm{out}=x=A_\psi^{(m)}, \quad b_U^\mathrm{out}=\Gamma_U^{\mathrm{T}}\gamma = {b_\psi^{(m)}}.
\end{align}
This is enough to identify $U$:
\begin{align}\label{eq:AU_final}
\begin{split}
    A_U &= \begin{bmatrix}
        A_\psi^{(m)} & \Gamma_U^{\mathrm{T}}\\\Gamma_U & -{A_\psi^{(m)}}^*
    \end{bmatrix},\\
    b_U &= \begin{bmatrix}
        {b_\psi^{(m)}}\\
        -{A_\psi^{(m)}}^*\Gamma_U^{-1}{b_\psi^{(m)}} - \Gamma_U^{-T}{b_\psi^{(m)}}^*
    \end{bmatrix},
\end{split}
\end{align}
with $\Gamma_U = \sqrt{\mathbb{1} - {A_\psi^{(m)}}^*A_\psi^{(m)}}$ and $c_U$ determined by the unitarity condition (up to a global phase).

\paragraph{Gaussian core state} Following \protect\cref{prop:core-states-in-bargmann}, we can parametrize the Abc representation of our Gaussian core state $\ket{\psi_{\mathrm{core}}}$ as
\begin{align}
A_{\ket{\psi_{\mathrm{core}}}} = \begin{bmatrix}
0 & R_c^{\mathrm{T}}\\
R_c & A_c^{(n)}
\end{bmatrix},
\quad b_{\ket{\psi_{\mathrm{core}}}} = \begin{bmatrix}
0\\
b_c^{(n)}
\end{bmatrix},
\end{align}
where $R_c$ is an $n\times m$ and $A_c^{(n)}$ is an $n\times n$ matrix, with $b_c^{(n)}$ being a vector of size $n$.

There are two ways to find the unknown $R_c$, $A_c^{(n)}$ and $b_c^{(n)}$. One is to apply $U^\dagger$ to the first $m$ modes of $|\psi\rangle$. The other is to apply $U$ to the first $m$ modes of $|\psi_\mathrm{core}\rangle$ and set the result equal to $|\psi\rangle$. We follow the second approach, but both are possible.
Using \eqref{eq:gauss_intA} and \eqref{eq:gauss_intb} to compute the action of $U$ on the first $m$ modes of $|\psi_\mathrm{core}\rangle$ and we obtain
\begin{align}
\begin{split}
A_{U|\psi_\mathrm{core}\rangle} &= \begin{bmatrix}
    A_U^\mathrm{out} & \Gamma_U^{\mathrm{T}}R_c^{\mathrm{T}}\\R_c\Gamma_U & A_c^{(n)} + R_c A_U^\mathrm{in}R_c^{\mathrm{T}}
\end{bmatrix},\\
b_{U|\psi_\mathrm{core}\rangle} &= \begin{bmatrix}
    b_U^\mathrm{out}\\ R_c b_U^\mathrm{in} + b_c^{(n)}
\end{bmatrix},\\
c_{U|\psi_\mathrm{core}\rangle} &= c_Uc_{|\psi_\mathrm{core}\rangle}.
\end{split}
\end{align}
This ought to be equal to the Abc parametrization of the original state $|\psi\rangle$, which implies $R_c=R_\psi\Gamma_U^{-1}$, $A_c^{(n)}={A_\psi^{(n)}} + R_cA_m^*R_c^{\mathrm{T}}$ and $b_c^{(n)} = {b_\psi^{(n)}} - R_c b_U^\mathrm{in}$, yielding
\begin{align}\label{eq:Acore}
    A_{|\psi_\mathrm{core}\rangle} &= \begin{bmatrix}
        0 & \Gamma_U^{-T}R_\psi^{\mathrm{T}}\\
        R_\psi\Gamma_U^{-1} & {A_\psi^{(n)}} + R_\psi({{A_\psi^{(m)}}^*}^{-1} - {A_\psi^{(m)}})^{-1}R_\psi^{\mathrm{T}}
    \end{bmatrix},\\
    b_{|\psi_\mathrm{core}\rangle} &= \begin{bmatrix}
        0\\
        {b_\psi^{(n)}} + R_\psi\Gamma_U^{-1}\left( A_\psi^{(m)}{}^\ast \Gamma_U^{-1} b^{(m)}_\psi + \Gamma_U^{-\mathrm{T}} b_\psi^{(m)} {}^\ast\right)
    \end{bmatrix},
\end{align}
with $\Gamma_U = \sqrt{\mathbb{1}-{A_\psi^{(m)}}^*{A_\psi^{(m)}}}$, $b_U^\mathrm{in}=-{A_\psi^{(m)}}^*\Gamma_U^{-1}{b_\psi^{(m)}} - \Gamma_U^{-T}{b_\psi^{(m)}}^*$ and $c_{|\psi_\mathrm{core}\rangle}$ set by normalization. We have the freedom to distribute the phase of $c_{|\psi\rangle}$ between $c_U$ and $c_{|\psi_\mathrm{core}\rangle}$ as we wish.

% \begin{figure}
% \centering
% \begin{quantikz}
% \makeebit{$\ket{\psi}$} &\wire[l][1][]{a}\\
% &\wire[l][1][]{a}\ \ \bra{0^{\otimes n}}
% \end{quantikz}
% $=$
% \begin{quantikz}
% \makeebit{$\ket{\psi_\mathrm{core}}$} &\gate{U}&\\
% &\ \bra{0^{\otimes n}}
% \end{quantikz}
% $\propto$
% \begin{quantikz}
%     \ket{0^{\otimes m}}\ & \gate{U} &
% \end{quantikz}
% \caption{Pictorial proof of the fact that the unitary $U$ must map the $m$-mode vacuum to the $m$-mode state that remains after projecting the last $n$ modes of $|\psi\rangle$ onto vacuum, i.e.~$U|0^{\otimes m}\rangle \propto (\mathbb{1}_m\otimes\langle 0^{\otimes n}|)|\psi\rangle.$}
%     \label{fig:ket_decompapp0}
% \end{figure}

\subsection{Proof of \texorpdfstring{\protect\cref{prop:mixed-state-case}}{}}\label{app:pf-of-mixed-state}
The initial state has Abc parameters in mode-wise order (see \protect\cref{app:ordering-conventions}):
\begin{align}
\begin{split}
A_\rho &= \begin{bmatrix}
A_\rho^{(m)} & R_\rho^{\mathrm{T}}\\
R_\rho & A_\rho^{(n)}
\end{bmatrix},\\
b_\rho &= \begin{bmatrix}
{b_\rho^{(m)}} \\{b_\rho^{(n)}}
\end{bmatrix}.
\end{split}
\end{align}

Note that ${A_\rho^{(m)}}$ is $2m\times 2m$, $R_\rho$ is $2n\times 2m$, ${A_\rho^{(n)}}$ is $2n\times 2n$, ${b_\rho^{(m)}}$ is a $2m$-vector and ${b_\rho^{(n)}}$ is a $2n$-vector.

We start with a core density matrix and a Gaussian channel with Bargmann matrices in mode-wise order:
\begin{align}\label{eq:Arho_core}
    A_{\rho_\mathrm{core}} &= \begin{bmatrix}
        0 & R_c^{\mathrm{T}}\\
        R_c & A_c^{(n)}
    \end{bmatrix},\\
    A_\Phi &= \begin{bmatrix}
        A_\Phi^\mathrm{out} & \Gamma_\Phi^{\mathrm{T}}\\
        \Gamma_\Phi & A_\Phi^\mathrm{in}
    \end{bmatrix},
\end{align}
where the blocks of $A_{\rho_\mathrm{core}}$ have the same size as the blocks of $A_\rho$ and the blocks of $A_\Phi$ are all $2m\times 2m$.

Just like in the proof of our pure state's case (i.e. \protect\cref{prop:pure-state-case}), we apply the channel to the core state and set the result equal to the initial state obtaining
\begin{align}\label{eq:mixed_system}
    \begin{cases}
        A_\Phi^\mathrm{out} &= A^{(m)}_\rho\\
        R_c\Gamma_\Phi &= R_\rho\\
        A_c^{(n)} + R_cA_\Phi^{\mathrm{in}}R_c^{\mathrm{T}} &= A_\rho^{(n)}.
    \end{cases}
\end{align}

The first equation gives us $A_\Phi^\mathrm{out}$ which we can interpret as the fact that the channel maps the $m$-mode vacuum to the $m$-mode state resulting from measuring vacuum on the last $n$ modes of the initial state. Using the TP condition of channels (see \protect\cref{prop:TP}), we set
\begin{align}\label{eq:TP-in-pf}
A_\Phi^{\mathrm{in}} = R_\Phi ( A_\Phi^{\mathrm{out}} - X)^{-1} R_\Phi^{\mathrm{T}} + X
\end{align}
the first and last equations can be combined into
\begin{align}\label{eq:bargman-marginal}
A_c^{(n)}+R_cXR_c^{\mathrm{T}} = {A_\rho^{(n)}} - R_\rho(A_\rho^{(m)}-X)^{-1}R_\rho^{\mathrm{T}}.
\end{align}
We note that \eqref{eq:bargman-marginal} has an expected physical meaning: the initial state and the core state must have the same marginals on the last $n$ modes. This is expected since the channel is trace-preserving, and tracing the output of the channel is the same as tracing the first $m$ modes of the core state directly (see \protect\cref{fig:TP}). Note that since $A_\Phi^{\mathrm{out}} = A_\rho$, we have that $\mathbb 1 - XA_\Phi^{\mathrm{out}} > 0$ (see \protect\cref{rem:physicality}), and hence, by setting \eqref{eq:TP-in-pf}, we are guaranteed that $\Phi$ is trace-preserving (note that we eliminate discussions about the `c' parameters as it is straightforward and can be calculated at the end). 

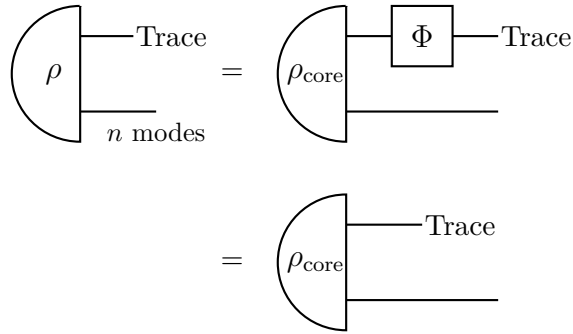
\begin{figure}[htbp]
    \centering
    \begin{tikzpicture}
% Semi-circle for rho
\draw[thick, black] 
(-1.5, -1.4) arc[start angle=270, end angle=90, radius=0.9] -- (-1.5, 0.4) -- (-1.5, -1.4) -- cycle;
\node at (-1.85, -0.5) {\large $\rho$};

% Connecting wires
\draw[thick, black] (-1.5, 0) -- (-.8, 0);
\draw[thick, black] (-1.5, -1) -- (-0.5, -1) node[below] {\small $n$ modes};
\node at (.5, -.5) {$=$};

\node at (-.3,0) {Trace};

% Semi-circle for rho_core
\draw[thick, black] 
(2, -1.4) arc[start angle=270, end angle=90, radius=0.9] -- (2, 0.4) -- (2, -1.4) -- cycle;
\node at (1.6, -0.5) {\large $\rho_{\text{core}}$};

% Box for phi
\draw[thick, black] (2.6, -.4) rectangle (3.4, 0.4);
\node at (3, 0) {\large $\Phi$};

% Connecting wires from rho_core to box
\draw[thick, black] (2, 0) -- (2.6, 0);
\draw[thick, black] (2, -1) -- (4, -1);

% Output wires from box
\draw[thick, black] (3.4, 0) -- (4, 0);
\node at (4.5,0) {Trace};

\node at (.5, -3) {$=$};

\draw[thick, black] 
(2, -3.9) arc[start angle=270, end angle=90, radius=0.9] -- (2, -2.1) -- (2, -3.9) -- cycle;
\node at (1.6, -3) {\large $\rho_{\text{core}}$};

\draw[thick, black] (2, -2.5) -- (3, -2.5);
\draw[thick, black] (2, -3.5) -- (4, -3.5);
\node at (3.5,-2.5) {Trace};
\end{tikzpicture}
    \caption{The TP condition for the channel $\Phi$ contributes to identifying the core state as one that has the same $n$-mode marginal as the initial state $\rho$.}
    \label{fig:TP}
\end{figure}

Note that we require $\rho_{\mathrm{core}}$ to be physical. This means that it should have finite trace and also be positive. The finite trace condition is implied by assuming $\rho$ has a finite trace and that $\Phi$ is TP. As we already have imposed these conditions, it follows that $\rho_{\mathrm{core}}$ has finite trace. It remains to check whether $\rho_{\mathrm{core}}$ is positive. From hereon, for any $A$ matrix in mode-wise order (which is the convention throughout this proof), we assume without loss of generality the following block form
\begin{align}
A = \begin{bmatrix}
a^\ast & \alpha^\ast\\
\alpha & a
\end{bmatrix}.
\end{align}
This convention applies to submatrices such as $A_\rho^{(m)}$, $A_\rho^{(n)}$, etc. Similarly, for any $R$ matrix, we assume
\begin{align}
R = \begin{bmatrix}
r^\ast & \sigma^\ast\\
\sigma & r
\end{bmatrix}.
\end{align}
We highlight that $r_\rho, \sigma_\rho$ and $r_c, \sigma_c$ are all $n\times m$ matrices. With these conventions in mind, we employ \protect\cref{lem:positivity} to write the positivity condition of $\rho_{\mathrm{core}}$ as
\begin{align}
\begin{bmatrix}
0 & \sigma_c^\dag\\
\sigma_c & \alpha_c^{(n)}
\end{bmatrix}\geq 0.
\end{align}
This is equivalent to
\begin{align}\label{eq:positivity-of-rho-c}
\sigma_c = 0, \quad \alpha_c^{(n)} \geq 0.
\end{align}
Plugging $\sigma_c = 0$ in the second equation of \eqref{eq:mixed_system} we get stringent conditions on $R_\rho$. Let blocks of $\Gamma_\Phi$ be expressed as
\begin{align}
\Gamma_\Phi = \begin{bmatrix}
\gamma_\Phi^\ast & \lambda_\Phi^\ast\\
\lambda_\Phi & \gamma_\Phi
\end{bmatrix},
\end{align}
so we can rewrite $R_c \Gamma_\Phi = R_\rho$ as
\begin{align}
r_c \gamma_\Phi &= r_\rho,\label{eq:overcomplete-1}\\
r_c \lambda_\Phi &= \sigma_\rho, \label{eq:overcomplete-2}
\end{align}
which immediately implies
\begin{align}
\sigma_\rho \sigma_\rho^\dag + r_\rho r_\rho^\dag = r_c(\gamma_\Phi \gamma_\Phi^\dag + \lambda_\Phi \lambda_\Phi^\dag) r_c^\dag.
\end{align}
As $r_c$ is an $n\times m$ matrix, we get
\begin{align}\label{eq:mixed-stellar-condition}
\mathrm{rank}(\sigma_\rho \sigma_\rho^\dag + r_\rho r_\rho^\dag) \leq m.
\end{align}
Therefore, we have ruled out the possibility of a physical stellar decomposition when \eqref{eq:mixed-stellar-condition} is violated (i.e., the `only if' direction). Note that \eqref{eq:mixed-stellar-condition} is an extremely stringent condition if $n > m$ since almost all density matrices in this case satisfy $\mathrm{rank }(\sigma_\rho\sigma_\rho^\dag + r_\rho r_\rho^\dag) = 2m$. However, as the rank is upper bounded by the dimension, we have $\mathrm{rank}(\sigma_\rho\sigma_\rho^\dag + r_\rho r_\rho^\dag) \leq n$, which implies that if $n\leq m$ the condition is already satisfied. In what follows, we prove that if \eqref{eq:mixed-stellar-condition} is satisfied, one can indeed find a set of physical core states and physical channels that can satisfy the decomposition. 

\begin{claim}\label{claim:solution-for-mixed-case}
Assuming \eqref{eq:iff-condition}, the following has a solution for $r_c$:
\begin{align}\label{eq:r_c-solution}
r_c r_c^\dag = \begin{bmatrix}
0 & \mathbb 1
\end{bmatrix} R_\rho (X - A_\rho^{(m)})^{-1} R_\rho^{\mathrm{T}}\begin{bmatrix}
\mathbb 1\\
0
\end{bmatrix} + \sigma_\rho \alpha_\rho^{(m)}{}^{+} \sigma_\rho^\dag.
\end{align}
Such a solution for $r_c$, together with
\begin{align}
\begin{split}
\sigma_c &= 0\\
\alpha_c^{(n)} &= \alpha_\rho^{(n)} - \sigma_\rho \alpha^{(m)}_\rho {}^{+} \sigma_\rho^\dag\\
a_c^{(n)} &= a_\rho^{(n)} + \begin{bmatrix}
0 & \mathbb 1
\end{bmatrix}R_\rho(X - A^{(m)})^{-1} R_\rho^{\mathrm{T}}
\begin{bmatrix}
0\\
\mathbb 1
\end{bmatrix}\\
A_\Phi^{\mathrm{out}} &= A_\rho^{(m)}\\
\Gamma_\Phi &= R_c^+ R_\rho\\
A_\Phi^{\mathrm{in}} &= \Gamma_\Phi (A_\Phi^{\mathrm{out}} - X)^{-1} \Gamma_\Phi^{\mathrm{T}} + X,
\end{split}
\end{align}
where $\cdot^+$ indicates pseudo-inverse, form a physical solution for the stellar decomposition of $\rho$. 
\end{claim}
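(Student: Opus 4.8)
The plan is to treat \cref{claim:solution-for-mixed-case} as a verification: the blocks of $\rho_{\mathrm{core}}$ and $\Phi$ are prescribed, so I must establish that (a) an $n\times m$ matrix $r_c$ solving \eqref{eq:r_c-solution} exists, (b) $\rho_{\mathrm{core}}$ is a physical Gaussian core state, (c) $\Phi$ is a Gaussian CPTP channel, and (d) $(\Phi\otimes\mathcal I)(\rho_{\mathrm{core}})=\rho$, i.e.\ the three relations \eqref{eq:mixed_system} hold. I would carry these out in the order (a)--(d) below.

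\emph{Existence of $r_c$.} An $n\times m$ matrix with $r_c r_c^\dagger = M$ exists precisely when $M$ is Hermitian positive semidefinite with $\mathrm{rank}(M)\le m$, in which case a factor is read off from the spectral decomposition of $M$ (padding with zero columns if the rank is strictly below $m$). So I would show that the right-hand side of \eqref{eq:r_c-solution} has these two properties. Positivity of $\rho$ together with \cref{lem:positivity} gives that the $\Gamma$-block of $\rho$ is PSD, which after regrouping into $M/N$ sub-blocks reads $\left[\begin{smallmatrix}\alpha_\rho^{(m)} & \sigma_\rho^{\dagger}\\ \sigma_\rho & \alpha_\rho^{(n)}\end{smallmatrix}\right]\ge 0$; in particular $\alpha_\rho^{(m)}\ge 0$, so the summand $\sigma_\rho\alpha_\rho^{(m)+}\sigma_\rho^\dagger$ is already PSD. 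The first summand — the bottom-left block of $R_\rho(X-A_\rho^{(m)})^{-1}R_\rho^{\mathrm T}$ — I would rewrite using $X-A_\rho^{(m)}=X(\mathbb 1-XA_\rho^{(m)})$ with $\mathbb 1-XA_\rho^{(m)}>0$ (from \cref{rem:physicality}), expressing it through the Husimi covariance $\Sigma=(\mathbb 1-XA_\rho)^{-1}$ and the tracing/Schur-complement formulas \eqref{eq:inner-prods}. The aim is to recombine the two summands into a genuine Schur complement of a principal submatrix of a PSD matrix, which is manifestly PSD and whose range equals that of $r_\rho r_\rho^\dagger+\sigma_\rho\sigma_\rho^\dagger$; the rank bound $\le m$ is then exactly the hypothesis \eqref{eq:iff-condition}, and the range equality is what is needed later in step (d).

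\emph{Physicality of the pieces.} For $\rho_{\mathrm{core}}$, the prescribed $A$-matrix $\left[\begin{smallmatrix}0 & R_c^{\mathrm T}\\ R_c & A_c^{(n)}\end{smallmatrix}\right]$ has zero top-left block and $\sigma_c=0$, so it has the core form and, by \cref{prop:core-states-in-bargmann}, is a Gaussian core state; by \cref{lem:positivity} its positivity reduces to \eqref{eq:positivity-of-rho-c}, i.e.\ $\alpha_c^{(n)}\ge 0$, and this holds because $\alpha_c^{(n)}=\alpha_\rho^{(n)}-\sigma_\rho\alpha_\rho^{(m)+}\sigma_\rho^\dagger$ is precisely the Schur complement of the PSD $\Gamma$-block of $\rho$ exhibited above (\cref{lem:block-pos}). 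Finite trace of $\rho_{\mathrm{core}}$ follows from $\rho$ having finite trace once $\Phi$ is trace-preserving, exactly as in the paragraph preceding the claim. For $\Phi$, the choice $A_\Phi^{\mathrm{out}}=A_\rho^{(m)}$ gives $\mathbb 1-XA_\Phi^{\mathrm{out}}>0$ (\cref{rem:physicality}), and the explicit $A_\Phi^{\mathrm{in}}=\Gamma_\Phi(A_\Phi^{\mathrm{out}}-X)^{-1}\Gamma_\Phi^{\mathrm T}+X$ is exactly the trace-preserving condition of \cref{prop:TP}; complete positivity, by \cref{lem:cp-condition-in-appendix} (equivalently, positivity of the Choi state), amounts to positivity of the relevant cross-block of $A_\Phi$, which I would verify from $\Gamma_\Phi=R_c^+R_\rho$ and its sub-blocks $\gamma_\Phi=r_c^+r_\rho$, $\lambda_\Phi=r_c^+\sigma_\rho$ obtained from \eqref{eq:overcomplete-1}--\eqref{eq:overcomplete-2}.

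\emph{Reconstruction and the main obstacle.} The first relation in \eqref{eq:mixed_system} holds by definition of $A_\Phi^{\mathrm{out}}$; the second, $R_c\Gamma_\Phi=R_\rho$, follows from $\Gamma_\Phi=R_c^+R_\rho$ once $\mathrm{range}(R_\rho)\subseteq\mathrm{range}(R_c)$ (so that $R_cR_c^+R_\rho=R_\rho$), and this range containment is the very fact secured by the range equality in step (a); the third reduces, after substituting the trace-preserving form of $A_\Phi^{\mathrm{in}}$ and using $R_c\Gamma_\Phi=R_\rho$, to the marginal identity \eqref{eq:bargman-marginal}, whose off-diagonal ($\Gamma$-type) block is exactly \eqref{eq:r_c-solution} for $r_cr_c^\dagger$ and whose diagonal block fixes $a_c^{(n)}$ — so both match by construction. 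I expect the single genuine obstacle to be step (a) together with the complete-positivity check in step (b): showing at once that the right-hand side of \eqref{eq:r_c-solution} is PSD of rank at most $m$ with range equal to that of $r_\rho r_\rho^\dagger+\sigma_\rho\sigma_\rho^\dagger$, and that $\Gamma_\Phi=R_c^+R_\rho\ge 0$, requires block algebra that must intertwine the positivity of $\rho$, the tracing formula \eqref{eq:inner-prods}, and the rank hypothesis \eqref{eq:iff-condition}. Everything else is bookkeeping once these positivity-and-range facts are in place.
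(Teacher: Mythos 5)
Your proposal is correct and follows essentially the same route as the paper's own proof: you verify the prescribed solution by showing the right-hand side of \eqref{eq:r_c-solution} is Hermitian positive semidefinite of rank at most $m$ (using $\mathbb 1 - XA_\rho^{(m)}>0$ and $\alpha_\rho^{(m)}\ge 0$), obtain positivity of $\rho_{\mathrm{core}}$ from the Schur-complement criterion of \cref{lem:block-pos}, impose trace preservation of $\Phi$ by construction via \cref{prop:TP}, and reduce complete positivity of $\Phi$ (via \cref{lem:cp-condition-in-appendix}) to positivity of a matrix congruent to $\bigl[\begin{smallmatrix}\alpha_\rho^{(m)} & \sigma_\rho^\dagger\\ \sigma_\rho & \sigma_\rho\alpha_\rho^{(m)}{}^{+}\sigma_\rho^\dagger\end{smallmatrix}\bigr]$, exactly as in the paper's \eqref{eq:phi-physicality-break-down}. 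The one notable difference is a refinement rather than a divergence: your range-equality argument, $\mathrm{range}(r_cr_c^\dagger)=\mathrm{range}(r_\rho r_\rho^\dagger+\sigma_\rho\sigma_\rho^\dagger)$, delivers both the rank bound and the containment $R_cR_c^{+}R_\rho=R_\rho$ in one stroke, whereas the paper derives the rank bound through a projector sandwich built from \eqref{eq:iff-condition} and asserts the containment without explicit justification.
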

Let us first show that \eqref{eq:r_c-solution} has a solution. Note that if we measure modes in the set $N$ of $\rho$ all in the vacuum state (to get a state on the set $M$ of modes), we get a density matrix whose A matrix is $A_\rho^{(m)}$. This state must be physical, which implies $\mathbb 1 -  XA_\rho^{(m)} > 0$ by \protect\cref{rem:physicality}. Moreover, note that $XR_c^{\mathrm{T}} X = R_c^\dag$. As a result, we have
\begin{align}
\begin{split}
&\begin{bmatrix}
0 & \mathbb 1
\end{bmatrix} R_\rho (X - A_\rho^{(m)})^{-1} R_\rho^{\mathrm{T}}\begin{bmatrix}
\mathbb 1\\
0
\end{bmatrix}\\
&=
\begin{bmatrix}
0 & \mathbb 1
\end{bmatrix} R_\rho (1 - XA_\rho^{(m)})^{-1} R_\rho^\dag\begin{bmatrix}
0\\
\mathbb 1
\end{bmatrix}\geq 0.
\end{split}
\end{align}
This implies that RHS of \eqref{eq:r_c-solution} is indeed positive semi-definite. Therefore, a solution $r_c$ exists if either $n\geq m$, or if $n<m$ but the RHS has a rank less than $m$ (this is because we are taking $r_c$ to have dimensions of $n\times m$). As discussed above, $n\geq m$ always satisfies \eqref{eq:iff-condition}, and therefore, we will focus on $n<m$ and show the rank condition in that case: notice that \eqref{eq:iff-condition} implies
\begin{align}\label{eq:P_m}
R_\rho = \begin{bmatrix}
\Pi_m^\ast & 0\\
0 & \Pi_m
\end{bmatrix}
R_\rho,
\end{align}
where $\Pi_m\in\mathbb C^{n\times n}$ is some projector of rank $m$. Utilizing this, we get
\begin{align}
\begin{split}
&\begin{bmatrix}
0 & \mathbb 1
\end{bmatrix} R_\rho (X - A_\rho^{(m)})^{-1} R_\rho^{\mathrm{T}}\begin{bmatrix}
\mathbb 1\\
0
\end{bmatrix}
=
\Pi_m\begin{bmatrix}
0 & \mathbb 1
\end{bmatrix}
R_\rho (1 - XA_\rho^{(m)})^{-1} R_\rho^\dag\begin{bmatrix}
0\\
\mathbb 1
\end{bmatrix}
\Pi_m.
\end{split}
\end{align}
Furthermore, note that $\sigma_c = \Pi_m \sigma_c$ is implied by \eqref{eq:P_m}. Putting all together, we have shown the rank condition, and therefore, \eqref{eq:r_c-solution} has a solution.

Now, we need to verify that the proposed solution satisfies positivity of $\rho_{\mathrm{core}}$, as well as complete positivity of $\Phi$. Recall that positivity of $\rho$ is equivalent to (see \protect\cref{lem:positivity})
\begin{align}
\begin{bmatrix}
\alpha_\rho^{(m)} & \sigma_\rho^\dag\\
\sigma_\rho & \alpha^{(n)}_\rho
\end{bmatrix} \geq 0,
\end{align}
which is equivalent to $\alpha^{(n)}_\rho \geq \sigma_\rho \alpha_\rho^{(m)} {}^{+} \sigma_\rho$ and $\alpha_\rho^{(m)}\geq 0$ (see, \protect\cref{lem:block-pos}). Therefore, with our choice in the claim we get that $\alpha_c^{(n)}\geq0$. As we discussed before in \eqref{eq:positivity-of-rho-c} this implies that $\rho_c\geq0$. 

Note that $R_cR_c^+ R_\rho = R_\rho$ which also implies that with our choice of $\Gamma_\Phi$ in the claim, we get $R_c\Gamma_\Phi = R_\rho$. Furthermore, the equation for $a^{(n)}_c$ imposes equality for diagonal blocks of \eqref{eq:bargman-marginal}, while the equations for $r_c$ and $\alpha_c^{(n)}$ makes sure that the off-diagonals block are equal. Hence, we have satisfied all equations in \eqref{eq:mixed_system} and all that is left to check, is complete positivity of $\Phi$. To this end, note that with our choice of parameters, the A matrix of $\Phi$ can be decomposed as
\begin{align}\label{eq:decomposed-a-phi}
A_\Phi = 
\begin{bmatrix}
\mathbb 1 & 0\\
0 & R_c^+
\end{bmatrix}
\begin{bmatrix}
A_\rho^{(m)} & R_\rho^{\mathrm{T}}\\
R_\rho & A_\rho^{(n)} - A_c^{(n)}
\end{bmatrix}
\begin{bmatrix}
\mathbb 1 & 0\\
0 & R_c^+ {}^{\mathrm{T}}
\end{bmatrix},
\end{align}
where, to obtain the bottom-right block, we have again used \eqref{eq:bargman-marginal}\footnote{Note that here we have assumed $\mathrm{rank}(R_c) = m$, so that $\mathbb 1_m = R_c^+ R_c$. If this was not the case, we would end up adding a positive semi-definite (indeed a projector) term to the bottom-right block of \eqref{eq:phi-physicality-break-down} which keeps the positivity argument intact.}.
Note that \eqref{eq:decomposed-a-phi} shows how one can relate $A_\Phi$ to $A_\rho$. From this, we show how complete positivity of $\Phi$ can be related to positivity of $\rho$. From \protect\cref{lem:cp-condition-in-appendix}, the matrix we need to check for complete positivity of $\Phi$ is
\begin{align}\label{eq:phi-physicality-break-down}
\begin{bmatrix}
\mathbb 1 & 0\\
0 & r_c^+
\end{bmatrix}
\begin{bmatrix}
\alpha_\rho^{(m)} & \sigma_\rho^\dag\\
\sigma_\rho & \alpha^{(n)}_\rho - \alpha^{(n)}_c
\end{bmatrix}
\begin{bmatrix}
\mathbb 1 & 0\\
0 & r_c^+ {}^\dag
\end{bmatrix} \overset{?}{\geq} 0,
\end{align}
which is indeed satisfied by our choice of $\alpha^{(n)}_c$ (see \protect\cref{lem:block-pos}).

\subsection{Proof of \texorpdfstring{\protect\cref{prop:m>=n}}{}}\label{app:pf-of-m>=n}

To show this, first let us point out that it is sufficient to consider $m=n$ as if $m>n$ one can always add $m-n$ many vacuum states to the second partition of the state and perform the pure state stellar decomposition associated to the evenly partitioned state. It is clear that the extra modes remain in the vacuum state after the decomposition and therefore can be removed in the end. Hence, in what follows, we restrict our attention to the $m=n$ case.

Note that a pure Gaussian core state in mode-wise order:
\begin{align}
    A_{|\psi_\mathrm{core}\rangle\langle\psi_\mathrm{core}|} &= \begin{bmatrix}
        0&R_c^{\mathrm{T}}\\
        R_c&A_c^{(n)}
    \end{bmatrix},\\
    b_{|\psi_\mathrm{core}\rangle\langle\psi_\mathrm{core}|} &= \begin{bmatrix}
        0\\b_c^{(n)}
    \end{bmatrix},
\end{align}
has $A_c^{(n)}=\begin{bmatrix}
    a^* & 0\\0&a
\end{bmatrix}$, $R_c=\begin{bmatrix}
    r^* & 0\\0&r
\end{bmatrix}$ and $b_c^{(n)}=\begin{bmatrix}
    \beta^*\\\beta
\end{bmatrix}$ for some $n\times n$ block $a$, $n\times m$ block $r$ and $n$-vector $\beta$.
Then, the condition in Eq.~\eqref{eq:bargman-marginal} becomes
\begin{align}
    \begin{bmatrix}
        a^* & r^*r^{\mathrm{T}}\\
        rr^\dagger& a
    \end{bmatrix}={A_\rho^{(n)}} - R_\rho({A_\rho^{(m)}}-X)^{-1}R_\rho^{\mathrm{T}},
\end{align}
which can be solved independently for $a$ and $r$ (since $r r^\dag$ can be full-rank in this case), and finally one can use the second equation in \eqref{eq:mixed_system} to find $\Gamma_\Phi = R_c^{+}R_\rho$. The physicality conditions are automatically satisfied as a pure state trivially implies a positive semi-definite density matrix, and the trace condition had been considered in trace-preserving condition of $\Phi$.

\subsection{Proof of \texorpdfstring{\protect\cref{prop:formal-stellar-decomp}}{}}\label{app:pf-of-formal-decomposition}

Given a Gaussian object with Abc parametrization $A_G, b_G, c_G$, we can partition the $A$ matrix and $b$ vector into blocks of appropriate size (which in general may not represent physical objects)
\begin{align}
    A_G &= \begin{bmatrix}
        A_m &  R^{\mathrm{T}}\\
        R & A_n
    \end{bmatrix}, \quad
    b_G = \begin{bmatrix}
        b_m\\b_n
    \end{bmatrix},
\end{align}
and then write the decomposition as
\begin{align}
\begin{split}
    A_{S_{\mathrm{c}}} &= \begin{bmatrix}
        0 &  R^{\mathrm{T}}\\
        R & A_n
    \end{bmatrix}, \quad
    b_{S_{\mathrm{c}}} = \begin{bmatrix}
    0\\b_n
    \end{bmatrix},\\
    A_T &= \begin{bmatrix}
        A_m & \mathbb{1} \\
        \mathbb{1} & 0
    \end{bmatrix}, \quad
    b_T = \begin{bmatrix}
    b_m\\0
    \end{bmatrix}.
\end{split}
\end{align}
Note that the operator $S$ has the property that its Bargmann function has a Taylor series that terminates whenever the order on the first $m$ variables exceeds the order on the last $n$ variables, which can be interpreted as its Fock amplitudes terminating whenever the photon number on the first $m$ Hilbert spaces exceeds the photon number on the last $n$ Hilbert spaces. The $T$ operator has divergent Fock coefficients (because the $A_T$ matrix has eigenvalues outside of the unit disc), and should be fused with other components before making sense of its Fock expansion.

As a note, we highlight that the contraction of $S_c$ and $T$ is well-defined as the contraction condition in \protect\cref{prop:contraction-condition} (in this case $\norm{0+0} = 0 <2$) is satisfied.

\subsection{Proof of \texorpdfstring{\protect\cref{prop:sdp}}{}}\label{app:pf-of-sdp}

Recall that a Gaussian state is fully characterized by its first two moments \cite{serafini2017quantum}, which we denote by $(\Sigma, \mu)$. An $N$-mode Gaussian state is physical if and only if
\begin{align}
\frac2\hbar\Sigma \geq i\Omega_N,
\end{align}
where $\Omega_N = \bigoplus_{i=1}^N \Omega_1$ with
\begin{align}
\Omega_1 = \begin{bmatrix}
0 & 1\\
-1 & 0
\end{bmatrix}.
\end{align}
Note that we are employing the so-called XPXP ordering (which is introduced and compared to the other XXPP ordering in \cite{chadwick2022classical}). Moreover, one can characterize a Gaussian channel $\Phi$ by two matrices and a vector $(X,Y,d)$, where $X,Y \in \mathbb R^{2N\times 2N}$ and $d\in\mathbb R^{2N}$. Recall that the action of an $N$-mode Gaussian channel on a Gaussian state parametrized by $(\Sigma,\mu)$ is given by
\begin{align}\Phi: (\Sigma,\mu) \mapsto (X\Sigma X^{\mathrm{T}} + Y, X\mu + d).
\end{align}
Also, a Gaussian channel is CPTP if and only if $\frac{2}{\hbar}Y + iX \Omega_N X^{\mathrm{T}} \geq i\Omega_N$. We refer the reader to \cite{serafini2017quantum} for an in-depth introduction to the phase-space formulation of Gaussian objects.

We say that a one-mode channel $\Phi$ can be \textit{factored out} from an $(1+n)$-mode state $\rho$ if $\rho = (\Phi \otimes \mathcal I)(\sigma)$ for some density matrix $\sigma$. Let $X,Y\in\mathbb R^{2\times 2}$ describe $\Phi$, and $(\Sigma, \mu)$ and $(\Sigma', \mu')$ represent $\rho$ and $\sigma$, respectively. The equation $\rho = (\Phi\otimes\mathcal I) (\sigma)$ translates into
\begin{align}
\Sigma = \begin{bmatrix}
X & 0\\
0 & \mathbb 1
\end{bmatrix}
\Sigma' 
\begin{bmatrix}
X^{\mathrm{T}} & 0\\
0 & \mathbb 1
\end{bmatrix}
+ \begin{bmatrix}
Y & 0\\
0 & 0
\end{bmatrix}.
\end{align}
Using the physicality condition of $\sigma$, we have $\frac2\hbar\Sigma'\geq i\Omega_N$, which gives
\begin{align}
\Sigma \geq i\frac\hbar2  \begin{bmatrix}
X & 0\\
0 & \mathbb 1
\end{bmatrix}
\Omega_N
\begin{bmatrix}
X^{\mathrm{T}} & 0\\
0 & \mathbb 1
\end{bmatrix}
+
\begin{bmatrix}
Y & 0\\
0 & 0
\end{bmatrix}.
\end{align}
Asking for $\Phi$ to be CPTP, we need to add the condition $i X \Omega_1 X^{\mathrm{T}} + \frac{2}{\hbar}Y \geq i \Omega_1$. Putting all together, we get the following optimization problem
\begin{align}\label{eq:optimization}
\begin{split}
&\text{maximize} \quad \vec v^{\mathrm{T}} Y \vec v\\
&\text{s.t.} \quad i X \Omega_1 X^{\mathrm{T}} + \frac{2}{\hbar}Y \geq i \Omega_1,\\
& \quad \Sigma \geq i\frac\hbar2  \begin{bmatrix}
X & 0\\
0 & \mathbb 1
\end{bmatrix}
\Omega_N
\begin{bmatrix}
X^{\mathrm{T}} & 0\\
0 & \mathbb 1
\end{bmatrix}
+
\begin{bmatrix}
Y & 0\\
0 & 0
\end{bmatrix},\\
&\quad Y\in\mathrm{Sym}(\mathbb R^{2\times 2}), X\in\mathbb R^{2\times 2}.
\end{split}
\end{align}
Note that the second constraint can be re-written as 
\begin{align}
\Sigma \geq i\frac\hbar2  \begin{bmatrix}
X\Omega_1 X^{\mathrm{T}} & 0\\
0 & \Omega_{n}
\end{bmatrix}
+
\begin{bmatrix}
Y & 0\\
0 & 0
\end{bmatrix}.
\end{align}
Note that $X\Omega_1 X^{\mathrm{T}}$ is the only quadratic constraint we have. Moreover, for any matrix $X\in\mathbb R^{2}$ we have $X\Omega_1 X^{\mathrm{T}} = \det(X) \Omega$. Therefore, one can let $r = \det(X)\in\mathbb R$ be a new free optimization variable. Moreover, note that $Y + i\frac\hbar{2}\Omega r$ for symmetric matrices $Y$ and real-valued $r$ sweeps over all $2\times2$ Hermitian matrices. Letting $Z = Y + i\frac\hbar{2}\Omega r$ in our optimization problem \eqref{eq:optimization} we can rewrite it as the following semi-definite program (SDP)
\begin{align}
\begin{split}
&\text{maximize} \quad \vec v^{\mathrm{T}} Z \vec v\\
&\text{s.t.} \quad Z \geq i\frac{\hbar}{2} \Omega_1,\\
& \quad \Sigma - 
\begin{bmatrix}
0 & 0\\
0 & -i\frac{\hbar}{2}\Omega_n
\end{bmatrix}
\geq  \begin{bmatrix}
Z & 0\\
0 & 0
\end{bmatrix},\\
&\quad Z\in\mathrm{Herm}(\mathbb C^{2\times 2}).
\end{split}
\end{align}

\subsection{Dual formulation of \texorpdfstring{\protect\cref{prop:sdp}}{}}\label{app:dual-sdp}

The dual formulation is a useful tool, as the equality of the primal and dual values proves that the value is reachable. Recall that our primal problem was phrased as
\begin{center}
\underline{Primal Problem}
\end{center}
\vspace{-.3 cm}
\begin{align}\label{eq:primal}
\begin{split}
&\text{maximize} \quad \tr(M Z)\\
&\text{s.t.} \quad Z \geq i\frac{\hbar}{2} \Omega_1,\\
& \quad \Sigma - 
\begin{bmatrix}
0 & 0\\
0 & -i\frac{\hbar}{2}\Omega_n
\end{bmatrix}
\geq  \begin{bmatrix}
Z & 0\\
0 & 0
\end{bmatrix},\\
&\quad Z\in\mathrm{Herm}(\mathbb C^{2\times 2}),
\end{split}
\end{align}
where we have put a generic matrix $M$ to capture different figures of merit one might be interested in. We list a few of these choices in \protect\cref{tab:fom-M}.

\begin{table}[t]
\centering
\renewcommand{\arraystretch}{1.5}
\begin{tabular}{l l}
\toprule
\textbf{Figure of Merit} & \textbf{Choice of $M$} \\
\midrule
Effective squeezing for lattice spacing $\vec v$ & $\frac{1}{\pi\hbar^2}\vec v \vec v^{\mathrm{T}}$ \\
Symmetric effective squeezing & $\frac{1}{\hbar}\mathbb{1}_2$ \\
Symmetric effective squeezing for arbitrary lattice & $\frac{1}{\hbar}S^{\mathrm{T}} S$ \\
\bottomrule
\end{tabular}
\caption{The choice of matrix $M$ in the primal optimization \eqref{eq:primal}. The first row corresponds to effective squeezing as discussed in the main text. The second row corresponds to the symmetric effective squeezing $\sigma_{\mathrm{sym}}$ (see Eq.~\eqref{eq:symmeric-eff-sqz}). The last row corresponds to the symmetric effective squeezing with respect to a lattice that has undergone a symplectic transform $S$. This is relevant when one is interested in the preparation of an arbitrary GKP state (as opposed to the sensor state).}
\label{tab:fom-M}
\end{table}

For this generic formulation, we obtain the dual problem below.
\begin{center}
\underline{Dual Problem}
\end{center}
\vspace{-.3 cm}
\begin{align}
\begin{split}
\mathrm{minimize} \quad &\mathrm{tr}\left( \left(\Sigma_\rho + i\frac{\hbar}{2}\Omega_{m+1}\right) Z \right)\\
\mathrm{s.t.} \quad & \begin{bmatrix}
\mathbb 1_{2} & 0 
\end{bmatrix}
Z
\begin{bmatrix}
\mathbb 1_2\\
0
\end{bmatrix}\geq M,\\
&Z\geq 0,\\
&Z\in\mathbb C^{2(m+1)\times 2(m+1)}.
\end{split}
\end{align}
Here, $\mathbb 1_2$ refers to the $2\times2$ identity matrix.

\subsection{Proof of \texorpdfstring{\protect\cref{prop:staircase-values}}{}}\label{app:pf-of-staircase}

We prove the following lemma which is a stronger statement and can be applied to non-Gaussian unitaries and states as well.
\begin{lemma}\label{lem:local-unitary-gives-same-channel}
Let $\rho_1$ and $\rho_2$ be bipartite quantum states over $PQ$. If $\rho_2 = (\mathbb 1 \otimes U) \rho_1 (\mathbb 1\otimes U^\dagger)$ for some unitary $U$ acting on $Q$, then a channel $\Phi$ acting on $P$ can be factored out of $\rho_1$ if and only if it can be factored out of $\rho_2$. 
\end{lemma}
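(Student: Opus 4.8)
The plan is to exploit the elementary fact that superoperators acting on disjoint tensor factors commute, so a unitary conjugation on $Q$ can be slid past a channel acting on $P$ without disturbing the latter. First I would unpack the definition of factoring out: for a state $\rho$ over $PQ$, saying that $\Phi$ (a channel on $P$) can be factored out of $\rho$ means there exists a valid state $\sigma$ over $PQ$ with $\rho = (\Phi\otimes\mathcal I)(\sigma)$. Writing $\mathcal U(\cdot) := U(\cdot)U^\dagger$ for the unitary channel associated with $U$ on $Q$, the hypothesis becomes $\rho_2 = (\mathcal I\otimes\mathcal U)(\rho_1)$.

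For the forward direction, suppose $\rho_1 = (\Phi\otimes\mathcal I)(\sigma)$. I would then compute
\[
\rho_2 = (\mathcal I\otimes\mathcal U)(\Phi\otimes\mathcal I)(\sigma) = (\Phi\otimes\mathcal U)(\sigma) = (\Phi\otimes\mathcal I)\big((\mathcal I\otimes\mathcal U)(\sigma)\big),
\]
where the middle equalities use that maps on disjoint factors compose tensorially. Setting $\sigma' := (\mathcal I\otimes\mathcal U)(\sigma) = (\mathbb 1\otimes U)\sigma(\mathbb 1\otimes U^\dagger)$ gives $\rho_2 = (\Phi\otimes\mathcal I)(\sigma')$; since $\sigma'$ is a unitary conjugate of the state $\sigma$ it is itself a valid state, so $\Phi$ is factored out of $\rho_2$.

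The reverse direction is identical after noting that $\rho_1 = (\mathcal I\otimes\mathcal U^{-1})(\rho_2)$ with $\mathcal U^{-1}(\cdot) = U^\dagger(\cdot)U$, so I would simply invoke the forward argument with the roles of $\rho_1$ and $\rho_2$ swapped and $U$ replaced by $U^\dagger$. This establishes the desired biconditional.

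There is no serious obstacle here; the only point that warrants care is the commutation step $(\mathcal I\otimes\mathcal U)(\Phi\otimes\mathcal I) = \Phi\otimes\mathcal U = (\Phi\otimes\mathcal I)(\mathcal I\otimes\mathcal U)$, which I would justify by the general identity $(\mathcal A\otimes\mathcal I)(\mathcal I\otimes\mathcal B) = \mathcal A\otimes\mathcal B = (\mathcal I\otimes\mathcal B)(\mathcal A\otimes\mathcal I)$ valid for any pair of superoperators $\mathcal A,\mathcal B$ on the respective factors. I emphasize that nothing in the argument uses Gaussianity or any structure of $\Phi$ beyond its being a channel on $P$, which is precisely why the statement is framed for arbitrary states and unitaries and specializes directly to the Gaussian staircase setting needed for \cref{prop:staircase-values}.
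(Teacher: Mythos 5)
Your proposal is correct and follows essentially the same route as the paper's proof: factor $\rho_1 = (\Phi\otimes\mathcal I)(\sigma)$, commute the unitary conjugation on $Q$ past the channel on $P$ to exhibit $\rho_2 = (\Phi\otimes\mathcal I)\bigl((\mathbb 1\otimes U)\sigma(\mathbb 1\otimes U^\dagger)\bigr)$, and obtain the converse by symmetry with $U^\dagger$. The only difference is cosmetic (superoperator notation versus explicit conjugation), so there is nothing to add.
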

\begin{proof}
Assume we can factor out $\Phi$ from $\rho_1$. Then, $\rho_2 = (\Phi \otimes \mathcal I)(\sigma)$ for some state $\sigma$. Therefore
\begin{align}
\begin{split}
\rho_2 &= (\mathbb 1 \otimes U) \rho_1 (\mathbb 1 \otimes U^\dagger)\\
&= (\mathbb 1 \otimes U)\left[  (\Phi\otimes \mathcal I)(\sigma) \right] (\mathbb 1\otimes U^\dagger)\\
&= (\Phi\otimes \mathcal I) \left[ (\mathbb 1 \otimes U) \sigma (\mathbb 1\otimes U^\dagger) \right],
\end{split}
\end{align}
which implies that $\Phi$ can also be factored out from the $P$ component of $\rho_2$. Using a similar logic (as $\rho_1 = (\mathbb 1\otimes U^\dagger)\rho_2(\mathbb 1 \otimes U)$) we get that $\Phi$ can be factored out of $\rho_1$ if it can be factored out of $\rho_2$.
\end{proof}
We now get back to the staircase situation and show that a staircase of width $M$ can be transformed into a staircase of width 2 (along with some product states) via unitaries on the last $M-1$ modes. We refer to \protect\cref{fig:staircase-reduction} where we show how the reduction works by reducing a four-mode staircase to a three-mode one. Below we present this proof in words.

Our proof is based on induction. Consider a staircase of width $M$. We apply the inverse of the last beam-splitter commuting the losses through the added beam-splitter. Since the beam-splitters cancel each other, we have removed the connection of the last mode to the first $M-1$ modes. Therefore, we reduced the staircase of width $M$ to one of width $M-1$. By doing so, we can go all the way to a two-level staircase.

We have therefore shown that the feasible set of our SDP for all staircases is the same.

\begin{figure}
\resizebox{\linewidth}{!}{%
\begin{quantikz}
\lstick{$\ket{\xi_1}$} & \arrow[d] & \gate{\mathcal{L}_\eta} &\\
\lstick{$\ket{\xi_2}$}  & & \arrow[d] & \gate{\mathcal{L}_\eta} &\\
\lstick{$\ket{\xi_{3}}$} & &  &\arrow[d] & \gate{\mathcal{L}_\eta} &  &\\
\lstick{$\ket{\xi_4}$} & & &  & \gate{\mathcal{L}_\eta}& \arrow[u, color=blue] &
\end{quantikz}
=
\begin{quantikz}
\lstick{$\ket{\xi_1}$} & \arrow[d] & \gate{\mathcal{L}_\eta} &\\
\lstick{$\ket{\xi_2}$}  & & \arrow[d] & \gate{\mathcal{L}_\eta} &\\
\lstick{$\ket{\xi_{3}}$} & &  &\arrow[d] & &\gate{\mathcal{L}_\eta}  &\\
\lstick{$\ket{\xi_4}$} & & &  &\arrow[u, color=blue]&\gate{\mathcal{L}_\eta}&  &
\end{quantikz}
=
\begin{quantikz}
\lstick{$\ket{\xi_1}$} & \arrow[d] & \gate{\mathcal{L}_\eta} &\\
\lstick{$\ket{\xi_2}$}  & & \arrow[d] & \gate{\mathcal{L}_\eta} &\\
\lstick{$\ket{\xi_{3}}$} &  & &\gate{\mathcal{L}_\eta}  &\\
\lstick{$\ket{\xi_4}$} &\gate{\mathcal{L}_\eta}& 
\end{quantikz}
}
\caption{Illustration for reduction of a four-mode staircase to a three-mode staircase in the proof of \protect\cref{prop:staircase-values}. The blue beam-splitter added represents the unitary $U$ in \protect\cref{lem:local-unitary-gives-same-channel} that reduces the four-mode staircase to a three-mode staircase. The first equality simply commutes the losses through the beam-splitter (this can be done so long as the two losses are the same). Lastly, the beam-splitters between the bottom modes cancel each other, and we end up with a three-mode staircase. Using the same logic, one can inductively reduce a staircase of width $M\geq 2$ to that of width $2$.}
\label{fig:staircase-reduction}
\end{figure}
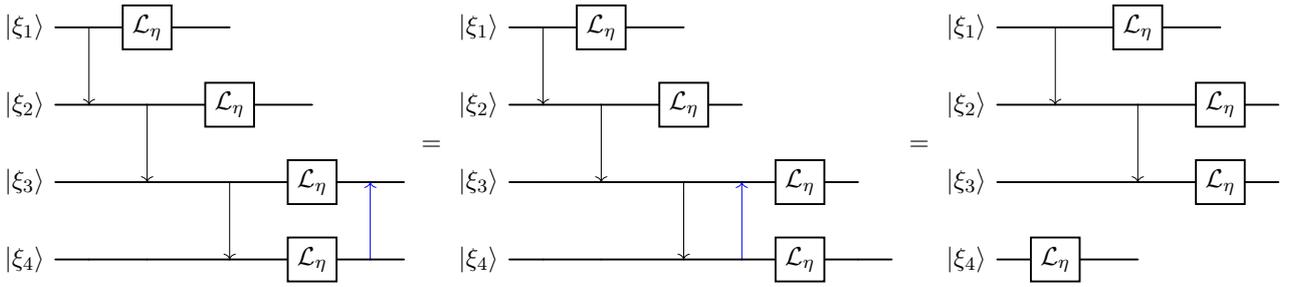

\subsection{Extension of \texorpdfstring{\protect\cref{prop:sdp}}{} to multimode GKP states}\label{app:sdp-extension}

Here we show that one can write an SDP to characterize the set of multimode Gaussian channels that can be factored out of a set of given modes from a given mixed Gaussian state. Using this technique, we can bound the stabilizer expectation values of any candidate multimode GKP state. We summarize this result in the following proposition. Note that we are applying the XPXP convention in this section (similar to \protect\cref{sec:GKP-bound}).

\begin{proposition}
Let $\rho$ be a state over $(m+n)$ modes. For any Gaussian density matrix $\rho$ with covariance matrix $\Sigma$, the following SDP
\begin{align}
\begin{split}
&\mathrm{maximize} \quad \frac{1}{\pi \hbar^2}\vec v^{\mathrm{T}} Z \vec v\\
&\mathrm{s.t.} \quad Z \geq i\frac{\hbar}{2} \Omega_m,\\
& \quad \Sigma - 
\begin{bmatrix}
0 & 0\\
0 & -i\frac{\hbar}{2}\Omega_n
\end{bmatrix}
\geq  \begin{bmatrix}
Z & 0\\
0 & 0
\end{bmatrix},\\
&\quad Z\in\mathrm{Herm}(\mathbb C^{2m\times 2m}),
\end{split}
\end{align}
computes an upper bound on the largest achievable value of 
\begin{align}
|\tr(\tilde \rho\, D_{\vec v})|
\end{align}
where $\tilde\rho$ is any $m$-mode state that one might achieve by performing any (possibly non-Gaussian) post-selection on the last $n$ modes of $\rho$ that may succeed with any non-zero probability.
\end{proposition}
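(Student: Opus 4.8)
The plan is to run the single-mode argument of \cref{app:pf-of-sdp} essentially verbatim, tracking the two places where the output subsystem being one mode was used and replacing them by their multimode analogues. The starting observation is that factoring a channel out of $\rho$ commutes with post-selection on the heralding modes: if $\Phi$ is an $m$-mode Gaussian channel with $\rho = (\Phi\otimes\mathcal I)(\sigma)$ for some physical $\sigma$ on $MN$, then for \emph{any} (possibly non-Gaussian) post-selection acting only on the last $n$ modes, the resulting unnormalized $m$-mode state is $\Phi(\tilde\sigma)$, where $\tilde\sigma$ is the same post-selection applied to $\sigma$ --- simply because the post-selection acts on $N$ while $\Phi$ acts on $M$, so the two operations commute. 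Since $\tilde\sigma$ is a legitimate (sub-normalized) state, \cite[Eq.~(5.55)]{serafini2017quantum}, the multimode version of the identity behind \eqref{eq:rachel-bound}, gives $|\tr(\tilde\rho\, D_{\vec v})| \le \exp(-\tfrac{1}{2\hbar^2}\vec v^{\mathrm{T}} Y \vec v)$, where $Y$ is the noise matrix of $\Phi$. This bound holds simultaneously for every post-selection and every factorable $\Phi$, so the tightest bound is obtained by maximizing $\vec v^{\mathrm{T}} Y \vec v$ over all $m$-mode Gaussian channels that can be factored out of $\rho$; this is the quantity the SDP will compute, and it is an upper bound (not necessarily tight) because the displacement inequality need not be saturated by the optimal measurement.

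Next I would translate ``factorable channel'' into constraints on the covariance data, exactly as in the single-mode case. Writing $\Phi \sim (X,Y)$ and using $\tfrac{2}{\hbar}\Sigma_\sigma \ge i\Omega_{m+n}$ for the physical core state together with the CPTP condition $\tfrac{2}{\hbar}Y + i X\Omega_m X^{\mathrm{T}} \ge i\Omega_m$ for the channel, the relation $\rho = (\Phi\otimes\mathcal I)(\sigma)$ yields a covariance inequality with $X\Omega_m X^{\mathrm{T}}$ appearing in the top-left block and $\Omega_n$ in the bottom-right block. Where the single-mode proof linearized the only quadratic term via $X\Omega_1 X^{\mathrm{T}} = \det(X)\,\Omega_1$ and introduced $Z = Y + i\tfrac{\hbar}{2}\det(X)\,\Omega_1$, the multimode generalization keeps the substitution $Z := Y + i\tfrac{\hbar}{2}\,X\Omega_m X^{\mathrm{T}}$. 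Then the CPTP constraint becomes $Z \ge i\tfrac{\hbar}{2}\Omega_m$, the covariance constraint depends on $(X,Y)$ only through $Z$ and $\Omega_n$, and the objective satisfies $\vec v^{\mathrm{T}} Y \vec v = \vec v^{\mathrm{T}}\Re(Z)\vec v = \vec v^{\mathrm{T}} Z \vec v$ since the antisymmetric part $\Im(Z)$ is annihilated by the real vector $\vec v$. Modulo the sign conventions on the $\Omega_n$ block and the normalization $\tfrac{1}{\pi\hbar^2}$ already fixed in \cref{app:pf-of-sdp}, the problem collapses to the stated SDP.

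The main obstacle, and the only genuinely new step relative to \cref{prop:sdp}, is justifying that letting $Z$ range freely over Hermitian $2m\times 2m$ matrices is an \emph{exact} reformulation rather than a relaxation. For $m=1$ this was automatic because $\det(X)$ sweeps all of $\mathbb R$, so $X\Omega_1 X^{\mathrm{T}}$ sweeps every $2\times 2$ antisymmetric matrix. For general $m$ the determinant trick fails, and I would instead prove the lemma that $X \mapsto X\Omega_m X^{\mathrm{T}}$ is surjective onto the real antisymmetric $2m\times 2m$ matrices: given any real antisymmetric $B$, its Youla (real normal) form provides an orthogonal $O$ with $B = O\big(\bigoplus_{i=1}^m \lambda_i\Omega_1\big)O^{\mathrm{T}}$, and within each $2\times 2$ block one realizes $\lambda_i\Omega_1 = X_i\Omega_1 X_i^{\mathrm{T}}$ with $\det(X_i)=\lambda_i$ (using a reflection to reach negative $\lambda_i$), so assembling the blocks gives a real $X$ with $X\Omega_m X^{\mathrm{T}}=B$. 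Consequently $\Im(Z) = \tfrac{\hbar}{2}X\Omega_m X^{\mathrm{T}}$ can match the antisymmetric part of any Hermitian $Z$ while $\Re(Z)=Y$ is an arbitrary real symmetric matrix, so the decomposition $Z = Y + i\tfrac{\hbar}{2}X\Omega_m X^{\mathrm{T}}$ is surjective and the SDP over $Z$ has the same optimal value as the optimization over factorable channels. The remaining bookkeeping is identical to the single-mode derivation.
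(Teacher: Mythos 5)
Your proposal is correct and follows essentially the same route as the paper's proof: the same displacement bound for factorable channels, the same covariance constraints, the same substitution $Z = Y + i\tfrac{\hbar}{2}\,X\Omega_m X^{\mathrm{T}}$, and the same collapse to the stated SDP. The only difference is that where the paper invokes a cited lemma (Greub; Haber) for the surjectivity of $X \mapsto X\Omega_m X^{\mathrm{T}}$ onto real antisymmetric $2m\times 2m$ matrices, you prove it directly via the real normal (Youla) form, which is a valid argument and makes the reformulation's exactness self-contained.
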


\begin{proof}
To prove this, we recall that if $\rho = (\Phi \otimes \mathcal I)[\sigma]$ for some state $\sigma$ and a Gaussian channel $\Phi$ with $(X,Y)$ parametrization acting on the first $m$-modes, then
\begin{align}
|\tr(\tilde\rho D_{\vec v})| \leq \exp(-\frac{1}{2\hbar^2} \vec v^{\mathrm{T}} Y \vec v),
\end{align}
where $\vec v = (x_1,p_1,\cdots, x_m, p_m)\in\mathbb R^{2m}$, and $D_{\vec v} = \bigotimes_{j=1}^m \exp(i (p_j Q_j - x_j P_j))$ is the multi-mode displacement by $\vec v$ in the phase-space. Therefore, we maximize $\vec v^{\mathrm{T}} Y \vec v$ where $Y$ corresponds to the $Y$ matrix of a Gaussian channel that can be factored out from the state. 
Following the steps similar to \protect\cref{app:pf-of-sdp}, we get the following optimization problem
\begin{align}\label{eq:opt-many-mode}
\begin{split}
&\text{maximize} \quad \vec v^{\mathrm{T}} Y \vec v\\
&\text{s.t.} \quad i X \Omega_m X^{\mathrm{T}} + \frac{2}{\hbar}Y \geq i \Omega_m,\\
& \quad \Sigma \geq i\frac\hbar2  \begin{bmatrix}
X & 0\\
0 & \mathbb 1
\end{bmatrix}
\Omega_{m+n}
\begin{bmatrix}
X^{\mathrm{T}} & 0\\
0 & \mathbb 1
\end{bmatrix}
+
\begin{bmatrix}
Y & 0\\
0 & 0
\end{bmatrix},\\
&\quad Y\in\mathrm{Sym}(\mathbb R^{2m\times 2m}), X\in\mathbb R^{2m\times 2m}.
\end{split}
\end{align}
Note that $X\Omega_m X^{\mathrm{T}}$ is an anti-symmetric matrix. Indeed, we can use the following lemma, to replace $X\Omega_m X^{\mathrm{T}}$ with any generic anti-symmetric matrix, say $A$.

\begin{lemma}[\cite{greub2012linear, haber_pfaffian}]
For any $2m\times 2m$ real anti-symmetric matrix, say $A$, there exists a real matrix $X\in\mathbb R^{2m \times 2m}$ such that
\begin{align}
A = X \Omega_m X^{\mathrm{T}}.
\end{align}
\end{lemma}

As a result, we rewrite our optimization problem \eqref{eq:opt-many-mode} as
\begin{align}
\begin{split}
&\text{maximize} \quad \vec v^{\mathrm{T}} Y \vec v\\
&\text{s.t.} \quad iA+ \frac{2}{\hbar}Y \geq i \Omega_m,\\
& \quad \Sigma \geq 
\begin{bmatrix}
Y+i\frac\hbar2A & 0\\
0 & i\frac{\hbar}{2}\Omega_{n}
\end{bmatrix}
\\
&\quad Y\in\mathrm{Sym}(\mathbb R^{m\times m}), A\in\mathrm{Asym}(\mathbb R^{2m\times 2m}),
\end{split}
\end{align}
where $\mathrm{Asym}(\mathbb R^{2m\times 2m})$ represents the set of anti-symmetric real matrices of size $2m\times 2m$. Letting $Z:=Y + i\frac{\hbar}{2}A$, and noting conditions on $Y$ and $A$ imply that $Z$ is Hermitian, we obtain the formulation in the statement of the proposition.
\end{proof}

\section{Bargmann parametrization of common Gaussian elements}\label{sec:triples}
In this section we summarize the Abc parametrization of common quantum optical states and transformations.
When used in conjunction with the inner product formulas, introduced in \protect\cref{app:inner-product-in-bargmann}, they can describe a wide variety of circuits and Gaussian operations. At any point one can use the Abc triple of a Gaussian object to parametrize the recurrence relation in Eq.~\eqref{eq:recrel} and compute the Fock basis representation.

% ~~~~~~~~~~~
% Pure States
% ~~~~~~~~~~~
\subsection{States}
In \protect\cref{tab:states}, we have listed the Bargmann parameterization of a few common states in the literature of quantum optics. We adopt the notation
\begin{align}
X = \begin{bmatrix}
0 & 1\\
1 & 0
\end{bmatrix},
\quad 0_{2} = 
\begin{bmatrix}
0\\
0
\end{bmatrix}.
\end{align}

Moreover, the thermal state is defined as
\begin{align}
    \rho_{\mathrm{th}}(\bar{n}) = (1-e^{-\beta}) \cdot e^{-\beta \hat{N}},
\end{align}
with
\begin{align}
\bar{n} = \frac{e^{-\beta}}{1-e^{-\beta}},
\end{align}
being the average particle number of the thermal state.

\begin{table}[h!]
\centering
\small
\renewcommand{\arraystretch}{1.4}
\begin{tabular}{l l c c l}
\toprule
\textbf{Name} & \textbf{Description} & \textbf{$A$} & \textbf{$b$} & \textbf{$c$} \\
\midrule
{\small Bargmann Eigenstate} & $e^{\frac{1}{2}|\alpha|^2} \ket{\alpha}$ 
&  $0$ 
& $\alpha$ 
& $1$ \\

{\small Coherent State} & $\ket{\alpha}$ 
& $0$ 
& $\alpha$ 
& $e^{-\frac{1}{2}|\alpha|^2}$ \\

{\small Squeezed Vacuum State} & $S(r, \phi)\ket{0}$ 
& $-e^{i\phi} \tanh r$ 
& $0$ 
& {\small $\sqrt{\mathrm{sech}(r)}$} \\

{\small Displaced Squeezed State} & $D(\alpha)S(r, \phi)\ket{0}$ 
& $ -e^{i\phi} \tanh r $ 
& $\alpha + \alpha^* e^{i\phi} \tanh r$ 
& 
$\frac{e^{-\frac{1}{2}|\alpha|^2 - \frac{1}{2} {\alpha^*}^2 e^{i\phi} \tanh r }}{\sqrt{\cosh r}} $%
\\
{\small Two-mode Squeezed Vacuum State} & $S_2(r,\phi)\ket{00}$ & $e^{i\phi}\tanh{r} X$ & $0_{2}$ & {\small$ \mathrm{sech}(r)$}\\
{\small Quadrature Eigenstate} & $\ket{x}_\phi$ & $-e^{2i\phi}$ & $\sqrt{\frac{2}{\hbar}} xe^{i\phi}$ & $\frac{e^{-x^2/(2\hbar)}}{(\pi\hbar)^{1/4}}$ \\
{\small Thermal State} & $\rho_{\mathrm{th}}(\bar{n})$ & $\frac{\bar{n}}{\bar{n}+1}X$ & $0_{2}$ & $(\bar{n}+1)^{-1}$\\
\bottomrule
\end{tabular}
\caption{Quadratic-exponential form parameters $(A, b, c)$ for various quantum states in the Bargmann representation.}
\label{tab:states}
\end{table}

% ~~~~~~~~~~~~~~~~~~~~~~~~
% Unitary transformations
% ~~~~~~~~~~~~~~~~~~~~~~~~
\subsection{Transformations}
In this section, we list the Bargmann parametrization of some unitaries. All the triples in this section are in output-input order (which for unitaries coincides with the type-wise order). \protect\cref{tab:unitaries} shows the parametrization of some of the single-mode unitary gates. Note that we have employed the convention
\begin{align}
S(\xi) = \exp\left(\frac12(\xi^\ast a^2 - \xi a^\dagger {}^2)\right),
\end{align}
for the squeezing gate, where $\xi = r e^{i\theta}$ represents the complex squeezing parameter.

\begin{table}[h!]
\centering
\small
\renewcommand{\arraystretch}{1.5}
\begin{tabular}{l l c c c}
\toprule
\textbf{Name} & \textbf{Description} & \textbf{$A$} & \textbf{$b$} & \textbf{$c$} \\
\midrule
Identity 
& {\large$\mathbb{1}$} 
& $X$ 
& $0_{2}$ 
& $1$ \\

Rotation 
& {\large$e^{i\theta \hat{N}}$} 
& $e^{i\theta} X$ 
& $0_{2}$ 
& $1$ \\

Displacement 
& {\large$e^{\alpha a^\dagger - \alpha^* a}$ }
& $X$ 
& $\begin{bmatrix} \alpha \\ -\alpha^* \end{bmatrix}$ 
& $e^{-\frac{1}{2}|\alpha|^2}$ \\

Squeezing 
& {\large$e^{\frac{1}{2}(\xi^* a^2 - \xi a^{\dagger 2})}$} 
& $\begin{bmatrix} -e^{i\phi} \tanh r & \text{sech}\,r \\ \text{sech}\,r & e^{-i\phi} \tanh r \end{bmatrix}$ 
& $0_{2}$ 
& $\frac{1}{\sqrt{\cosh r}}$ \\
\bottomrule
\end{tabular}
\caption{Quadratic-exponential form parameters $(A, b, c)$ for various single-mode Gaussian unitaries.}
\label{tab:unitaries}
\end{table}

Below, we list the parameters of some multi-mode Gaussian unitaries and some single-mode channels:
\begin{itemize}
\item For a beamsplitter gate defined as
\begin{align}
B(\theta,\phi) = e^{\theta(e^{i\phi} a_1a_2^\dagger - e^{-i\phi} a_1^\dagger a_2)},
\end{align}
we have the following Abc parametrization
\begin{align}
A &= \begin{bmatrix}
0 & 0 & \cos\theta & -e^{-i\phi}\sin\theta \\
0 & 0 & e^{i\phi}\sin\theta & \cos\theta \\
\cos\theta & e^{i\phi}\sin\theta & 0 & 0 \\
-e^{-i\phi}\sin\theta & \cos\theta & 0 & 0
\end{bmatrix},
\quad
b = 0_{4},
\quad
c = 1.
\end{align}
\item For an $N$-mode interferometer defined by $U\in\mathrm{SU}(N)$ determining its action on the single-photon subspace, we have the following Abc parametrization
\begin{align}
A &= \begin{bmatrix} 0_{N\times N} & U \\ U^{\mathrm{T}} & 0_{N\times N} \end{bmatrix},
\quad
b = 0_{2N},
\quad
c = 1.
\end{align}
\item For an $N$-mode real interferometer (which does not mix position and momentum coordinates in phase space), defined by $V\in\mathrm{SO}(N)$ determining its action on the single-photon subspace, we have the following Abc parametrization
\begin{align}
A &= \begin{bmatrix} 0_{N\times N} & V \\ V^{\mathrm{T}} & 0_{N\times N} \end{bmatrix},
\quad
b = 0_{2N},
\quad
c = 1.
\end{align}
\item For a two-mode squeezing gate, defined as
\begin{align}
S_2(\xi) = \exp\left(\frac12(\xi^\ast a_1a_2-\xi a_1^\dagger a_2^\dagger) \right),
\end{align}
with $\xi = re^{i\phi}$, we have that
\begin{align}
A &= \begin{bmatrix}
0 & e^{i\phi}\tanh r & \text{sech } r & 0 \\
e^{i\phi}\tanh r & 0 & 0 & \text{sech } r \\
\text{sech } r & 0 & 0 & e^{-i\phi}\tanh r \\
0 & \text{sech } r & e^{-i\phi}\tanh r & 0
\end{bmatrix}, \quad
b = 0_{4}, \quad
c = \frac{1}{\cosh r}.
\end{align}
\item For a photon loss channel with transmissivity $\eta$, we have
\begin{align}
A &= \begin{bmatrix}
0 & 0 & \sqrt{\eta} & 0 \\
0 & 0 & 0 & \sqrt{\eta} \\
\sqrt{\eta} & 0 & 0 & 1-\eta \\
0 & \sqrt{\eta} & 1-\eta & 0
\end{bmatrix}, \quad
b = 0_{4}, \quad
c = 1.
\end{align}
We note that one can define ``Gaussian core channels'' as follows: a Gaussian channel $\Phi$ is a \textit{Gaussian core channel}, if it maps any input state with a finite Fock cutoff to an output state with a finite Fock cutoff. Concretely, if $\Phi[\ket{n}\bra{m}]$ has a finite Fock cutoff for any $n,m\in\mathbb N$. The loss channel is then a natural example of a core Gaussian channel.

\item For an amplification channel with gain $g$, we have
\begin{align}
A = \begin{bmatrix}
0 & 1-1/g & 1/\sqrt{g} & 0 \\
1-1/g & 0 & 0 & 1/\sqrt{g}\\
1/\sqrt{g} & 0 & 0 & 0 \\
0 & 1/\sqrt{g} & 0 & 0
\end{bmatrix},\quad
b = 0_{4}, \quad
c = 1/g.
\end{align}
\item For the Fock damping operator $e^{-\beta \hat N}$, we have
\begin{align}\label{eq:fd}
A = \begin{bmatrix} 0 & e^{-\beta} \\ e^{-\beta} & 0 \end{bmatrix}, \quad
b = 0_{2}, \quad
c = 1.
\end{align}
\item We highlight that one can always write a set of Gaussian Kraus operators for any Gaussian channel. For instance, we have that the action of the loss channel with transmissivity $\eta$, denoted by $\mathcal L_\eta$, can be written as
\begin{align}
\mathcal L_\eta[\bullet] = \int_{z\in\mathbb C} K(z) \, \bullet \, K(z)^\dagger\, \mathrm d\mu(z),
\end{align}
where $K(z)$ is an operator. Note that $K$ can be parametrized in the Bargmann space as 
\begin{align}
F_K(z,w,v) = \bra{w^\ast} K(z)\ket{v}\, e^{\frac12(|w|^2+|z|^2+|v|^2)}. 
\end{align}
Since we know that a loss channel is equivalent to inputting an ancilla vacuum to a beamsplitter and tracing out the ancilla, we can readily obtain the Abc parametrization of $F_K$ by removing the third row and column (i.e.~input on first mode) of the beamsplitter with $\phi=0$ and $\theta = \arcsin(\sqrt{1-\eta})$ (see \protect\cref{app:inner-product-in-bargmann}), which yields
\begin{align}\label{eq:loss-kraus}
A = \begin{bmatrix}
0 & 0 & -\sqrt{1-\eta} \\
0 & 0 & \sqrt{\eta} \\
-\sqrt{1-\eta} & \sqrt{\eta} & 0
\end{bmatrix}, \quad
b = 0_{3}, \quad
c = 1.
\end{align}
This parametrizes a continuous Kraus operator with variable order $(z, w, v)$. Lastly, note how it is straightforward to remove the first row and column (i.e.~output on first mode) and show that the remaining $2\times 2$ bottom-right block is the same as \eqref{eq:fd}, i.e.
\begin{align}
K(0) =  \sqrt{\eta}^{\hat{N}} = e^{-\beta \hat{N}},
\end{align}
as the Fock damping operator can be written as a beamsplitter with a vacuum input and a vacuum-postselected output.
\end{itemize}

% ~~~~~~~~~~~~~~~~~~~~~~~~~~~~~~
% Maps between representations
% ~~~~~~~~~~~~~~~~~~~~~~~~~~~~~~

\subsection{Maps between representations}

For convenience, we summarize the maps between common continuous-variable representations and the Bargmann representation. These maps are explained in detail in \protect\cref{app:representations}. The following representations are written in the output-input ordering.

\begin{itemize}
\item The mapping kernel between Bargmann and quadrature representations of a pure state is parametrized by the following triple:
\begin{align}
A_\phi=\begin{bmatrix}
-\frac{\mathbb{1}}{\hbar} & e^{-i\phi}\sqrt{\frac{2}{\hbar}}\mathbb{1}\\
e^{-i\phi}\sqrt{\frac{2}{\hbar}}\mathbb{1} & -e^{-2i\phi}\mathbb{1}
\end{bmatrix},\quad
b_\phi = 0_{2},\quad
c_\phi = \frac{1}{(\pi\hbar)^{n/4}}.
\end{align}
\item The mapping kernel between Bargmann and $s$-parametrized \emph{phase space} functions (Stratonovich-Weyl kernel) is parametrized by the following triple:
\begin{align}
A_{\Delta_{s}} = \frac{2}{s-1}\begin{bmatrix}
    X & -\mathbb 1\\
    -\mathbb 1 & \frac{s+1}{2}X
\end{bmatrix},\quad
b_{\Delta_{s}}=0_{4},\quad
c_{\Delta_{s}}=\frac{2}{\pi^n\abs{s-1}^n}.
\end{align}
\item The mapping kernel between Bargmann and $s$-parametrized \emph{characteristic} functions (Fourier transform of Stratonovich-Weyl kernel) is parametrized by the following triple:
\begin{align}
A_{T_{s}} = \begin{bmatrix}
\frac{s-1}{2}X&\Omega^T\\
\Omega&X
\end{bmatrix},\quad
b_{T_{s}} = 0_{4},\quad
c_{T_{s}} = 1.
\end{align}
\end{itemize}

\end{document}